\newtheorem{mydef}{Definition}[section]
\newtheorem{mytheorem}{Theorem}[section]
\newtheorem{mycor}{Corollary}[section]
\newtheorem{myconj}{Conjecture}[section]
\newcommand{\denop}{\mathcal{D}}
\newcommand{\hs}{\mathcal{H}}
\newcommand{\linop}{\mathcal{L}}
\DeclareMathOperator{\myTr}{Tr} 
\DeclareMathOperator{\pTrA}{Tr_A}
\DeclareMathOperator{\pTrB}{Tr_B}
\newcommand{\M}{\mathcal{M}}
\newcommand{\sumkl}{\sum_{k,l = 0}^{d-1}}
\newcommand{\Pkl}{P_{k,l}}
\newcommand{\Okl}{\Omega_{k,l}}
\newcommand{\E}{\mathcal{E}}
\newcommand{\Ker}{\mathcal{K}}
\newcommand{\ckl}{c_{k,l}}
\newcommand{\w}{\omega}
\newcommand{\id}{\mathbb{1}}
\newcommand{\diff}{\mathrm{d}}
\newcommand{\PPT}{\mathrm{PPT}}
\newcommand{\SEP}{\mathrm{SEP}}
\newcommand{\BE}{\mathrm{BE}}
\newcommand{\ENT}{\mathrm{ENT}}
\title{BIPARTITE BOUND ENTANGLEMENT}
\author[1]{Beatrix C. Hiesmayr}
\author[2]{Christopher Popp}
\author[3]{Tobias C. Sutter}
\affil[ ]{University of Vienna, Faculty of Physics, Währingerstrasse 17, 1090 Vienna.\vspace{3.5mm}}
\affil[1]{Beatrix.Hiesmayr@univie.ac.at}
\affil[2]{Christopher.Popp@univie.ac.at}
\affil[3]{Tobias.Christoph.Sutter@univie.ac.at}
\date{}
\begin{document}

\maketitle

\begin{abstract}
Bound entanglement is a special form of quantum entanglement that cannot be
used for distillation, i.e., the local transformation of copies of
arbitrarily entangled states into a smaller number of approximately
maximally entangled states. Implying an inherent irreversibility of
quantum resources, this phenomenon highlights the gaps in our current
theory of entanglement.
This review provides a comprehensive exploration of the key
findings on bipartite bound entanglement.
We focus on systems of
finite dimensions, an area of high relevance for many quantum
information processing tasks.
We elucidate the properties of bound entanglement and its
interconnections with various facets of quantum information theory and
quantum information processing.
The article illuminates areas where our
understanding of bound entangled states, particularly their detection
and characterization, is yet to be fully developed.
By highlighting the need for further research into this phenomenon and
underscoring relevant open questions, this article invites researchers
to unravel its relevance for our understanding of entanglement in
Nature and how this resource can most effectively be used for
applications in quantum technology.
\end{abstract}

\tableofcontents

\newpage

\section{Introduction}

According to Erwin Schrödinger \cite{schrodinger_discussion_1935}, entanglement is not only ``\textit{one} but rather \textit{the} characteristic trait of quantum mechanics, the one that enforces its entire departure from classical lines of thought.''
Almost 90 years after this statement was made, we still lack a full physical and mathematical understanding of this non-classical phenomenon.
Today, one of the main challenges of modern quantum information theory lies in determining whether a composite system is separable or entangled, a problem known as the \textit{separability problem}~\cite{horodecki_quantum_2009}.
The problem is solved only for qubit-qubit and qubit-qutrit systems since, for these dimensions, there exists a necessary and sufficient criterion to distinguish between separability and entanglement.
For general higher-dimensional bipartite systems, $d>2$, the separability problem is known to be NP-hard~ \cite{gurvits_classical_2003}, i.e., no polynomial-time algorithm is known that is capable of solving it deterministically.
A curious form of entanglement arises in those higher dimensions, so-called \textit{bound entanglement}.
It is extraordinary as it requires maximally entangled quantum states to generate it, yet this process is irreversible.
Hence, once a bound entangled state is produced, one can no longer extract the consumed resources in the form of pure entanglement.
This is in stark contrast to \textit{free entanglement} for which this extraction procedure, in the following named entanglement distillation, is partly possible.

Let us briefly mention how the idea of labeling this new type of entanglement with the word ``bound'' came about by quoting from Ref.~\cite{horodecki_mixed-state_1998}: 
\begin{quote}
\textit{As a matter of fact, we have revealed a kind of entanglement that cannot be used for sending reliably quantum information via teleportation. Using an analogy with thermodynamics, we can consider entanglement as a counterpart of energy, and sending of quantum information as a kind of ``informational work''. Consequently, we can consider ``\textbf{free entanglement}'' ($E_{free}$)  which can be distilled, and ``\textbf{bound entanglement}'' ($E_{bound}$). In particular, the free entanglement is naturally identified with distillable entanglement as the latter asks us how many qubits can we reliably teleport via the mixed state. This kind of entanglement can always be converted via distillation protocol to the ``active'' singlet form.\\
To complete the analogy, one could consider the asymptotic number of singlets which are needed to produce a given mixed state as ``internal entanglement'' $E_{int}$ (the counterpart of internal energy). Then the bound entanglement can be quantitatively defined by the following equation:
\begin{align*}
    E_{int}=E_{free}+E_{bound}\;.
\end{align*}}
\end{quote}
So the initial association was with thermodynamics (cf. also Ref.~\cite{brandao_entanglement_2008, horodecki_are_2002}) and especially the second law of thermodynamics, connecting bound entanglement to the arrow of time.
The latter connection was taken by an identification with the mathematical property PPT (positive under partial transposition, Def.~\ref{def:ppt-property}) of the discovered bound entangled state.
However, this argument has not proved successful since the (partial) transpose has no one-to-one correspondence to the time arrow of thermodynamics.
And yet, the PPT property is central to our understanding of bound entangled states because it remains the only sufficient criterion for undistillability known until today.

Since the discovery of an exemplary state by the Horodecki family at the end of the last millennium~\cite{horodecki_separability_1997, horodecki_mixed-state_1998}, many researchers have put effort into the study of bound entanglement.
However, a full characterization of the set of bound entangled states or merely their construction is still lacking.
On the other hand, it is shown that they can violate a Bell inequality, i.e., exhibit non-classical correlations.
Recent research suggests that, for specific dimensions, the volume of such bound entangled states in Hilbert space is of nonzero measure, and hence it cannot be neglected in general.
These results suggest many similarities, like non-locality, but also distinctive operational differences, like distillability, between bound and free entanglement.
Over the past thirty years, various quantum technologies have been devised that require entangled quantum states for their realization.
It turns out, however, that for several of these applications, bound entangled states do not provide an additional resource compared to separable ones.
However, some applications exist for which bound entanglement can improve the performance when added to available free entangled states.
Current research aims to determine how bound entangled states' non-classical features can be further employed in specialized quantum information applications.
All of this shows that a full understanding of the theory of entanglement is still not achieved, and with that, an answer to the question of how entanglement can be utilized as a resource for advanced quantum technologies such as, e.g., quantum computing, quantum machine learning, quantum cryptography, or quantum communication.

This review provides a comprehensive summary of the current status of knowledge of \textit{finite-dimensional} and \textit{bipartite} bound entanglement, which is of particular interest for a wide range of quantum technologies.
The bipartite scenario also offers the minimal system size for which bound entanglement appears.
The main focus is on contemporary research and the many open questions to be answered.
To this end, this article gives a solid foundation that can pave the way for future developments in the theory of bound entanglement.

The review article is designed so that it can be read in a different order than provided.
However, the order is chosen so readers without prior knowledge can follow the arguments.
In particular, Chapter~\ref{sec:basics} introduces the necessary mathematical concepts and notations to define bound entanglement clearly.
The notion of entanglement distillation, central to the definition of bound entanglement, is explained in detail. 
It further defines Bell-diagonal states, a subset of quantum states with high relevance in many entanglement distillation schemes and a comparatively high volume of bound entanglement.
Equipped with the basic mathematical tools, Chapter~\ref{sec:free_vs_bnd} introduces the PPT criterion and defines bound entanglement as entangled states that cannot be distilled.
We establish the crucial connection between the PPT property and bound entangled states by showing that all distillable states violate the PPT criterion.
Chapter~\ref{sec:detection of BE} reviews the main approaches to detect bound entangled states and differentiate them from free entangled and separable states.
Multiple methods are discussed, and it is emphasized that there is no strict hierarchy between them, i.e., no entanglement criterion is strictly stronger than any other in terms of bound entangled states it detects.
The logical next step following the detection of some bound entangled states is to investigate their characteristics, which is the focus of Chapter~\ref{charOfBE}.
Properties of the set of bound entangled states and individual bound entangled states are discussed.
Furthermore, the limitations and usefulness of bound entanglement for quantum information processing tasks are summarized.
We also introduce various entanglement measures and point out their hierarchical dependence and the resulting consequences.
In this regard, the difficulty of finding a good entanglement measure is connected to the existence of bound entangled states.
Concluding the characteristics of bound entanglement, we argue that bound entanglement can be Bell-nonlocal and is not Lorentz invariant.
In addition to the fundamental difficulties of detecting, characterizing, and classifying bound entanglement, Chapter~\ref{BEexperiment} deals with how this curious phenomenon was uncovered in experiments.
In particular, we present experimental realizations using entanglement witnesses based on SICs and MUBs, which are formally introduced in Sec.~\ref{POVMWitnesses}.
Even though we can detect and characterize bound entangled states to some degree, it is still not fully understood how to construct the set of bound entangled states.
The dual problem is finding a complete set of non-decomposable entanglement witnesses (cf. Sec.~\ref{sec:decomposible}).
Some approaches to constructing bound entangled states are presented in Chapter~\ref{sec:construction}, focusing on the analytic construction via unextendible product bases.
The article concludes in Chapter~\ref{sec:outlook} with a summary of fundamental questions that have been answered and a broad outlook on open questions.
This includes the problem of why Nature provides us with this special kind of entanglement and for what quantum technologies it can be effectively leveraged.
Also, the potential existence of NPT bound entangled states is discussed.
Solving these issues will be challenging as the detection of bound entanglement is intimately linked to the NP-hard separability problem.
At the same time, their characterization can be traced back to our incomplete understanding of entanglement in general.
Nonetheless, closing these open gaps in our knowledge will certainly lead to a more complete comprehension of one of our fundamental physical theories.

\section{Basics of Quantum States and Operations} \label{sec:basics}

This section contains the basic definitions of quantum states and how they can be manipulated by positive operator-valued measures and completely positive trace-preserving maps.
It thus lays the mathematical foundations needed for understanding bound entanglement.
Importantly, in Sec.~\ref{sec:ent_dist}, the concept of entanglement distillation is introduced, which is at the heart of the definition of bound entanglement.

\subsection{Definition of Quantum States}

The mathematical structure of quantum physics is determined by linear algebra on finite- or infinite-dimensional Hilbert spaces.
In the following, we restrict ourselves to Hilbert spaces $\hs$ of finite dimension $d>1$.
For clarity, we will sometimes choose $\hs=\mathbb{C}^d$ when $\dim(\hs)=d$.
We denote the set of linear operators mapping $\hs$ into itself by $\linop(\hs)$, in detail:

\begin{mydef}[Quantum State] \label{def:densitymat}
    A linear operator $\rho \in \linop(\hs)$ acting on a Hilbert space $\hs$ is called quantum state or density operator if it is positive semi-definite \emph{($\rho\geq 0$)} and has unit trace \emph{($\myTr(\rho)=1$)}.\\
    The set of all quantum states on $\hs$ is denoted $\denop(\hs) \subset \linop(\hs)$.
\end{mydef}
Note that the condition $\rho\geq 0$ implies hermiticity, i.e., $\rho=\rho^\dagger$, with $\rho^\dagger$ being the hermitian conjugate of $\rho$.
For general Hilbert space dimension $\dim(\hs)=d$, the quantum system is also called \emph{qudit}.
In the cases $d=2$, $d=3$, and $d=4$, they are known as a qubit, qutrit, and ququart, respectively.
A quantum state $\rho\in\denop(\hs)$ is called \emph{pure} if $\mathrm{Tr}(\rho^2)=1$, and \emph{mixed} if $\mathrm{Tr}(\rho^2)<1$.
Due to the spectral theorem (cf. Ref.~\cite{nielsen_quantum_2012}) we can write any $\rho\in\denop(\hs)$ in its eigenbasis as
\begin{align} \label{eq:rho_spectral_decomp}
    \rho = \sum_{i=0}^{d-1} \lambda_i\; |\psi_i\rangle \langle\psi_i| \;,
\end{align}
where $|\psi_i\rangle$ are eigenvectors of $\rho$ with respective eigenvalues $\lambda_i$, and the set $\{|\psi_i\rangle\} _{i=0}^{d-1}$ of eigenvectors forms an orthonormal basis of $\hs$.
Consequently, any pure state can be written as $\rho=|\psi\rangle\langle\psi|$, and thus directly associated with the corresponding Hilbert space element $|\psi\rangle \in \hs$.
We call $\rho$ a rank-$r$ state if it has $r$ nonzero eigenvalues.

\subsubsection{Bipartite Quantum States, Separability and Entanglement}

Bipartite quantum states are identified with linear operators acting on the tensor product $\hs_A\otimes\hs_B$ of two Hilbert spaces.
Such density matrices can be expanded in the computational product basis as
\begin{align}
    \rho_{AB} = \sum_{i,j,k,l = 0}^{d-1} \rho_{ik,jl} \; |i \rangle \langle j| \otimes | k \rangle \langle l| \;.
\end{align}
In this case, the \textit{partial trace} of only one subsystem is defined by linearity as
\begin{align}
    \pTrA (\rho_{AB}) := (\myTr\otimes \id_B)(\rho_{AB}) = \sum_{i,k,l = 0}^{d-1} \rho_{ik,il} \; | k \rangle \langle l| \;,
\end{align}
where $\id_B$ denotes the identity map on $\linop(\hs_B)$\footnote{
This means $\id_B$ should be seen as a linear map from $\linop(\hs_B) \rightarrow \linop(\hs_B)$.
Below, we also denote the identity matrix by $\id_B \in \linop(\hs_B)$.
The context unambiguously establishes the meaning for each appearance.}.
This operation results in a state on $\denop(\hs_B)$ called the \emph{reduced state} with respect to subsystem $B$.
The operation $\pTrB$ is defined analogously.

For bipartite systems, non-classical features can arise, such as quantum entanglement (an extensive review of this broad topic can be found in Ref.~\cite{horodecki_quantum_2009}).
We only reproduce the facts necessary for understanding bound entanglement, formally introduced in Sec.~\ref{sec:free_vs_bnd}.

\begin{mydef}[Separability and Entanglement]\label{def:sep_and_ent}
A bipartite quantum state $\rho_{AB}\in \denop(\hs_A\otimes\hs_B)$ is called separable if it can be written as a convex combination of product states, i.e., $\rho_{AB} = \sum_i \, p_i \, \rho_A^i \otimes \rho_B^i$ with probabilities $p_i \geq 0$, $\sum_i p_i =1$, $\rho_A^i\in \denop(\hs_A)$, and $\rho_B^i\in \denop(\hs_B)$.
If $\rho_{AB}$ is not separable, it is called inseparable or entangled.
We denote the set of all separable (entangled) states by \emph{SEP (ENT)}.
\end{mydef}

\begin{figure}
    \centering
    \includegraphics[width=0.5\linewidth]{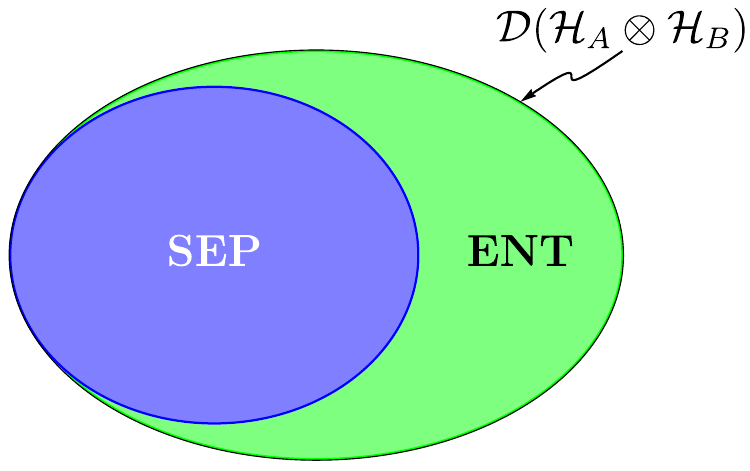}
    \caption{Partition of bipartite state space $\denop(\hs_A\otimes\hs_B)$ into the sets SEP (blue) and ENT (green).
    The diagram does not faithfully represent the relative volumes of SEP and ENT, which is known to depend on the dimensions of $\hs_A$ and $\hs_B$.}
    \label{fig:state_space_partition}
\end{figure}

Note that the representation of a separable state as a mixture of product states is generally not unique.
Furthermore, the above definition induces a partition of the set of density operators $\denop(\hs_A\otimes\hs_B)$.
This means it splits up the set into two proper disjoint subsets $\mathrm{SEP}$ and $\mathrm{ENT}$ satisfying $\mathrm{SEP}\neq\varnothing$, $\mathrm{ENT}\neq\varnothing$, $\mathrm{SEP}\cup \mathrm{ENT} \equiv \denop(\hs_A\otimes\hs_B)$, and $\mathrm{SEP}\cap \mathrm{ENT} \equiv \varnothing$.
Fig.~\ref{fig:state_space_partition} visualizes this partition.

\begin{mydef}[Maximally Entangled States, Locally Maximally Mixed States] \label{def:maximally_entangled}
    A pure bipartite state $|\psi\rangle\in \hs_A \otimes \hs_B$ with $\dim(\hs_A)=\dim(\hs_B) = d$ is said to be maximally entangled if the partial trace on either system yields the maximally mixed state, i.e., $\pTrA(|\psi\rangle\langle\psi|)= d^{-1} \, \id_B$ and $\pTrB(|\psi\rangle\langle\psi|)= d^{-1} \, \id_A$.
    Any bipartite quantum state with the latter property is called locally maximally mixed.
\end{mydef}
This definition is equivalent to $|\psi\rangle$ having full Schmidt rank with Schmidt coefficients $1/\sqrt{d}$.
One important maximally entangled state discussed further in Sec.~\ref{sec:bell_system} is given by
\begin{align}\label{eq:max_ent_state_omega_00}
    |\Omega_{0,0}\rangle = \frac{1}{\sqrt{d}}\sum_{i=0}^{d-1} |i\rangle \otimes | i\rangle \in \mathbb{C}^d \otimes \mathbb{C}^d \;.
\end{align}

Determining whether a given quantum state is separable or entangled is a natural problem of interest that can be formalized as follows:
\begin{mydef}[Separability Problem]\label{separabilityproblem}
    Given $\rho_{AB}\in \denop(\hs_A\otimes\hs_B)$, decide whether or not $\rho_{AB}\in\mathrm{SEP}$.
\end{mydef}
This is a well-posed question because SEP, together with ENT, forms a partition of the state space.
Generally, the separability problem is known to be NP-hard \cite{gurvits_classical_2003}.
This means that there is no known polynomial time algorithm that deterministically decides whether or not $\rho_{AB}\in\mathrm{SEP}$ for all $\rho_{AB}\in\denop(\hs_A\otimes\hs_B)$ and arbitrary Hilbert space dimensions $\dim(\hs_A)$ and $\dim(\hs_B)$ unless the complexity classes P and NP are equal (they are commonly assumed not to be \cite{arora_computational_2016}).

\subsection{Quantum Operations}

Any general operation that modifies a quantum state should naturally result in another quantum state.
This motivates the concepts of completely positive trace-preserving maps and positive operator-valued measures.

\subsubsection{Positive and Completely Positive Maps}\label{sec:PositiveCP}

The following definition captures the central object of interest in this section \cite{watrous_theory_2018}:

\begin{mydef}[Completely Positive (CP) Map] \label{def:P_and_CP}
    A linear map
\begin{align}
    \mathcal{E}:\linop(\hs_A) \rightarrow \linop(\hs_{A'})
\end{align}
is called positive if $\rho_A \geq 0$ implies $\mathcal{E}(\rho_A)\geq 0$ for any $\rho_A\in \linop(\hs_A)$.
Additionally, $\mathcal{E}$ is called completely positive~(CP) if $\mathcal{E}\otimes\mathrm{id}_B :\linop(\hs_A\otimes\hs_B) \rightarrow \linop(\hs_{A'}\otimes\hs_B)$ is a positive map for every Hilbert space $\hs_B$.
Here, the action of $\mathcal{E}\otimes\mathrm{id}_B$ on a composite system is defined by linearity.
If, for a positive map $\mathcal{E}$, there exists a Hilbert space $\hs_B$ such that $\mathcal{E}\otimes\mathrm{id}_B$ is not a positive map, $\mathcal{E}$ is called positive but not completely positive (PNCP).
\end{mydef}

From a physical point of view, applying a positive map to a quantum state ensures that the measurement probabilities remain non-negative.
Furthermore, complete positivity captures the intuition that the probabilities remain non-negative even if the map is only applied to part of a larger system.
It is clear that every completely positive map is also positive.

Any physically implementable map must be completely positive and trace-preserving (CPTP) to map density operators to density operators in accord with Def. \ref{def:densitymat}.
Such maps are also called \emph{quantum channel}.
To check whether a given linear map $\mathcal{E}$ is CP, it is sufficient to check the positive semi-definiteness of its Choi operator
\begin{align} \label{eq:choi_operator}
    J(\mathcal{E}):=
    \sum_{i,j=0}^{d_A-1} \mathcal{E}(|i\rangle\langle j |) \otimes |i\rangle\langle j | \; \in \linop(\hs_{A'}\otimes \hs_A) \;,
\end{align}
One can also show the reverse statement to find \cite{watrous_theory_2018}:
\begin{mytheorem}[Choi's Theorem] \label{thm:Choi_CP}
A linear map $\mathcal{E}$ is CP if and only if $J(\mathcal{E})\geq 0$.
\end{mytheorem}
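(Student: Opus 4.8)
The plan is to exploit the fact that the Choi operator is, up to a positive scalar, the image under $\mathcal{E} \otimes \mathrm{id}_A$ of a maximally entangled state. Writing $|\Omega\rangle = \sum_{i=0}^{d_A-1} |i\rangle \otimes |i\rangle \in \hs_A \otimes \hs_A$ for the unnormalized version of the state in Eq.~\eqref{eq:max_ent_state_omega_00}, a direct expansion gives $|\Omega\rangle\langle\Omega| = \sum_{i,j} |i\rangle\langle j| \otimes |i\rangle\langle j|$, so that $(\mathcal{E} \otimes \mathrm{id}_A)(|\Omega\rangle\langle\Omega|) = J(\mathcal{E})$ by comparison with Eq.~\eqref{eq:choi_operator}. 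Both implications can then be read off from this identity.

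For the forward direction I would argue as follows. If $\mathcal{E}$ is CP, then by Def.~\ref{def:P_and_CP} the map $\mathcal{E} \otimes \mathrm{id}_A$ is positive. Since $|\Omega\rangle\langle\Omega| \geq 0$ (being proportional to a rank-one projector onto a pure state), its image $J(\mathcal{E})$ under a positive map is again positive semi-definite, i.e.\ $J(\mathcal{E}) \geq 0$. This direction is essentially immediate once the identity above is in place.

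For the harder reverse direction, suppose $J(\mathcal{E}) \geq 0$. I would take a spectral decomposition $J(\mathcal{E}) = \sum_k |v_k\rangle\langle v_k|$ with $|v_k\rangle \in \hs_{A'} \otimes \hs_A$, expand each eigenvector in the reference basis as $|v_k\rangle = \sum_m |\eta_{k,m}\rangle \otimes |m\rangle$ with $|\eta_{k,m}\rangle \in \hs_{A'}$, and define Kraus operators $K_k := \sum_m |\eta_{k,m}\rangle\langle m|$, each mapping $\hs_A \to \hs_{A'}$. The central computation is the recovery (``slicing'') formula $\mathcal{E}(|i\rangle\langle j|) = (\id_{A'} \otimes \langle i|)\, J(\mathcal{E})\, (\id_{A'} \otimes |j\rangle)$, which follows directly from Eq.~\eqref{eq:choi_operator} and orthonormality of the basis. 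Substituting the spectral decomposition and using $(\id_{A'} \otimes \langle i|)|v_k\rangle = |\eta_{k,i}\rangle = K_k|i\rangle$ yields $\mathcal{E}(|i\rangle\langle j|) = \sum_k K_k |i\rangle\langle j| K_k^\dagger$, and hence, by linearity, $\mathcal{E}(X) = \sum_k K_k X K_k^\dagger$ for every $X \in \linop(\hs_A)$.

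It then remains to verify that any map in this Kraus form is CP, the routine closing step: for an arbitrary $\hs_B$ and $Y \geq 0$ on $\hs_A \otimes \hs_B$ one has $(\mathcal{E} \otimes \mathrm{id}_B)(Y) = \sum_k (K_k \otimes \id_B)\, Y\, (K_k \otimes \id_B)^\dagger$, and each summand is positive because conjugation $Y \mapsto A Y A^\dagger$ preserves positive semi-definiteness, while the sum of positive operators is positive. I expect the main obstacle to lie entirely in the reverse direction, specifically in justifying that testing positivity on the single $d_A$-dimensional reference system entangled with $\hs_A$ in $|\Omega\rangle$ already certifies positivity of $\mathcal{E} \otimes \mathrm{id}_B$ for \emph{every} $\hs_B$. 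This is precisely what the Kraus representation accomplishes: once $\mathcal{E}$ is written as $\sum_k K_k \,\cdot\, K_k^\dagger$, its complete positivity becomes manifestly dimension-independent.
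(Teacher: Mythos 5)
Your proposal is correct and complete: the identification $J(\mathcal{E})=(\mathcal{E}\otimes\mathrm{id}_A)(|\Omega\rangle\langle\Omega|)$, the slicing formula, and the extraction of Kraus operators from the spectral decomposition of a positive semi-definite $J(\mathcal{E})$ together give a valid proof of both directions. The paper itself offers no proof of Thm.~\ref{thm:Choi_CP} and simply cites Ref.~\cite{watrous_theory_2018}; your argument is precisely the standard one found there, so there is nothing substantive to compare beyond noting the agreement.
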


Although they cannot be realized experimentally, \textit{positive but not completely positive} (PNCP) maps are valuable in theoretical considerations, in particular with regards to the separability problem (see Sec.~\ref{sec:free_vs_bnd}).

\subsubsection{Positive Operator Valued Measures}\label{sec:POVM}

Positive operator-valued measures (POVMs) generalize projection-valued measures (also known as projective or von Neumann measurements).
They describe generalized measurements on quantum states and are defined as follows \cite{bengtsson_geometry_2006}:
\begin{mydef}[Positive Operator Valued Measure]
     A positive operator-valued measure (POVM) on a Hilbert space $\hs$
     is a set of $n$ positive semi-definite operators $E_m\in \linop(\hs)$ that sum up to the identity on $L(\hs)$,
     \begin{equation} \label{eq:def_POVM}
     \sum_{m=1}^{n} E_m\;=\;\id_{\linop(\hs)}\;.
     \end{equation}
\end{mydef}
The individual $E_m$ are called \emph{POVM elements}.
Eq.~\eqref{eq:def_POVM} establishes that a POVM on $\hs$ is a partition of the identity on $\linop(\hs)$.
The probability of obtaining outcome $m$ after conducting a POVM measurement on a quantum state $\rho\in\denop(\hs)$ is given by
\begin{align}
    p_m = \myTr(E_m \,\rho) \;.
\end{align}
That the probabilities sum up to one, i.e., $\sum_{m=1}^n p_m = 1$, is ensured by \eqref{eq:def_POVM}.

Projective measurements are special cases of POVMs for which $E_m ^2 = E_m$ for all $m=1,...,n$.
In this case, each POVM element $E_m$ is a rank-1 projector and can be written as $E_m = |\pi_m\rangle\langle\pi_m|$ for some normalized $|\pi_m\rangle \in \hs$.
The completeness relation \eqref{eq:def_POVM} enforces that the set $\{|\pi_m\rangle\}_m$ constitutes an orthonormal basis of $\hs$, i.e., we need to have exactly $n=\dim(\hs)$ POVM elements.

Alternatively, a POVM can be defined as a set of $n$ linear operators $M_m$ acting on $\linop(\hs)$ and satisfying $\sum_{m=1}^{n} M_m^\dagger M_m = \id$.
These two statements are equivalent as any positive semi-definite operator $E_m$ can be written as
\begin{align}
    E_m = M_m^\dagger M_m \;,
\end{align}
e.g., by writing $M_m = \sqrt{E_m}$ (which is unique up to an isometry).
Here, the probability for outcome $m$ is given by
\begin{align}
    p_m = \myTr(M_m^\dagger M_m\, \rho) = \myTr(M_m\, \rho\, M_m^\dagger) \;.
\end{align}
The advantage of this alternative definition is that the post-measurement state is given by
\begin{align}
    \rho_m = \frac{1}{p_m} M_m \,\rho\, M_m^\dagger \;.
\end{align}

POVMs are crucial for theoretical applications such as the detection of bound entanglement via mutually unbiased bases (MUBs) or symmetric informationally complete POVMs (SICs) (see Sec.~\ref{POVMWitnesses}).

\subsubsection{Entanglement Distillation} \label{sec:ent_dist}
The defining feature of bound entanglement is given by the notion of \textit{entanglement distillation}\footnote{Entanglement distillation is also sometimes called entanglement purification \cite{bennett_purification_1996, alber_efficient_2001, deutsch_quantum_1996}. This should not be confused with the purification of a mixed quantum state $\rho\in \denop(\hs_A)$ by a pure state $|\psi\rangle\in \hs_A\otimes\hs_B$ such that $\rho=\pTrB(|\psi\rangle\langle\psi|)$. We use the term entanglement distillation, or simply distillation, throughout.} \cite{bennett_concentrating_1996, bennett_purification_1996}.
Heuristically speaking, distillation describes the process of two parties, $A$ and $B$, converting multiple ``weakly'' entangled shared quantum states into less but ``more strongly'' entangled states using only local quantum operations and classical communication (LOCC).

More precisely, $A$ and $B$ share $n$ identical copies of some (generally mixed) bipartite resource state $\rho$,
\begin{align}
    \rho^{\otimes n} = \rho_{A_1 B_1}\otimes \rho_{A_2 B_2} \otimes ... \otimes \rho_{A_n B_n} \in \denop(\bigotimes_{i=1}^n \hs_{A_i}\otimes \hs_{B_i})\;,
\end{align}
where $A$ ($B$) is in possession of the subsystems indexed by $A_i$ ($B_i$), $i\in\{ 1,2,...,n\}$.
Here, we only consider $\dim (\hs_{A_i}) = \dim (\hs_{B_i}) = d$ for all $i$.
They are allowed to perform \emph{local} quantum operations, i.e., $A$ can implement any CPTP map $\Phi: \linop(\bigotimes_i \hs_{A_i}) \rightarrow \linop(\bigotimes_i \hs_{A_i})$ or conduct any POVM measurement on $\bigotimes_i \hs_{A_i}$, and similar for $B$.
Furthermore, they can communicate \emph{classically}, e.g., send an obtained measurement outcome to the other party, who may adjust their subsequent quantum operations accordingly.
A general distillation protocol can consist of any number of consecutive LOCC operations.
\textit{Not} allowed are joint quantum operations in the form of CPTP maps or POVMs involving $A$'s and $B$'s qudits together or any exchange of qudits. These operations are considered global and require resources unavailable in a LOCC setting.

Given $n$ copies of a resource state $\rho$, the challenge in any practical approach is to develop an optimal strategy for CPTP maps and POVM measurements.
The goal of distillation is to produce multiple copies of a maximally entangled \emph{target state}, e.g., the maximally entangled qubit state $|\Phi^+\rangle = (1/\sqrt{2})\sum_{i=0}^{1} |i\rangle\otimes |i\rangle \in \mathbb{C}^2 \otimes \mathbb{C}^2$. 
We define entanglement distillation as any LOCC protocol that transforms multiple copies of a resource state into (generally fewer) copies of this target state:
\begin{mydef}[Entanglement Distillation, Distillability, Distillation Rate]\label{def:distillation}
    Entanglement distillation describes any LOCC protocol that transforms $n$ copies of a resource state $\rho \in \denop(\hs_A \otimes \hs_B)$ to $m$ copies of the maximally entangled qubit state $|\Phi^+\rangle\langle\Phi^+|\in \denop(\hs_A \otimes \hs_B)$ with nonzero probability:
    \begin{align}
        \rho^{\otimes n} \underset{LOCC}{\longrightarrow} |\Phi^+\rangle\langle\Phi^+|^{\otimes{m}} \;.
    \end{align}
    A state $\rho$ is distillable if in the asymptotic limit $n\rightarrow\infty$, a distillation rate $r:= \frac{m}{n}$ larger than zero can be achieved via entanglement distillation. If no such operations exist, $\rho$ is called undistillable.
\end{mydef}

Other protocols producing different target states are possible and interesting for practical purposes and theoretical analyses. However, this is not a limitation of the notion of distillability. Analyzing distillability with respect to a different target state is fully equivalent to the given definition if it 
is itself distillable. In this case, multiple copies of this target state can be transformed via LOCC to some number of maximally entangled qubit states.  In this case, the distillation rate generally changes. Still, any state for which a positive distillation rate with respect to such a target state can be achieved (in the asymptotic limit) will also be distillable according to Def.~\ref{def:distillation}. 

In particular, the $d$-dimensional maximally entangled state $|\psi\rangle=|\Omega_{0,0}\rangle = (1/\sqrt{d})\sum_{i=0}^{d-1} |i\rangle\otimes |i\rangle \in \mathbb{C}^d \otimes \mathbb{C}^d$ (cf. Ref.~\cite{horodecki_reduction_1999, alber_efficient_2001, wilde_quantum_2017}) provides straightforward applications in quantum information processing tasks in $d$-dimensional state space (see, e.g., Chap. 6 in \cite{wilde_quantum_2017}) and therefore is a natural target state. Indeed, this target state can be equivalently used as the target state for distillation.
This can be shown by using the concept of majorization on the ordered vectors consisting of the squared Schmidt coefficients of the two states (the interested reader may consult Sec.~12.5 of \cite{nielsen_quantum_2012} for details). Using this method, one finds that $|\Phi^+\rangle$ can be obtained by LOCC from a single copy of $|\Omega_{0,0}\rangle$.
Hence, $|\Omega_{0,0}\rangle$ is equally suited as a target state when we are only interested in the (asymptotic) distillability of the resource state.
The only quantitative difference when considering $|\Omega_{0,0}\rangle$ instead of $|\Phi^+\rangle$ is the change in the distillation rate $m/n \rightarrow m/(n \lceil log_2(d)\rceil)$.
To see this, note that one can distill $|\Omega_{0,0}\rangle$ using $\lceil log_2(d)\rceil$ copies of $|\Phi^+\rangle$ (here $\lceil\cdot\rceil$ denotes the ceiling function).

It is clear from the definition that only entangled states are potentially distillable because LOCC operations cannot convert separable states to entangled ones \cite{horodecki_reduction_1999}.
In light of Sec.~\ref{sec:free_vs_bnd}, this can also be considered a corollary of Thm.~\ref{thm:PPT-crit} together with Thm.~\ref{thm:dist_implies_NPT}.
A formal statement about the quantification of entanglement of a quantum state via distillation can be found in Def.~\ref{def:distillable_entanglement}, introducing distillable entanglement as an entanglement measure.

Depending on whether the resource state $\rho$ is pure or mixed, there are considerable differences in the methods used to determine its distillability.
For pure states, the Schmidt coefficients completely characterize the state's distillation properties, both in the finite and in the asymptotic setting \cite{nielsen_quantum_2012}.
However, for general mixed states, the issue is more complicated and an active area of research.
It is shown in Ref.~\cite{horodecki_reduction_1999} that if the overlap of the resource state $\rho$ with the maximally entangled state $|\Omega_{0,0}\rangle$ is greater than $1/d$, the state is distillable.
In the case of mixed qubit resource states ($d=2$), the following theorem holds \cite{horodecki_inseparable_1997}:
\begin{mytheorem}[Distillability] \label{thm:qubit_distillation}
    A bipartite qubit state $\rho\in\denop(\mathbb{C}^2\otimes \mathbb{C}^2)$ is distillable if and only if $\rho\in \mathrm{ENT}$.
\end{mytheorem}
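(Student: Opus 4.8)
The plan is to prove the two implications of the ``if and only if'' separately, with the reverse direction being the substantive one.

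The forward direction, distillable $\Rightarrow$ entangled, is immediate from structure already in place. The set $\mathrm{SEP}$ is closed under LOCC, so if $\rho\in\mathrm{SEP}$ then $\rho^{\otimes n}\in\mathrm{SEP}$ and every LOCC image stays in $\mathrm{SEP}$, whereas the target $|\Phi^+\rangle\langle\Phi^+|^{\otimes m}\in\mathrm{ENT}$. Hence no LOCC protocol in the sense of Def.~\ref{def:distillation} can take a separable $\rho$ to the target with nonzero probability, which is exactly the observation recorded after Def.~\ref{def:distillation}. Contrapositively, distillability forces $\rho\in\mathrm{ENT}$.

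For the reverse direction, entangled $\Rightarrow$ distillable, I would proceed in three steps. First, I use that for qubit--qubit systems the PPT criterion is \emph{necessary and sufficient} for separability (the $d=2$ case announced in the Introduction, i.e.\ Thm.~\ref{thm:PPT-crit} specialised to two qubits): since $\rho\in\mathrm{ENT}$, the partial transpose $\rho^{T_B}$ fails to be positive semi-definite, so there is a unit vector $|\psi\rangle$ with $\langle\psi|\rho^{T_B}|\psi\rangle<0$. A short check rules out $|\psi\rangle$ being a product vector: for $|\psi\rangle=|a\rangle\otimes|b\rangle$ one has $\langle\psi|\rho^{T_B}|\psi\rangle=\langle a\otimes\bar b|\rho|a\otimes\bar b\rangle\geq 0$ because $\rho\geq 0$. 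Thus $|\psi\rangle$ has full Schmidt rank $2$ and may be written as $|\psi\rangle=(F_A\otimes F_B)|\Phi^+\rangle$ (up to normalisation) for invertible local operators $F_A,F_B$. Second, I apply the local filtering map $F_A^{-1}\otimes F_B^{-1}$, realised as a single success branch of an LOCC protocol (local POVM elements proportional to $F_A^{-1}$ and $F_B^{-1}$, scaled to operator norm $\leq 1$, with the success outcome announced classically), yielding the renormalised state $\tilde\rho=(M_A\otimes M_B)\rho(M_A\otimes M_B)^\dagger/p$ with success probability $p>0$. The technical core is to show that the fully entangled fraction of $\tilde\rho$ then exceeds $1/2$. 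Here I would use the transpose trick $[(A\otimes B)\,X\,(C\otimes D)]^{T_B}=(A\otimes D^{T})\,X^{T_B}\,(C\otimes B^{T})$ together with $|\Phi^+\rangle\langle\Phi^+|^{T_B}=\tfrac{1}{2}\,\mathrm{SWAP}$ to rewrite $\langle\psi|\rho^{T_B}|\psi\rangle$ as the overlap of $\rho$ with the $T_B$ of a filtered maximally entangled projector, and thereby translate the strict negativity detected above into the statement that $\tilde\rho$ (possibly after a local unitary aligning the optimal maximally entangled vector with $|\Omega_{0,0}\rangle$) has overlap with $|\Omega_{0,0}\rangle$ strictly greater than $1/2=1/d$.

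Third, once I have produced by LOCC a two-qubit state whose overlap with a maximally entangled state exceeds $1/d$, I invoke the sufficient condition for distillability stated in Sec.~\ref{sec:ent_dist} (the overlap-$>1/d$ criterion of Ref.~\cite{horodecki_reduction_1999}): $\tilde\rho$ is distillable. Since distillability is preserved under precomposition with an LOCC filter of nonzero success probability---run the filter, then the distillation protocol for $\tilde\rho$, which only rescales the asymptotic rate by the constant factor $p$---the original state $\rho$ is itself distillable, closing the equivalence (cf.\ Ref.~\cite{horodecki_inseparable_1997}). I expect the second step to be the main obstacle: pinning down the post-filtering fully entangled fraction and establishing that it is strictly larger than $1/2$, rather than merely positive, is where the real content lies, and it is also where one must be careful that the filter $F_A^{-1}\otimes F_B^{-1}$ is a legitimate LOCC branch with $p>0$ so that the conclusion transfers back to $\rho$.
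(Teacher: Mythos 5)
The paper itself does not prove this theorem: it is quoted from Ref.~\cite{horodecki_inseparable_1997}, so the comparison is really with the standard argument of that reference, whose architecture you have correctly reproduced. Your forward direction (SEP is closed under LOCC, the target is entangled), your observation that the vector $|\psi\rangle$ with $\langle\psi|\rho^{\Gamma}|\psi\rangle<0$ cannot be a product vector and hence equals $(F_A\otimes F_B)|\Phi^+\rangle$ with invertible local factors, and your final assembly (a filter with success probability $p>0$ followed by distillation of the filtered state, invoking the overlap-$>1/d$ criterion of Ref.~\cite{horodecki_reduction_1999}) are all sound.

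The genuine gap is the filter itself: it must be built from the \emph{adjoint} of the coefficient matrix of $|\psi\rangle$, not its inverse. Writing $|\psi\rangle\propto(A\otimes\id)|\Phi^+\rangle$ with $A=F_AF_B^{T}$, the operator $A^{\dagger}\otimes\id$ commutes with partial transposition on $B$, so $\langle\Phi^+|\,[(A^{\dagger}\otimes\id)\,\rho\,(A\otimes\id)]^{\Gamma}\,|\Phi^+\rangle\propto\langle\psi|\rho^{\Gamma}|\psi\rangle<0$, and since $(|\Phi^+\rangle\langle\Phi^+|)^{\Gamma}$ is half the swap operator this is exactly the statement that the filtered, renormalised state has overlap $>1/2$ with the singlet. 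With your filter $M_A\otimes M_B\propto F_A^{-1}\otimes F_B^{-1}$, the transpose trick instead requires $(M_A^{\dagger}\otimes M_B^{T})|e\rangle\propto|\psi\rangle$ for some maximally entangled $|e\rangle$, which forces $F_A^{\dagger}F_A\,F_B^{T}F_B$ to be proportional to a unitary --- generally false. Concretely, take $\rho=(1-p)\,|00\rangle\langle00|+p\,|\phi^+\rangle\langle\phi^+|$ with $|\phi^+\rangle=(|01\rangle+|10\rangle)/\sqrt{2}$ and small $p$. The negative eigenvector of $\rho^{\Gamma}$ is $|\psi\rangle\propto-(p/2)|00\rangle+|11\rangle+O(p^2)$, so $F_A\propto\mathrm{diag}(-p/2,1)$, $F_B=\id$. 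Your filter $F_A^{-1}\otimes\id\propto\mathrm{diag}(1,-p/2)\otimes\id$ suppresses Alice's $|1\rangle$ component, i.e., it amplifies the separable noise $|00\rangle\langle00|$ and weakens the entangled part; a direct computation gives a fully entangled fraction of at most $(1-p)/(2-p+p^{3}/4)<1/2$, so the overlap criterion cannot be applied and your step two stalls. The correct filter $A^{\dagger}\otimes\id\propto\mathrm{diag}(-p/2,1)\otimes\id$ does the opposite and yields singlet fraction $>1/2$ as required. The mechanism you name (transpose trick plus the swap identity) is the right one, but carried out honestly it dictates the adjoint, not the inverse, as the Kraus operator.
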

This no longer holds for qudits of Hilbert space dimension $d>2$, meaning undistillable entangled states exist. These are termed bound entangled \cite{horodecki_mixed-state_1998} and are treated more formally in Sec.~\ref{sec:free_vs_bnd}.
The theorem implies that any entangled qubit state can be used as a target state, as it can be transformed to maximally entangled qubits with some nonzero distillation rate. Using such a target state can significantly simplify analyses of distillability for a given state and motivates the following definition.
\begin{mydef}[$n$-Distillability]\label{def:n-distillability}
    A bipartite state $\rho$ is called $n$-distillable if $n$ copies of $\rho$ can be transformed via LOCC into a bipartite entangled pair of qubits. Otherwise, it is called $n$-undistillable.
\end{mydef}
Note that no restriction is placed upon the amount of entanglement of the pair of qubits. Clearly, any $n$-distillable state is also distillable. However, the following Theorem of Ref.~\cite{watrous_many_2004} shows that establishing the undistillability of a quantum state via $n$-undistillability is difficult. The reason is that one needs to check all possible numbers of copies $n$, as there exist states that are $n_0$-undistillable for some large $n_0$, but are $(n_0+1)$-distillable.
\begin{mytheorem} \label{thm:watrous_n_distillability}
    For any choice of integers $d \geq 3$ and $n\geq 1$, there exists a $d^2 \times d^2$-dimensional bipartite mixed quantum state that is distillable but not $n$-distillable.
\end{mytheorem}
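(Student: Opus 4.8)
The plan is to convert $n$-distillability into a tractable spectral condition and then to engineer a state whose negativity is ``concealed'' from few copies but becomes ``visible'' once enough copies are available. First I would reduce $n$-distillability to a partial-transpose condition. Writing $\sigma := \rho^{T_B}$ for the partial transpose, the standard distillability reduction states that $\rho$ is $n$-distillable if and only if there is a vector $|\psi\rangle \in (\mathbb{C}^d)^{\otimes n}\otimes(\mathbb{C}^d)^{\otimes n}$ of Schmidt rank at most $2$ (with respect to the $A_1\cdots A_n : B_1\cdots B_n$ cut) satisfying $\langle\psi|\sigma^{\otimes n}|\psi\rangle < 0$. The ``if'' direction is the operationally useful one: given such a $|\psi\rangle$, projecting $\rho^{\otimes n}$ with the local rank-$2$ projectors onto the supports of its Schmidt vectors yields a two-qubit state with non-positive partial transpose, which is entangled by the qubit Peres--Horodecki criterion and hence distillable by Theorem~\ref{thm:qubit_distillation}. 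It is convenient to set $g(k) := \min\{ \langle\psi|\sigma^{\otimes k}|\psi\rangle : \|\psi\|=1,\ \mathrm{Sr}(\psi)\le 2\}$, so that $\rho$ is $k$-distillable exactly when $g(k)<0$; since $n$-distillability is monotone in $n$, once $g$ turns negative it stays negative, and $\rho$ is distillable iff $g(k)<0$ for some $k$.

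Next I would exhibit a one-parameter family $\rho = \rho_{d,\epsilon}$ on $\mathbb{C}^d\otimes\mathbb{C}^d$ (a $d^2\times d^2$ density matrix, as required), specified through its partial transpose in the form $\sigma = Q - \epsilon\,|\xi\rangle\langle\xi|$, where $Q\ge 0$, $\|\xi\|=1$, and $|\xi\rangle$ has full (or at least high) Schmidt rank $d\ge 3$. The parameter $\epsilon>0$ is chosen so that $\sigma$ fails to be positive semidefinite (hence $\rho$ is NPT), yet the single negative direction lives entirely in a high-Schmidt-rank mode. The design principle is that a Schmidt-rank-$2$ vector can overlap only a small, controlled fraction of $|\xi\rangle$, so at few copies the positive part $Q^{\otimes k}$ together with the mixed cross terms overwhelms the purely negative contributions.

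Two claims then remain. For distillability I would show $g(k)<0$ for all large $k$, so that $n^\ast := \min\{k : g(k)<0\}$ is finite; intuitively the negativity of $\sigma$ accumulates under tensoring until some Schmidt-rank-$2$ vector finally detects it. (Alternatively, one may arrange $\rho$ to have overlap exceeding $1/d$ with a maximally entangled state and invoke the sufficient criterion of Ref.~\cite{horodecki_reduction_1999}.) For non-$n$-distillability I would prove $g(n)\ge 0$, i.e. $\langle\psi|\sigma^{\otimes n}|\psi\rangle\ge 0$ for every Schmidt-rank-$\le 2$ vector. The whole construction is then calibrated by choosing $\epsilon$ (and the internal structure of $Q$ and $|\xi\rangle$) as a function of $n$ and $d$ so that $n < n^\ast < \infty$, which places $\rho$ strictly between ``$n$-distillable'' and ``undistillable'' and proves the assertion.

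The hard part is the lower bound $g(n)\ge 0$. It is a nonconvex minimization over Schmidt-rank-$2$ vectors, a set that is not a subspace, and the expansion of $\sigma^{\otimes n}=(Q-\epsilon|\xi\rangle\langle\xi|)^{\otimes n}$ produces $2^n$ terms of alternating sign. My strategy would be to exploit the symmetry built into $\rho_{d,\epsilon}$ (invariance under a suitable local unitary group and under permutations of the $n$ copies) to reduce an optimal $|\psi\rangle$ to a canonical two-term form $|\psi\rangle = |a_0\rangle|b_0\rangle + |a_1\rangle|b_1\rangle$, and then to bound the aggregate negative weight using that $|\xi\rangle$ has Schmidt rank $d\ge 3$: a rank-$2$ vector is ``starved'' of overlap with the negative mode, so the $O(\epsilon)$ negative contributions stay dominated by the positive contributions from $Q^{\otimes n}$ as long as $k\le n$. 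Making this domination quantitative---tracking how the admissible negative overlap grows with the number of copies and showing that the crossover occurs only beyond $n$---is the crux of the argument and the step I expect to require the most care.
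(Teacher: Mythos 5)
The paper does not actually prove this theorem; it states it as a cited result from Ref.~\cite{watrous_many_2004}, so your proposal can only be measured against the known proof there. Your architecture does match it: the reduction of $n$-distillability to the existence of a Schmidt-rank-$\leq 2$ vector $|\psi\rangle$ with $\langle\psi|(\rho^{T_B})^{\otimes n}|\psi\rangle<0$ is exactly the standard characterization that the cited proof is built on, and the design principle of hiding the single negative direction of $\rho^{T_B}$ inside a high-Schmidt-rank mode, with the perturbation strength $\epsilon$ calibrated as a function of $n$ and $d$, is genuinely the right idea and is what the actual construction does.

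However, as it stands this is a plan rather than a proof, and the gaps are precisely the load-bearing parts. You never instantiate $Q$, $|\xi\rangle$, or $\epsilon$ (nor verify that $\sigma=Q-\epsilon|\xi\rangle\langle\xi|$ is the partial transpose of a legitimate density operator), and you prove neither of the two claims on which everything rests: the lower bound $g(n)\geq 0$ over the nonconvex set of Schmidt-rank-$2$ vectors (which you yourself flag as the crux and leave open), and the eventual negativity $g(k)<0$ for some finite $k>n$. Without these the argument does not establish the theorem for any concrete state. In addition, your parenthetical fallback for distillability --- arranging the overlap with a maximally entangled state to exceed $1/d$ and invoking Ref.~\cite{horodecki_reduction_1999} --- is self-defeating: a singlet fraction above $1/d$ already yields $1$-distillability (twirl to an isotropic state and project locally onto a two-dimensional subspace to obtain an NPT two-qubit state), which would contradict the required $n$-undistillability for every $n\geq 1$. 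So the only viable route is the quantitative domination argument you postponed, and that is exactly the part that must be supplied before this can be called a proof.
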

Consequently, to show the undistillability of an entangled bipartite state, it is generally not sufficient to show $n$-undistillability for some finite $n$. This is one manifestation of the fact that bound entangled states are generally hard to detect and characterize (see Sec.~\ref{sec:detection of BE} and Sec.~\ref{charOfBE}).

\subsection{Bell Bases, Bell-diagonal States and the Magic Simplex} \label{sec:bell_system}
This section describes the properties of special quantum states called Bell states that are highly relevant to quantum information processing tasks. 

Consider a bipartite system $\hs_A\otimes\hs_B$ composed of two subsystems $\hs_A$ and $\hs_B$ and corresponding quantum states $\rho_{AB}\in \denop(\hs_A\otimes\hs_B)$. Let $d_A(d_B)$ be the dimension of the subsystem $\hs_A(\hs_B)$ and $d_{AB}=d_A d_B$ the dimension of the product system. If not explicitly specified otherwise, we assume $d_A=d_B = d$.
Pure, maximally entangled states on $\hs_A\otimes\hs_B$ are called \emph{Bell states}.
Their relevance stems from the fact that Bell states contain a maximum amount of entanglement that can be leveraged to process quantum information. Per definition, these states are locally maximally mixed (cf. Def.~\ref{def:maximally_entangled}) as the reduced state for any subsystem is the maximally mixed state. Hence, all information lies in the correlation between the two systems. A \emph{Bell basis} consists of $d_{AB}$ linear independent Bell states that span $\hs_A\otimes\hs_B$. 
These bases are frequently used in applications like quantum teleportation \cite{bennett_teleporting_1993}, dense coding \cite{bennett_communication_1992}, or error correction \cite{bennett_mixed_1996}. 
\emph{Bell-diagonal states} are mixtures of Bell basis states and arise naturally when errors due to imperfections in the experimental setup or other unintentional interactions are considered. The entanglement structure of Bell basis states strongly influences the properties of corresponding Bell-diagonal states. Therefore, understanding the entanglement structure of Bell states is crucial for effectively applying entanglement as a resource.

A frequently used ``standard'' Bell basis generalizing the Pauli-basis of bipartite qubits to higher dimensions shows special algebraic and geometric properties (see Sec.~\ref{sec:magic_simplex}). Due to these properties, the corresponding set of Bell-diagonal states is called ``Magic Simplex'' and its special entanglement structure, including a high relative share of bound entanglement, can be analyzed with unique approaches (see Sec.~\ref{charOfBE}). 
Alternatively, more general Bell-diagonal systems are defined (see Sec.~\ref{sec:gen_bell_system}) and compared with a focus on the frequency of bound entanglement in Sec.~\ref{boundBDS}.

\subsubsection{Standard Bell-diagonal System -- The Magic Simplex} \label{sec:magic_simplex}
A special Bell basis can be generated by the \emph{Weyl-Heisenberg} operators \cite{baumgartner_special_2007}. This generalization of the Pauli operators to general dimension $d$ is defined as follows:
\begin{mydef}[Weyl-Heisenberg Operators]
    \begin{align}
        \label{eq:weyl_ops}
        W_{k,l} := \sum_{j=0}^{d-1}\w^{j k} |j\rangle \langle j+l|,~~k,l = 0,...,d-1 \;,
    \end{align}
\end{mydef}
where here and in the following, $\w:= \exp(\frac{2 \pi i}{d})$, and indices and variables taking the values $0, 1, ..., d-1$ are elements of the ring $\mathbb{Z}_d:= \mathbb{Z}/d\mathbb{Z}$ with addition and multiplication modulo $d$. From $W_{k,l} = Z(k)X(l)$ with $Z(k) = \sum_j \w^{j k} | j \rangle \langle j  |$ and $X(l) = \sum_j |j-l\rangle \langle j | $ it is clear that $W_{k,l}$ acts as phase and shift operations depending on the phase ($k$) and shift ($l$) index. These unitary, but generally not hermitian operators obey the \emph{Weyl-relations} (with $\ast$ and $T$ denoting complex conjugation and transposition with respect to the computational basis, respectively):
\begin{gather}
    \label{eq:weyl_relations}
    \begin{aligned}
           W_{k_1,l_1}W_{k_2,l_2} &= \w^{l_1 k_2}~W_{k_1+k_2, l_1+l_2} \;, \\
           W_{k,l}^\dagger &= \w^{k l}~W_{-k, -l} = W_{k,l}^{-1} \;, \\
           W_{k,l}^\ast &= W_{-k,l} \;, \\
           W_{k,l}^T &= \w^{-k l } ~W_{k, -l} \;.      
    \end{aligned}
\end{gather}
The algebraic relations imply that the elements $\w^m\, W_{k,l}$ ($m,k,l \in \mathbb{Z}_d$) form a finite discrete group under multiplication, named \emph{Weyl-Heisenberg group}. 

A $d^2$-dimensional Bell basis $\lbrace |\Okl\rangle \rbrace $ can be constructed by local application of the Weyl-Heisenberg operators to a specified Bell state, for which we choose $| \Omega_{0,0} \rangle$:
\begin{mydef}[Bell States] \label{def:bell_states}
\begin{gather}
  \label{eq:bell_states}
  \begin{aligned}
     |\Okl\rangle &:= (W_{k,l} \otimes \id_d)|\Omega_{0,0}\rangle,~~k,l \in \mathbb{Z}_d \\
     \Pkl &:= |\Okl\rangle \langle\Okl|
  \end{aligned}
\end{gather}
\end{mydef}
The corresponding density matrices $\Pkl$, called \emph{Bell projectors}, have the reduced states $\pTrA (\Pkl) = \pTrB (\Pkl) = \frac{1}{d}\id_d$, so $| \Okl \rangle$ are indeed maximally entangled. This basis represents a direct generalization of the qubit Bell basis to higher dimensions. In this work, it is referred to as \emph{standard Bell basis} to differentiate this specific but relevant basis from other, generally unitarily inequivalent, Bell bases that exist for $d \geq 3$ (see Sec.~\ref{sec:gen_bell_system}). 
This Bell basis inherits a linear structure from the Weyl-relations \eqref{eq:weyl_relations}, as any element  $\Pkl$ corresponds to one Weyl-Heisenberg operator $W_{k,l}$. The operations $\Pkl \rightarrow (W_{i,j} \otimes \id_d) \Pkl (W_{i,j} \otimes \id_d)^\dagger=P_{k+i, l+j} $ ($i,j \in \mathbb{Z}_d$) map the set of Bell projectors onto itself. This structure allows associating states $\Pkl$ with elements of a discrete phase space $\mathbb{Z}_d^2$ and operations of the above form with translations in that phase space (see Fig.~\ref{fig:discPhaseSpace}).

\begin{figure}
    \centering
    \includegraphics[width=0.75\linewidth]{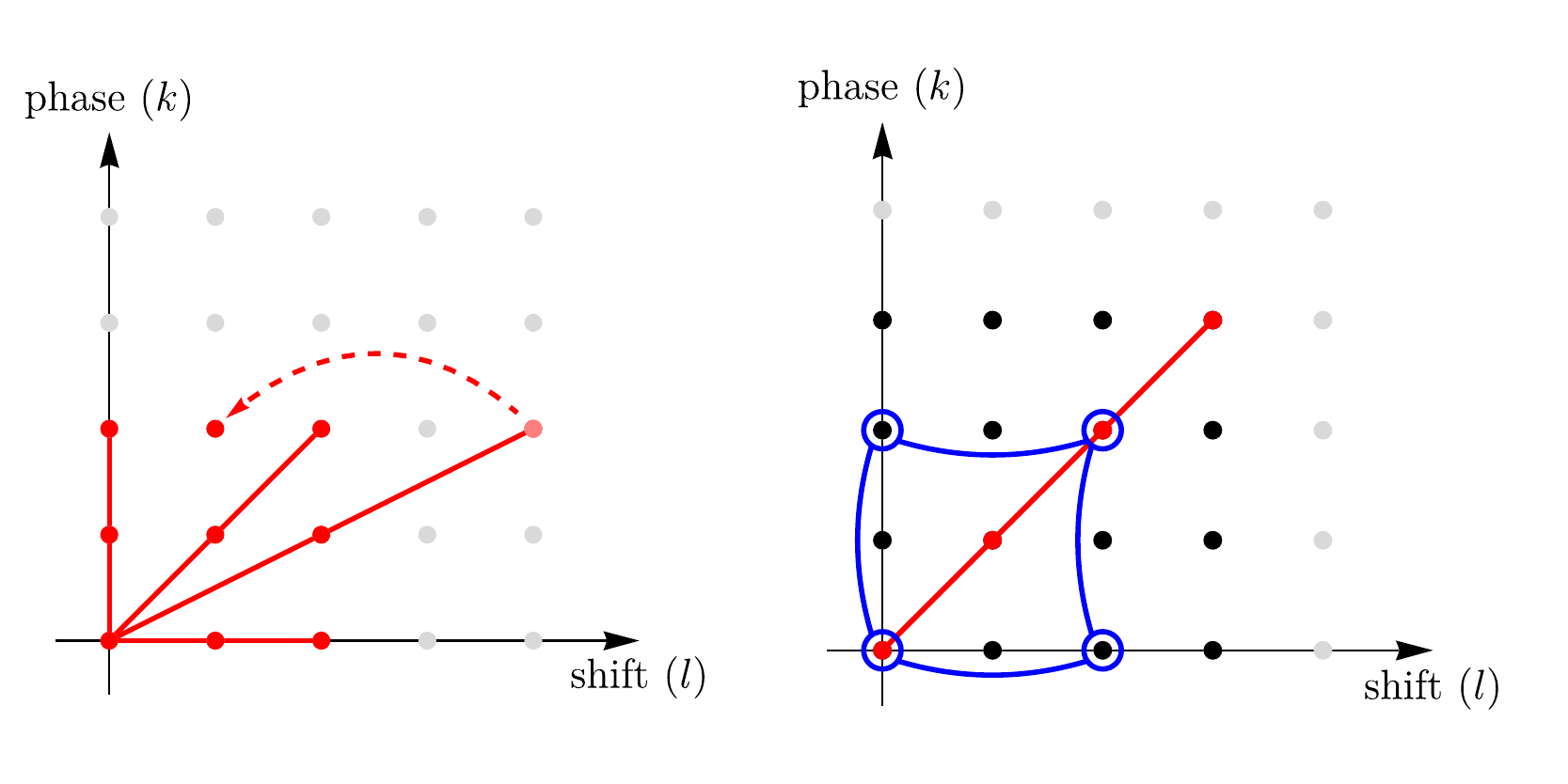}
    \caption{Visualization of the discrete phase space $\mathbb{Z}_d^2$ for $d=3$ (left) and $d=4$ (right). Each vertex represents a Bell state $P_{k,l}$. Connected vertices correspond to Bell states $P_{k,l}$ in some exemplary $d$-element subgroups. For prime $d$, subgroups form so-called ``lines'' (red), while for composite $d$, other subgroups exist (blue).}
    \label{fig:discPhaseSpace}
\end{figure}

Mixtures of standard Bell states with mixing probabilities $\lbrace \ckl \rbrace$ form the set of \emph{standard Bell-diagonal states}, the so-called \emph{Magic Simplex} $\M_d $:
\begin{mydef}[Magic Simplex] \label{def:magic_simplex}
    \begin{gather}
    \label{eq:magic_simplex}
    \M_d := \lbrace \rho = \sumkl \ckl\, \Pkl ~ |~
    \sumkl \ckl = 1, \ckl \geq 0  \rbrace
\end{gather}
\end{mydef}

Following the choice of words for the qubit Bell basis of Wootters and Hill \cite{wootters_optimal_1989}, ``magic'' reflects in the very special geometric and algebraic properties inherited from the Weyl-Heisenberg operators \eqref{eq:weyl_ops} and their relations \eqref{eq:weyl_relations}. 
These properties have been used to characterize the entanglement structure of the Magic Simplex and to detect a significant amount of bound entangled states both analytically and numerically \cite{baumgartner_geometry_2008, baumgartner_special_2007, baumgartner_state_2006, popp_almost_2022, popp_comparing_2023}. Identifying a state in $\M_d$ with a point in $\mathbb{R}^{d^2}$ given by its mixing probabilities $c_{k,l}$, the set of Bell-diagonal states form a $(d^2-1)$-dimensional simplex. The geometry of this simplex strongly reflects its entanglement structure. Two relevant subsets with direct relations to the entanglement properties of contained states are the \emph{enclosure polytope} ($\E_d$) and the \emph{kernel polytope} ($\Ker_d$). $\E_d$ contains all states with positive partial transposition (PPT; see Def.~\ref{def:ppt-property}) \cite{baumgartner_special_2007} and other states, while $\Ker_d$ contains only, but not all, separable states. The former is defined as
\begin{gather}
    \label{eq:enclosure_polytope}
    \E_d := \lbrace \rho = \sumkl \ckl\, \Pkl ~ |~
    \sumkl \ckl = 1, \ckl \in  [0, \frac{1}{d}]  \rbrace.
\end{gather}
The kernel polytope $\Ker_d$ relates to mixed states $\rho_{S_d} \in \M_d$, which are based on $d$-element subgroups of the Weyl-Heisenberg group \cite{baumgartner_special_2007}. More precisely, let $S_d$ be a subgroup of $\mathbb{Z}_d^2$ with $d$ elements. The \emph{subgroup state}
\begin{gather}
    \label{eq:subgroup_state}
    \rho_{S_d}:= \frac{1}{d}\sum_{(k,l) \in S_d} \Pkl
\end{gather}
is a separable state \cite{baumgartner_special_2007}. If $d$ is prime, all such subgroups are generated by one generator each. In this case, the Bell states corresponding to the elements of that subgroup form ``lines'' in discrete phase space (see Fig.~\ref{fig:discPhaseSpace}), and the subgroup state is also called \emph{line state}.
$\Ker_d$ is defined as the set of convex combinations of all $d$-element subgroup states:
\begin{gather}
    \label{eq:kernel}
    \Ker_d := \lbrace \rho = \sum_{S_d} q_{S_d} \rho_{S_d} ~|~
    \sum_{S_d} q_{S_d} = 1, q_{S_d} \geq 0  \rbrace \;,
\end{gather}
where the sum is over all $d$-element subgroups $S_d$.
Since the states $\rho_{S_d}$ are separable, all states in the kernel polytope are separable by definition. Compare Fig.~\ref{Md2} for an illustration of $\Ker_d$ for $d=2$.

The Weyl relations \eqref{eq:weyl_relations} further define symmetry transformations for states in $\M_d$.
These linear transformations form a group and act as permutations of the Bell basis. They are known to preserve entanglement and the PPT/NPT property (see Def.~\ref{def:ppt-property} and Ref.~\cite{baumgartner_special_2007}). Consequently, they conserve the \emph{entanglement class}, i.e., they map the sets of separable, PPT entangled, and NPT entangled states onto themselves~\cite{popp_almost_2022}. The symmetry group is generated by the following generators, which can be defined by their action on the Bell basis projectors:
\begin{itemize}
    \item $t_{p,q}: P_{k,l} \mapsto P_{k+p, l+q}~(p,q \in \mathbb{Z}_d)$ (``translation'')
    \item $m:P_{k,l} \mapsto P_{-k, l}$ (``momentum inversion'') 
    \item $r:P_{k,l} \mapsto P_{l, -k}$ (``quarter rotation'')
    \item $v:P_{k,l} \mapsto P_{k+l, l}$ (``shear'')
\end{itemize}
These symmetries are highly relevant for the entanglement structure and the detection of bound entanglement in $\M_d$ (see Sec.~\ref{sec:detection of BE}).

\begin{figure}[ht]
    \centering
    \vspace{-2mm}
    \includegraphics[scale=0.9]{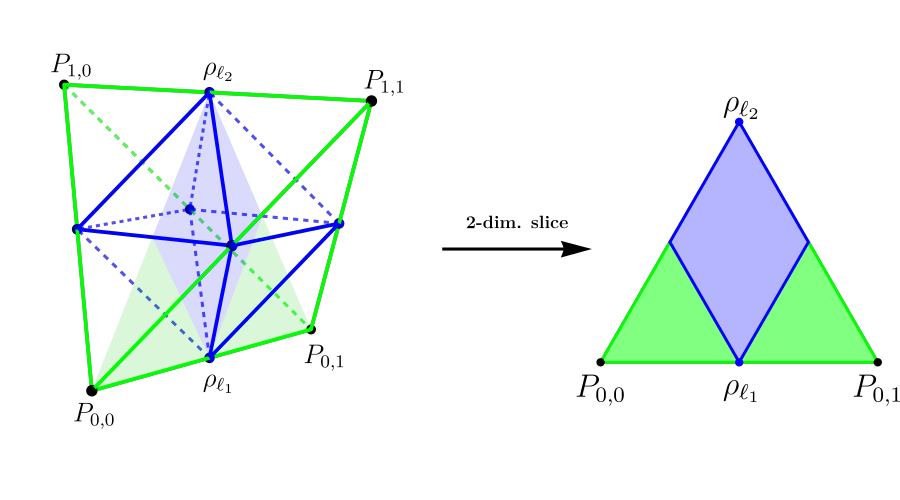}
    \vspace{-7mm}
    \caption{Visualization of the $3$-dimensional Magic Simplex corresponding to $\M_2$ and a $2$-dimensional slice thereof. The convex hull of all four maximally entangled states $P_{k,l}$ (black vertices) is a mathematical simplex in the Euclidean space with dimension $d=3$. The convex hull of the blue vertices representing ``line states'' form the kernel polytope $\Ker_2$ and is equivalent to the set of separable states for $d=2$, i.e., $\SEP \cap \M_2$.  Points outside the blue double pyramid represent entangled states.}
    \label{Md2}
\end{figure}

\subsubsection{General Bell-diagonal Systems}
\label{sec:gen_bell_system}
For $d \geq 3$, there exist other Bell bases of the Hilbert space that are not unitarily equivalent to the basis of standard Bell states \eqref{eq:bell_states}. In Ref.~\cite{popp_special_2024}, it is shown that the entanglement structure of Bell-diagonal states strongly depends on the particular choice of Bell basis. 

A family of \emph{generalized Bell bases} can be constructed as follows: Define the complex matrix $(\alpha_{s,t})_{s,t \in \mathbb{Z}_d}$ with $|\alpha_{s,t}|=1~ \forall s,t \in \mathbb{Z}_d$ and corresponding \emph{generalized Weyl-Heisenberg operators}
\begin{gather}
    \label{eq:gen_weyl_ops}
    V^\alpha_{k,l} := \sum_j w^{j k} \alpha_{j+l,l}\; |j\rangle\langle j+l|,~k,l \in \mathbb{Z}_d.
\end{gather}
Note that these operators are unitary, but generally do not satisfy relations of the form \eqref{eq:weyl_relations} and consequently do not exhibit similar group properties. Nonetheless, by local application of those operators to the maximally entangled state $| \Omega_{0,0} \rangle$, a family of generalized orthonormal Bell states forming a basis can be obtained:
\begin{mydef}[Generalized Bell States]
    \begin{gather}
    \label{eq:gen_bell_states}
    \begin{aligned}
       |\Phi^\alpha_{k,l} \rangle &:= (V^\alpha_{k,l} \otimes \id_d) |\Omega_{0,0} \rangle,~k,l \in \mathbb{Z}_d\\
       \Pkl^\alpha &:= |\Phi^\alpha_{k,l}\rangle \langle\Phi^\alpha_{k,l}|        
    \end{aligned}
\end{gather}  
\end{mydef}
For $\alpha_{s,t} = 1 ~ \forall s,t \in \mathbb{Z}_d$ the generalized Bell states are equal to the standard Bell states \eqref{eq:bell_states}. In general, however, the (generalized) Bell states and the corresponding bases are not unitarily equivalent. The set of \emph{generalized Bell-diagonal states} is defined as mixtures of generalized Bell basis states:
\begin{gather}
    \label{eq:gen_BDS}
    \M^\alpha_d := \lbrace \rho = \sumkl \ckl\, \Pkl^\alpha ~ |~
    \sumkl \ckl = 1, ~\ckl \geq 0  \rbrace
\end{gather}
Due to the lack of a group structure of the generalized Weyl-Heisenberg operators $V^\alpha_{k,l}$, the generalized Bell-diagonal states do not show equally ``magical'' properties as the Magic Simplex (Eq.~\ref{eq:magic_simplex}). The algebraic properties of the generalized Bell states and the geometry of the corresponding simplex are not equally strongly related to the entanglement structure of its Bell-diagonal states \cite{popp_special_2024}.
\section{Free vs. Bound Entanglement} \label{sec:free_vs_bnd}

This chapter discusses the fact that there are two disjoint subsets of entangled states with genuinely different properties.
To derive this fact, we revisit the concepts of Sec.~\ref{sec:PositiveCP}.
There, it is argued that only completely positive (CP) maps are physically realizable.
Hence, for experimental purposes, only the CP maps can be implemented.
However, it can nonetheless be beneficial to consider positive but not completely positive (PNCP) maps for theoretical considerations.
This chapter looks into those maps in Sec.~\ref{sec:PNCP-maps_for_ent_detection} and presents an important example, the transposition map, in Sec.~\ref{sec:PPT-criterion}.
The resulting PPT criterion for separability is closely related to entanglement distillability and is a prime tool for detecting undistillable states.
A surprising twist is the emergence of entangled yet undistillable states for which no examples were known for a long time.
These so-called ``bound'' entangled states are formally introduced in Sec.~\ref{sec:Existence_BE_States}.

\subsection{PNCP Maps for Entanglement Detection} \label{sec:PNCP-maps_for_ent_detection}

From a theoretical point of view, positive but not completely positive (PNCP) maps (cf. Def.~\ref{def:P_and_CP}) are useful as they give an alternative perspective on the separability of quantum states.
The starting point of this consideration is the convexity of the set of separable states SEP.
This is easily shown by taking the convex combination of two arbitrary separable states and observing that the result again satisfies the separability condition of Def. \ref{def:sep_and_ent}.
By Minkowski's theorem, any convex set is fully characterized by the convex hull of its extreme (boundary) points \cite{bengtsson_geometry_2006}.
As any boundary point of a convex set can be viewed as the intersection of the set itself with a tangent hyperplane, one may use the set of all tangent hyperplanes of SEP for its complete characterization.
In light of Sec.~\ref{sec:EWs}, SEP is thus characterized by a complete set of optimal entanglement witnesses.
Due to the Choi-Jamiołkowski isomorphism \cite{choi_completely_1975, jamiolkowski_linear_1972}, there is a correspondence between entanglement witnesses (viewed as linear operators) and positive maps.
This can be utilized to formulate the following characterization of separability \cite{horodecki_separability_1996}:
\begin{mytheorem} \label{thm:alt_sep_ent}
     A state $\rho\in\denop(\hs_A\otimes\hs_B)$ is separable if and only if $(\mathcal{E}\otimes\id_B) (\rho)\geq 0$ for all positive maps $\mathcal{E}:\linop(\hs_A) \rightarrow \linop(\hs_{A'})$.
\end{mytheorem}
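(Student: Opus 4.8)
The plan is to prove both directions separately, establishing the equivalence between separability and positivity under all positive maps extended by the identity.
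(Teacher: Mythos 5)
Your proposal contains no mathematical content: it only announces the intention to prove both directions, without supplying an argument for either one. As it stands, this is a plan, not a proof, so the entire substance of the theorem is missing. For the record, the paper itself does not prove this theorem either; it cites Horodecki--Horodecki--Horodecki and only sketches the surrounding duality between entanglement witnesses and positive maps.

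To turn the plan into a proof you would need the following. The forward direction is short but must be written out: if $\rho=\sum_i p_i\,\rho_A^i\otimes\rho_B^i$ is separable, then $(\mathcal{E}\otimes\id_B)(\rho)=\sum_i p_i\,\mathcal{E}(\rho_A^i)\otimes\rho_B^i\geq 0$ because each $\mathcal{E}(\rho_A^i)\geq 0$ by positivity of $\mathcal{E}$ and a sum of positive semi-definite product operators with nonnegative weights is positive semi-definite. The converse is the substantive part and requires two nontrivial ingredients that your proposal does not mention: (i) a separating-hyperplane (Hahn--Banach) argument using the convexity and closedness of $\mathrm{SEP}$ to produce, for any entangled $\rho$, a Hermitian operator $W$ with $\myTr(W\sigma)\geq 0$ for all $\sigma\in\mathrm{SEP}$ but $\myTr(W\rho)<0$; and (ii) the Choi--Jamio{\l}kowski correspondence translating such a block-positive operator $W$ into a positive map $\mathcal{E}$ for which $(\mathcal{E}\otimes\id_B)(\rho)\ngeq 0$. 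Without these steps the ``if'' direction of the equivalence is unsupported, and that is precisely the direction that makes the theorem useful for entanglement detection.
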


Let us shortly outline the practical usage of this theorem for entanglement detection.
The contrapositive of the above statement reads: A state $\rho$ is entangled if and only if there exists a positive map $\mathcal{E}$ such that $(\mathcal{E}\otimes\id_B) (\rho)\ngeq 0$.
By definition, we have $(\mathcal{E}_{CP}\otimes\id_B) (\rho)\geq 0$ for all CP maps $\mathcal{E}_{CP}$, meaning the search for such a map can be restricted to the smaller set of PNCP maps.
By Thm.~\ref{thm:Choi_CP}, the Choi operator of a PNCP map satisfies $J(\mathcal{E}_{PNCP})\ngeq 0$, while for a CP map we have $J(\mathcal{E}_{CP})\geq 0$.
Thus, given a positive map, checking whether it is CP or PNCP is straightforward.
Next, to detect the entanglement of a given quantum state $\rho\in\denop(\hs_A\otimes\hs_B)$ it is sufficient to show that for a PNCP map $\mathcal{E}_{PNCP}:\linop(\hs_A)\rightarrow\linop(\hs_{A'})$ at least one eigenvalue of $(\mathcal{E}_{PNCP}\otimes\id_B) (\rho)$ is negative.
The nontrivial aspect of this procedure lies in finding a suitable PNCP map.
Ideally, this map allows the detection of a large subset of ENT.
This sketches one basic strategy of entanglement detection via PNCP maps.
It is straightforwardly applied in the next section and, with minor adjustments, also for further entanglement detection criteria (see Sec.~\ref{sec:reduction_criterion}).

\subsection{PPT Criterion} \label{sec:PPT-criterion}

The prime example of a PNCP map is the transposition map $T: L(\mathcal{H}) \rightarrow L(\mathcal{H})$ with respect to a specific basis, acting as $\rho \mapsto \rho^T$.
The map is positive since the spectrum of a matrix is invariant under transposition.
Furthermore, it is easy to check that the Choi operator $J(T)\ngeq 0$ (cf. Thm.~\ref{thm:Choi_CP}).
Hence, $T$ is indeed PNCP.
The \textit{partial transposition} of only one part of a larger system is denoted by $\Gamma: L(\mathcal{H}_A \otimes \mathcal{H}_B) \rightarrow L(\mathcal{H}_{A} \otimes \mathcal{H}_B)$ and its action on a bipartite quantum state $\rho\in\denop(\hs_A\otimes\hs_B)$ is defined as
\begin{align}
    \rho^{\Gamma} := (T\otimes\id_B)(\rho) \;.
\end{align}
Here, we arbitrarily choose the transposition map to act on the first subsystem rather than the second.
For the spectrum of $\rho^{\Gamma}$ this choice is indifferent\footnote{To see this, we define the alternative choice as $\rho^{T_B} := (\id_A\otimes T)(\rho)$.
From $(\rho^T)^T =\rho$ we get $(\rho^{T_B})^T = \rho^\Gamma$, and thus $\rho^{T_B}\geq0$ if and only if $\rho^\Gamma\geq0$ because the eigenvalues are invariant under transposition.}.

\begin{mydef}[PPT Property] \label{def:ppt-property}
    A bipartite quantum state $\rho \in \denop(\hs_A \otimes \hs_B)$ satisfying $\rho^{\Gamma} \geq 0$ is said to be positive under partial transposition (denoted $\rho\in\mathrm{PPT}$), else non-positive under partial transposition ($\rho\in\mathrm{NPT}$).
\end{mydef}
By definition, the subsets PPT and NPT form a partition of the space of bipartite quantum states.
A direct consequence of Thm. \ref{thm:alt_sep_ent} is the \textit{PPT criterion} or \textit{Peres-Horodecki criterion} \cite{peres_separability_1996, horodecki_separability_1996}:
\begin{mytheorem}[PPT criterion] \label{thm:PPT-crit}
    For a bipartite quantum state $\rho\in\denop(\hs_A\otimes\hs_B)$ it holds that $\rho\in\mathrm{SEP}\Rightarrow\rho\in\mathrm{PPT}$, or equivalently $\rho\in\mathrm{NPT}\Rightarrow\rho\in\mathrm{ENT}$.
\end{mytheorem}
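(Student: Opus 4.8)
The plan is to invoke Theorem~\ref{thm:alt_sep_ent} directly, since it characterizes separability through \emph{all} positive maps, while the partial transposition is built from one specific positive map, namely the transposition $T$. Concretely, I would first recall that $T$ is a positive map (the spectrum of a matrix is invariant under transposition, so $\rho_A\geq 0$ forces $\rho_A^T\geq 0$), as established in the preceding discussion. Theorem~\ref{thm:alt_sep_ent} then asserts that if $\rho\in\SEP$, then $(\E\otimes\id_B)(\rho)\geq 0$ for every positive $\E$; specializing to $\E=T$ yields $\rho^\Gamma=(T\otimes\id_B)(\rho)\geq 0$, which is precisely the statement $\rho\in\PPT$. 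This establishes $\SEP\Rightarrow\PPT$, and the contrapositive $\mathrm{NPT}\Rightarrow\mathrm{ENT}$ follows immediately because PPT/NPT and SEP/ENT each partition the state space.

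If instead one prefers a self-contained argument that does not appeal to the full Horodecki characterization, I would work directly from the separable decomposition. Writing $\rho=\sum_i p_i\,\rho_A^i\otimes\rho_B^i$ and applying partial transposition by linearity gives $\rho^\Gamma=\sum_i p_i\,(\rho_A^i)^T\otimes\rho_B^i$. The key observation is that each $(\rho_A^i)^T$ is again a density matrix: since $\rho_A^i$ is Hermitian and positive semi-definite, its transpose equals its complex conjugate and thus shares the same real, non-negative eigenvalues. Hence $\rho^\Gamma$ is a convex combination of tensor products of positive semi-definite operators and is therefore itself positive semi-definite, i.e., $\rho\in\PPT$.

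Honestly, the theorem carries little independent difficulty: essentially all of the work has been front-loaded into Theorem~\ref{thm:alt_sep_ent} (whose proof rests on the Choi--Jamio\l{}kowski correspondence and the convexity of SEP) and into the prior verification that $T$ is positive but not completely positive. The only genuine content to check is the elementary fact that transposition preserves positive semi-definiteness, which in turn follows from the invariance of the spectrum under transposition. The main conceptual obstacle is therefore not in proving this implication but in appreciating why it is only one-directional: the converse fails once $\dim(\hs_A)\cdot\dim(\hs_B)>6$, and that gap is exactly the opening through which bound entangled states enter.
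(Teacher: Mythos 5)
Your first argument is exactly the paper's proof: the paper presents Thm.~\ref{thm:PPT-crit} as a direct consequence of Thm.~\ref{thm:alt_sep_ent}, having already established in Sec.~\ref{sec:PPT-criterion} that the transposition map is positive because the spectrum is invariant under transposition, so specializing $\mathcal{E}=T$ gives $\rho^\Gamma\geq 0$ for all separable $\rho$. Your second, self-contained argument from the decomposition $\rho^\Gamma=\sum_i p_i\,(\rho_A^i)^T\otimes\rho_B^i$ is the original Peres route; it is also correct and has the minor advantage of not relying on the Choi--Jamio{\l}kowski machinery behind Thm.~\ref{thm:alt_sep_ent}, though the paper's route makes clearer that the PPT criterion is just one instance of the general PNCP-map strategy of Sec.~\ref{sec:PNCP-maps_for_ent_detection}.
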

A quantum state $\rho\in\mathrm{PPT}$ is said to satisfy the PPT criterion.
Hence, a violation thereof indicates entanglement.
Furthermore, for small Hilbert space dimensions, a stronger statement can be derived \cite{horodecki_separability_1996}:
\begin{mytheorem} \label{thm:PPT-crit_qubit_qubit}
    For qubit-qubit ($d_A=d_B=2$) and qubit-qutrit ($d_A=2,\, d_B=3$) systems, the PPT criterion is necessary and sufficient for separability, i.e., $\rho\in\mathrm{PPT} \Leftrightarrow \rho\in \mathrm{SEP}$.
\end{mytheorem}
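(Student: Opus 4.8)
The plan is to prove the two implications separately. The forward direction, $\rho\in\SEP\Rightarrow\rho\in\PPT$, is already supplied by the PPT criterion of Thm.~\ref{thm:PPT-crit} and holds in every dimension, so nothing new is required there. All of the content sits in the converse, namely that in these two specific dimensions PPT-ness alone guarantees separability. To establish it I would invoke the positive-map characterization of Thm.~\ref{thm:alt_sep_ent}: $\rho$ is separable precisely when $(\E\otimes\id_B)(\rho)\geq 0$ holds for \emph{every} positive map $\E$. The problem thereby reduces to showing that the single assumption $\rho^\Gamma\geq 0$ forces this positivity simultaneously for all positive maps.

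The decisive structural input is a classification of positive maps in low dimensions. For the matrix algebras relevant to the $2\otimes 2$ and $2\otimes 3$ cases, every positive map $\E$ is \emph{decomposable}, i.e.,
\begin{align}
    \E = \E_1 + \E_2\circ T \;,
\end{align}
with $\E_1,\E_2$ completely positive and $T$ the transposition. This is the St\o{}rmer--Woronowicz theorem. Granting it, the converse closes in a few lines. Assume $\rho\in\PPT$, so that $\rho^\Gamma=(T\otimes\id_B)(\rho)\geq 0$, and let $\E$ be an arbitrary positive map written as above. Linearity of the tensor extension gives
\begin{align}
    (\E\otimes\id_B)(\rho) = (\E_1\otimes\id_B)(\rho) + (\E_2\otimes\id_B)(\rho^\Gamma) \;.
\end{align}
Since $\E_1$ and $\E_2$ are completely positive, the maps $\E_i\otimes\id_B$ are positive by Def.~\ref{def:P_and_CP}; hence the first summand is positive semi-definite because $\rho\geq 0$, and the second because $\rho^\Gamma\geq 0$ by the PPT hypothesis. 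Thus $(\E\otimes\id_B)(\rho)\geq 0$ for every positive map $\E$, and Thm.~\ref{thm:alt_sep_ent} yields $\rho\in\SEP$, completing the nontrivial direction.

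The main obstacle is clearly the decomposability statement itself, which is the only genuinely hard step and the one I would isolate as a lemma. Rather than reproving it, I would cite St\o{}rmer (for $M_2\to M_2$) and Woronowicz (for the $M_2\to M_3$ and $M_3\to M_2$ cases), since establishing that no indecomposable positive map exists in these dimensions is a delicate operator-algebraic fact; a minor preliminary point is that the Choi correspondence lets one restrict attention to maps whose codomain is the algebra of the other subsystem, so that these are exactly the cases covered. It is worth stressing that this is precisely the point at which higher dimensions diverge: already for $3\otimes 3$ there exist indecomposable positive maps, so the reduction above breaks down, PPT no longer implies separability, and room opens for PPT entangled---that is, bound entangled---states.
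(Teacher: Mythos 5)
The paper states this theorem without proof, citing Ref.~\cite{horodecki_separability_1996}; your argument — reducing separability to positivity under all positive maps via Thm.~\ref{thm:alt_sep_ent} and then invoking the St\o{}rmer--Woronowicz decomposability of every positive map in the $2\otimes 2$ and $2\otimes 3$ cases — is exactly the proof given in that reference, and it is correct. The only genuinely hard ingredient is the decomposability result itself, which you rightly isolate and cite rather than reprove.
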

As elaborated in the upcoming section, this no longer holds for composite systems with greater Hilbert space dimensions.

\subsection{Existence of PPT Entangled and Bound Entangled States} \label{sec:Existence_BE_States}

The violation of the PPT criterion is only sufficient but not necessary for entanglement in higher dimensional systems.
This indicates the possibility of different entanglement classes characterized by their properties regarding certain quantum operations (cf. Sec.~\ref{sec:BE_and_ent_measures}).

Note that the PPT criterion is only one among many criteria to detect, in general, different subsets of entangled states (see Sec.~\ref{sec:detection of BE}).
Nowadays, several PPT entangled states, including qutrit-qutrit ($d_A=d_B=3$) and qubit-quqart ($d_A=2,\, d_B=4$) systems, are known (cf. Ref.~\cite{horodecki_separability_1997}).
Consequently, Thm. \ref{thm:PPT-crit_qubit_qubit} can not be extended to higher dimensions.

Entanglement distillation (cf. Sec.~\ref{sec:ent_dist}) is one quantum information theoretic task that is strongly affected by the PPT property.
Considering general LOCC protocols, one can show the following \cite{horodecki_mixed-state_1998}:
\begin{mytheorem} \label{thm:dist_implies_NPT}
    If a bipartite quantum state $\rho$ is distillable, then $\rho\in\mathrm{NPT}$.
    Equivalently, if $\rho\in\mathrm{PPT}$, then $\rho$ is undistillable.
\end{mytheorem}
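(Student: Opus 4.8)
The plan is to prove the equivalent contrapositive: if $\rho \in \mathrm{PPT}$, then $\rho$ is undistillable. The whole argument rests on two stability properties of the $\mathrm{PPT}$ set, combined with the fact that the distillation target is $\mathrm{NPT}$. By Def.~\ref{def:n-distillability} and the observation that any distillable state is $n$-distillable for some finite $n$ (a positive asymptotic rate forces $m \geq 1$ for large $n$, and tracing out all but one copy of the target leaves a genuine entangled qubit pair), it suffices to show that a $\mathrm{PPT}$ state is $n$-undistillable for \emph{every} $n$. I would therefore fix $n$, follow the $\mathrm{PPT}$ property through the tensor power $\rho^{\otimes n}$ and through the LOCC protocol, and conclude that the required output can never be reached.

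First I would dispatch the tensor powers. Writing $\Gamma$ for partial transposition on the joint $A$-system, the single-copy transposes factorize, so $(\rho^{\otimes n})^{\Gamma} = (\rho^{\Gamma})^{\otimes n}$. Since a tensor product of positive semi-definite operators is again positive semi-definite, $\rho^{\Gamma} \geq 0$ gives $\rho^{\otimes n} \in \mathrm{PPT}$ for all $n$. This step is routine.

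The core of the proof is showing that LOCC maps send $\mathrm{PPT}$ states to $\mathrm{PPT}$ states. The key structural input is that every LOCC operation is a \emph{separable} operation, hence admits a Kraus representation with product operators, $\Lambda(\sigma) = \sum_i (A_i \otimes B_i)\,\sigma\,(A_i \otimes B_i)^{\dagger}$. I would then establish the covariance identity $[(A \otimes B)\,\sigma\,(C \otimes D)]^{\Gamma} = (C^{T} \otimes B)\,\sigma^{\Gamma}\,(A^{T} \otimes D)$, verifying it on basis elements $|i\rangle\langle j| \otimes |k\rangle\langle l|$ and extending by linearity. Specializing to $C = A_i^{\dagger}$, $D = B_i^{\dagger}$ and setting $\tilde A_i := A_i^{\ast}$ yields $(\Lambda(\sigma))^{\Gamma} = \sum_i (\tilde A_i \otimes B_i)\,\sigma^{\Gamma}\,(\tilde A_i \otimes B_i)^{\dagger}$, so $\sigma^{\Gamma} \geq 0$ forces $(\Lambda(\sigma))^{\Gamma} \geq 0$. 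The main obstacle is the bookkeeping for a \emph{general} multi-round LOCC protocol with classical communication and measurement branching: distillation need only succeed on some branch with nonzero probability, so I would argue that conditioning on any fixed measurement branch is still a trace-non-increasing separable operation, whence the renormalized conditional output inherits the $\mathrm{PPT}$ property from the identity above.

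Finally I would invoke that the target is $\mathrm{NPT}$. A single maximally entangled qubit pair $|\Phi^{+}\rangle\langle\Phi^{+}|$—indeed any entangled two-qubit state—is $\mathrm{NPT}$ by Thm.~\ref{thm:PPT-crit_qubit_qubit}. Chaining the three steps: from $\rho \in \mathrm{PPT}$ we get $\rho^{\otimes n} \in \mathrm{PPT}$, every LOCC image of it remains $\mathrm{PPT}$, yet an entangled qubit pair is $\mathrm{NPT}$; hence no LOCC protocol can map $\rho^{\otimes n}$ onto an entangled qubit pair with nonzero probability, for any $n$. Therefore $\rho$ is $n$-undistillable for all $n$, and so undistillable, which is the claim. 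As stressed above, the only genuinely delicate point is making the LOCC-to-separable-operation reduction rigorous across arbitrary adaptive protocols.
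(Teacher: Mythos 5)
Your proposal is correct and follows essentially the same route as the paper's proof: reduce to the contrapositive, represent the LOCC protocol by a separable (product-operator) Kraus decomposition, use the covariance identity of the partial transposition under local conjugation to show that LOCC maps preserve the PPT property, and conclude by noting that the entangled qubit target is NPT. You are somewhat more explicit than the paper about the tensor-power step $(\rho^{\otimes n})^{\Gamma}=(\rho^{\Gamma})^{\otimes n}$ and about conditioning on measurement branches in adaptive multi-round protocols, but these are refinements of the same argument rather than a different approach.
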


\begin{proof}
Consider the $d$-dimensional Hilbert spaces $\hs_A, \hs_B \simeq \mathbb{C}^d$.
A bipartite quantum state $\rho \in \denop(\hs_A \otimes \hs_B)$ is distillable if and only if, for some $n\geq 1$, one can transform by LOCC $n$ copies of $\rho$ into the maximally entangled qubit state $|\Phi^+\rangle = (1/\sqrt{2})\sum_{i=0}^{1} |i\rangle\otimes |i\rangle \in \mathbb{C}^2 \otimes \mathbb{C}^2$ embedded in $\mathbb{C}^d \otimes \mathbb{C}^d$ (cf. Def.~\ref{def:distillation}).
The target state $|\Phi^+\rangle$ is NPT (i.e., $|\Phi^+\rangle\langle \Phi^+|^\Gamma \ngeq 0$, where $\Gamma$ denotes the partial transposition on $\hs_B$).

The crucial observation in the proof is that LOCC transformations preserve the PPT property of a state.
To see this, consider the most general LOCC transformation $\mathcal{E}_{LOCC}$ of $n$ copies of the state $\rho$ given by \cite{vedral_quantifying_1997}
\begin{align} \label{eq:proof_dist_implies_NPT}
    \mathcal{E}_{LOCC}(\rho^{\otimes n}) = \frac{1}{N} \sum_{i} \big(A_i \otimes B_i\big)\, \rho^{\otimes n} \, \big(A_i^\dagger \otimes B_i^\dagger\big) \;,
\end{align}
where $N = \myTr(\sum_{i} (A_i \otimes B_i)\, \rho^{\otimes n} \, (A_i^\dagger \otimes B_i^\dagger))$ is a normalization factor, and $A_i, B_i \in \linop(\mathbb{C}^{d^n})$.
If $\rho$ is distillable, then $\mathcal{E}_{LOCC}(\rho^{\otimes n}) = |\Phi^+\rangle\langle \Phi^+|$ for some LOCC channel.
Next, for any local operators $K, L, M, N \in \linop(\mathbb{C}^d)$, one can verify $((K\otimes L)\, \rho \, (M\otimes N))^\Gamma = (K\otimes L^T)\, \rho^\Gamma \, (M\otimes N^T)$, which, by linearity, extends to the general LOCC transformation given in \eqref{eq:proof_dist_implies_NPT} to yield
\begin{align}
    \mathcal{E}_{LOCC}(\rho^{\otimes n})^\Gamma = \frac{1}{N} \sum_{i} \big(A_i \otimes B_i^T\big)\, \left(\rho^\Gamma\right)^{\otimes n} \, \big(A_i^\dagger \otimes \left(B_i^T\right)^\dagger\big) \;,
\end{align}
defining again a CP map.
Thus, $\rho^\Gamma\geq 0$ implies $\mathcal{E}_{LOCC}(\rho^{\otimes n})^\Gamma\geq 0$.
Consequently, $\mathcal{E}_{LOCC}(\rho^{\otimes n}) \neq |\Phi^+\rangle\langle \Phi^+|$ for any $\rho\in\mathrm{PPT}$ which concludes the proof.
\end{proof}

The theorem implies that also entangled states can be undistillable, particularly if they are PPT entangled.
This leads to the following definition:
\begin{mydef}[Free and Bound Entanglement] \label{def:FE_and_BE}
    A bipartite quantum state $\rho$ is called free entangled (denoted $\rho\in\mathrm{FE}$) if it is distillable.
    If $\rho$ is entangled and undistillable, it is called bound entangled ($\rho\in\mathrm{BE}$).
\end{mydef}

The term ``bound entanglement'' originates from the fact that the entanglement is bound to the state in some sense.
More specifically, it is inaccessible for various quantum information protocols that involve entanglement distillation (cf. Sec.~\ref{sec:ent_dist}).
This can be formalized using entanglement measures where the bound nature is reflected in the difference between two measures with related operational interpretations (cf. Sec.~\ref{sec:BE_and_ent_measures}).

It is important to emphasize that the sets of PPT entangled and bound entangled states are not necessarily equivalent.
Even though PPT entanglement implies bound entanglement by Thm. \ref{thm:dist_implies_NPT}, the converse might not be true.
This becomes more transparent when viewing Thm. \ref{thm:dist_implies_NPT} in set-theoretic language, illustrated in Fig.~\ref{fig:FE_BE_Set_relations}.
Up to today, no NPT bound entangled state has been found.
Therefore, the question whether $\mathrm{PPT}\overset{?}{\equiv}\mathrm{SEP}\cup\mathrm{BE}$ is still open.

\begin{figure}
    \centering
    \includegraphics[width=0.6\linewidth]{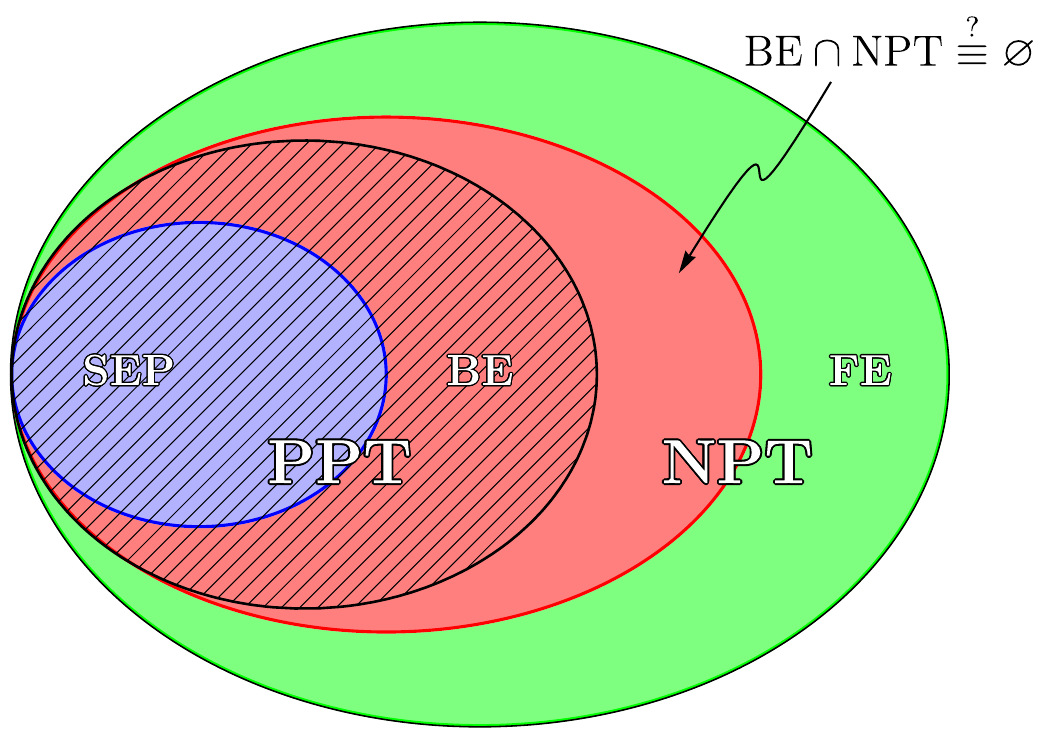}
    \caption{
    Set relations of the sets SEP (blue), ENT (green and red), PPT (hatched), NPT (non-hatched), FE (green), and BE (red).
    We have $\mathrm{FE}\subset\mathrm{NPT}$ or equivalently $\mathrm{PPT}\subset\mathrm{SEP}\cup\mathrm{BE}$.
    The question is whether it holds that $\mathrm{PPT}\overset{?}{\equiv}\mathrm{SEP}\cup\mathrm{BE}$.
    To settle the case, one would need to prove or disprove $\mathrm{BE}\overset{?}{\subset} \mathrm{PPT}$.
    Disproving this could be accomplished by finding a specific NPT bound entangled state, which, up to today, has not been successful. See Sec.~\ref{sec:outlook} for a discussion of this problem.}
    \label{fig:FE_BE_Set_relations}
\end{figure}

\section{Detection of Bound Entanglement} \label{sec:detection of BE}

In this chapter, we describe mathematical methods to verify whether a particular state is separable, bound entangled, or free entangled.
The first two sections introduce the powerful toolbox of entanglement witnesses, whereas the rest of the chapter focuses on further entanglement criteria.
To verify that a quantum state is bound entangled, one needs to prove that the state is entangled and undistillable.
Up to date, the only known strategy to show undistillability is given by the PPT criterion (cf. Thm.~\ref{thm:PPT-crit}).
Therefore, other entanglement criteria must be combined with the PPT criterion to classify a state as PPT bound entangled.
Note that the reduction criterion presented in Sec.~\ref{sec:reduction_criterion} can not be used in this way because it is weaker than the PPT criterion.
However, it may allow the distinction between potential NPT bound entangled and NPT free entangled states.

\subsection{Entanglement Witnesses} \label{sec:EWs}

The concept of entanglement witnesses is of fundamental importance for characterizing the separability problem (Def.~\ref{separabilityproblem}). Optimal entanglement witnesses characterize the convex set of separable states, and decomposable/non-decomposable entanglement witnesses classify the sets of NPT and PPT entangled states, respectively. Moreover, entanglement witnesses are also a powerful tool to verify entanglement in experiments since they are Hermitian operators and, therefore, in principle, measurable observables. In Chapter~\ref{BEexperiment}, we present more details on realizations of those observables for detecting entanglement.

\begin{mydef}[Entanglement Witness]\label{DefEW}
 A Hermitian operator $W$ is called an entanglement witness (EW) if
\begin{gather}
\begin{aligned}
0&\leq\myTr(W\;\sigma)\quad \forall\; \sigma\in \mathrm{SEP}\qquad\textrm{and}\\
0&>\myTr(W\;\rho)\quad\textrm{for some}\;\;   \rho\in \mathrm{ENT},
\end{aligned}
\end{gather}
where $\sigma,\rho \in \denop(\hs)$ and $W\in \linop(\hs)$. In this case, we say that $W$ witnesses/detects the entanglement of $\rho$. Furthermore, an entanglement witness is called optimal if there exists some $\sigma \in \mathrm{SEP}$ such that $\myTr(W\;\sigma)=0$.    
\end{mydef}

Any entangled state is detectable by an entanglement witness, i.e., given a particular entangled state there always exists a Hermitian operator $W$ with the above properties. The knowledge of all optimal entanglement witnesses $\tilde{W}$ fully characterizes the set SEP. This means that the separability problem is dual to the knowledge of the set of all optimal entanglement witnesses. 
The proven NP-hardness of the separability problem is reflected in the equally hard problem of finding a complete set of optimal entanglement witnesses.

Since an entanglement witness is a Hermitian operator, it can be realized experimentally in order to verify, i.e., detect, the entanglement of a given state. This also implies a significant advantage from an experimental point of view because, generally, significantly fewer measurement settings are necessary to detect entanglement of an unknown state than via state tomography. More details can be found in Sec.~\ref{BEexperiment}.

In the following sections, we review important properties of entanglement witnesses such as decomposability, mirrored entanglement witnesses, and nonlinear witnesses, with a focus on detecting and characterizing bound entangled states.

\subsubsection{Decomposable and Non-Decomposable Entanglement Witnesses}\label{sec:decomposible}

\begin{mydef}[Decomposable Entanglement Witness]
$W$ is called a decomposable entanglement witness if it can be written as $W=A+B^\Gamma$ with $A, B\geq0$.
Here, $B^\Gamma$ denotes the partial transposition of $B$.
\end{mydef}

From that, the following theorem can be deduced:
\begin{mytheorem}\label{nondecomposible}
Decomposable entanglement witnesses cannot detect PPT entangled states.
\end{mytheorem}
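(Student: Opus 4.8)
The plan is to prove the statement directly by showing that every decomposable witness has non-negative expectation value on every PPT state. Since detecting entanglement requires a strictly negative expectation value (cf. Def.~\ref{DefEW}), this immediately implies that no decomposable witness can flag any $\rho\in\PPT$, in particular no PPT \emph{entangled} state. So I would fix a decomposable $W=A+B^\Gamma$ with $A,B\geq 0$ and an arbitrary $\rho\in\PPT$ (i.e.\ $\rho^\Gamma\geq 0$), and aim to establish $\myTr(W\rho)\geq 0$.

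First I would split the expectation value by linearity of the trace,
\begin{align}
    \myTr(W\rho) = \myTr(A\rho) + \myTr(B^\Gamma \rho) \;.
\end{align}
The first summand is handled by the elementary fact that the trace of a product of two positive semi-definite operators is non-negative: writing $A=\sqrt{A}\sqrt{A}$ and using cyclicity gives $\myTr(A\rho)=\myTr(\sqrt{A}\,\rho\,\sqrt{A})\geq 0$, since $\sqrt{A}\,\rho\,\sqrt{A}\geq 0$.

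The crux is the second summand, and here I would invoke the self-adjointness of partial transposition with respect to the Hilbert--Schmidt pairing, namely the identity $\myTr(B^\Gamma \rho)=\myTr(B\,\rho^\Gamma)$. This is the only nontrivial ingredient. It follows from the one-line index identity $\myTr(X^T Y)=\myTr(X\,Y^T)$ for full transposition, applied on the single subsystem on which $\Gamma$ acts and extended by linearity to the bipartite setting. Once this pairing identity is in place, the PPT hypothesis $\rho^\Gamma\geq 0$ together with $B\geq 0$ yields $\myTr(B\,\rho^\Gamma)\geq 0$ by exactly the same ``trace of a product of positive operators'' argument as above.

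Combining the two bounds gives $\myTr(W\rho)\geq 0$ for every $\rho\in\PPT$, which completes the proof. I expect no genuine obstacle: the single point requiring care is the pairing identity $\myTr(B^\Gamma \rho)=\myTr(B\,\rho^\Gamma)$, which I would state cleanly (perhaps as a short auxiliary lemma) so that the positivity argument then applies symmetrically to both terms of the decomposition.
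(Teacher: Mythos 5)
Your proposal is correct and follows essentially the same route as the paper's own argument: split $\myTr(W\rho)=\myTr(A\rho)+\myTr(B^\Gamma\rho)$, use the Hilbert--Schmidt self-adjointness of partial transposition to rewrite the second term as $\myTr(B\,\rho^\Gamma)$, and conclude non-negativity from $A,B\geq 0$ and $\rho^\Gamma\geq 0$. The only difference is that you spell out the positivity-of-trace and pairing steps that the paper leaves implicit.
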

This is true because we have $\myTr(W\,\rho) = \myTr(A\,\rho)+\myTr(B^\Gamma\rho) = \myTr(A\,\rho)+\myTr(B\,\rho^\Gamma) \geq 0$, where we used that the trace is invariant under transposition and that $\rho^\Gamma\geq0$ for all $\rho\in\mathrm{PPT}$.

Equivalently, one can define an entanglement witness $W$ as decomposable if $\myTr(W\rho)\geq 0$ holds for all PPT states $\rho$ (see, e.g., Ref.~\cite{chruscinski_entanglement_2014}). In summary, non-decomposable witnesses have to be constructed to detect bound entangled states. Consequently, finding non-decomposable witnesses or bound entangled states is a dual problem and, as such, a generally difficult problem. 

\subsubsection{Mirrored Entanglement Witnesses}\label{sec:mirroredEW}

Surprisingly, the authors of Ref.~\cite{bae_mirrored_2020} found that the very definition of an entanglement witness (Def.~\ref{DefEW}) does often also have a non-trivial second bound $U_{W}$, namely
\begin{eqnarray}
0\;\leq\;\myTr(W\;\sigma)&\leq&U_{W} \;, \qquad \forall\; \sigma\in \mathrm{SEP} \;,
\end{eqnarray}
such that one can also find entangled states that violate this new bound, i.e., $\myTr(W\,\rho)>U_{W}$ for some $\rho\in \mathrm{ENT}$. This can be formalized by the following definition~\cite{bae_mirrored_2020}:

\begin{mydef}[Mirrored Entanglement Witness]\label{mirroredEW}
  Given an entanglement witness $W$, one defines a mirrored operator by
\begin{equation}   \label{M!}
    W_{\rm M} = \mu\; \mathbb{1}_{d_A} \otimes \mathbb{1}_{d_B} - W,
\end{equation}
with the smallest $\mu>0$ such that $\tr(W_{\rm M} \; \sigma) \geq 0$ for all $\sigma \in \mathrm{SEP}$.
Moreover, if the maximal eigenvalue of $W$ satisfies $\lambda_{\rm max} > \mu$, then $W_{\rm M}$ is an entanglement witness, and hence one has a pair $(W, W_{\rm M})$ of mirrored entanglement witnesses. 
\end{mydef} 

\begin{figure}
    \centering
    \includegraphics[width=0.65\linewidth]{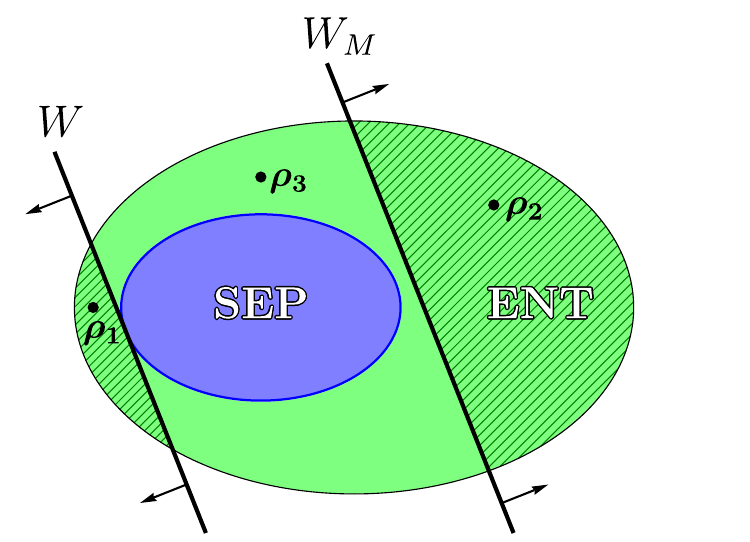}
    \caption{This picture depicts the quantum state space, separated into the sets SEP and ENT. The witness $W$ detects the entanglement of the state $\rho_1$, but neither of $\rho_2$ nor $\rho_3$. On the other hand, the mirrored witness $W_M$ based on the same local observables can also witness the entanglement of $\rho_2$. Therefore, the pair of witnesses $\{W, W_M\}$ can detect, in general, more entangled states than only considering one of the two witnesses.}
    \label{fig:mirroredEW}
\end{figure}

The concept is summarized in Fig.~\ref{fig:mirroredEW}.
The idea comes from the observation that for any positive semi-definite matrix $K\geq 0$ with $\myTr(K)=1$ and $K\in \linop(\hs_A\otimes\hs_B)$, which can be interpreted as a state on a composite system, we can find an upper bound $U$ and a lower bound $L$ defined by $U(K):=\max_{\sigma}(\myTr(K\; \sigma))$ and $L(K):=\min_\sigma(\myTr(K\; \sigma))$ with $\sigma\in \mathrm{SEP}$. Obviously, it follows that
\begin{eqnarray}
    L(K)&\leq& \myTr(K\; \sigma)\;\leq\; U(K)\quad\forall\quad \sigma\in \mathrm{SEP}\;.
\end{eqnarray}
Let us denote by $\lambda_{min}(K)$ and $\lambda_{max}(K)$ the minimal and maximal eigenvalues of $K$, respectively. Then, if $\lambda_{min}(K)<L(K)$ and/or $U(K)<\lambda_{max}(K)$ the bounds can be used to detect entangled states. Namely, one finds
\begin{align}\label{eq:separabiltywindow}
    \myTr(K\; \rho)\notin[L(K),U(K)]\; \Longrightarrow \; \rho \in \mathrm{ENT}\;.
\end{align}
The interval $[L(K), U(K)]$ is called the \emph{separability window} of the observable $K$.

Due to linearity, each bound can be associated with an entanglement witness, i.e.,
\begin{eqnarray}
    W^{(+)}&=&\frac{1}{1-d_1 d_2 L(K) }\left(K-L(K)\;\id_{d_A}\otimes\id_{d_B}\right)\;,\\
     W^{(-)}&=&\frac{1}{d_1 d_2 U(K)-1}\left(-K+U(K)\;\id_{d_A}\otimes\id_{d_B}\right)\;.
\end{eqnarray}
Thus, we have
\begin{gather}
    \begin{aligned}
        0\;\leq\; \myTr(W^{(+)} \,\rho)&\Longleftrightarrow L(K)\leq \myTr(K\; \rho)\;,\\
        0\;\leq\; \myTr(W^{(-)} \,\rho)&\Longleftrightarrow \myTr(K\; \rho)\leq U(K)\;.
    \end{aligned}
\end{gather}
This means that from one observable $K$ with a non-trivial separability window, we constructed a pair of witnesses $\{W^{(+)}, W^{(-)}\}$. Two witnesses can potentially detect more entangled states than one. Moreover, note that those witnesses are constructed from the same experimental settings. 

As proven in detail in Ref.~\cite{bae_mirrored_2020}, there is a universal relation between the pair $\{W^{(+)}, W^{(-)}\}$, so that mirrored entanglement witnesses can directly be constructed from each other, which coincides with the Def.~\ref{mirroredEW}. 

Furthermore, the authors of Ref.~\cite{chruscinski_mirrored_2025} showed that there exist mirrored pairs of entanglement witnesses that are both optimal and non-decomposable.
Consequently, the concept of mirrored entanglement witness can aid in detecting PPT-entangled states in an optimal way.
This also refutes a recent conjecture posed in Ref.~\cite{bera_structure_2023}.

\subsubsection{Nonlinear Entanglement Witnesses}

Entanglement witnesses are linear hyperplanes that distinguish entangled states from the set of separable states in the space of Hermitian operators. Since the set of separable states is convex, one may improve a given entanglement witness by introducing nonlinearities such that the convex set is better approximated and a larger set of entangled states can be detected. 

The first ideas in this direction came from uncertainty relations which are inequalities in terms of variances~\cite{hofmann_violation_2003,guhne_nonlinear_2006}.
The variance of an observable $K$ for a state $\rho$ is defined as $\Delta(K)_{\rho} = \sqrt{\myTr(K^2 \rho) - \myTr(K\rho)^2}$, which is clearly nonlinear in $\rho$, and is shown to be capable to detect entanglement in Ref.~\cite{guhne_nonlinear_2006}.

Another, more systematic method of constructing nonlinear entanglement witnesses is introduced in Ref.~\cite{guhne_nonlinear_2006}. Suppose that some entangled state $\rho$ is detected by a positive but not completely positive map $\Lambda$, that is, $(\id \otimes\Lambda)(\rho)$ possesses an eigenvector $|\varphi\rangle \in \hs_A \otimes \hs_B$ corresponding to a negative eigenvalue. Then the formula
\begin{equation}
W = (\mathbb{1} \otimes \Lambda^{*})( |\varphi\rangle \langle \varphi |) \;, 
\end{equation}
where $\Lambda^{*}$ denotes a dual map, defines an entanglement witness detecting $\rho$, i.e., $\myTr(W \rho)<0$. A nonlinear functional is then constructed by subtracting polynomials from an entanglement witness. To be specific, let us consider subtracting a single nonlinear polynomial as follows \cite{guhne_nonlinear_2006},
\begin{equation} \label{N2}
{\cal F} [\rho] = \myTr\big( W \rho\big) - \frac{1}{s^2(\chi)} \myTr\big( \widetilde{X}\; \rho\big)\cdot\myTr\big( \widetilde{X}^\dagger\; \rho\big),
\end{equation}
where one can choose $\widetilde{X} =  ({\rm \mathbb{1}} \otimes \Lambda^*)( |\varphi\rangle \langle \chi|)$ for some state $|\chi \rangle \in \hs_A \otimes \hs_B$ and $s(\chi)$ is the largest Schmidt coefficient of $|\chi \rangle$. The parameters are chosen such that $\mathcal{F}[\sigma]\geq 0$ for all $\sigma\in \mathrm{SEP}$. It is clear that a witness $W$ does not detect a larger set of entangled states than its nonlinear functional ${\cal F}$ since $\mathcal{F}[\rho] \leq \myTr(W\;\rho)$.

\subsection{Entanglement Witnesses based on MUBs and SICs and Generalizations}\label{POVMWitnesses}

Any entanglement witnesses can be decomposed into local measurements $A_i \otimes B_j \in \linop(\hs_A \otimes \hs_B)$:
\begin{eqnarray}\label{witnessdecomositioninlocal}
    W = \sum_{i,j} w_{i,j}\; A_i\otimes B_j\;.
\end{eqnarray}
This is the important connection of the theory to any experimental realizations, where it is of practical importance to minimize the number of local observables.

An interesting class of entanglement witnesses arises from sets of mutually unbiased bases (MUBs) that are experimentally very feasible (see also Sec.~\ref{BEexperiment}). They also gradually approach a set that can be used for quantum state tomography as they converge to a complete set of MUBs\footnote{A set of MUBs is called complete if it contains the maximal number of MUBs. For prime-power dimensions, this is known to be $d+1$.}. Therefore, it is clear that a well-chosen complete set of MUBs should, in principle, be able to detect bound entanglement, at least indirectly. Surprisingly, as presented in Sec.~\ref{sec:MUBwitness}, a complete set is not necessary to witness bound entanglement. A smaller set of experimental settings can sometimes be found, which is a sign of a substructure of the set of bound entangled states.

A complete set of MUBs is a $2$--design~\cite{bae_linking_2019} such as SICs (symmetrical informationally complete), Sec.~\ref{sec:SICs}. SIC witnesses can be constructed by applying the same method as for a set of MUBs. As shown in Sec.~\ref{sec:mirroredEW}, every entanglement witness has a mirrored one. For MUB-- and SIC--witnesses the corresponding nontrivial upper bounds become strongly dependent on the dimension $d$ because different choices exist of complete MUB or SIC bases. This reveals that the entanglement structure is involved when the dimension $d$ increases and, by that, the detection of bound entanglement. 

\subsubsection{A Complete MUB set is not needed to detect Bound Entanglement}\label{sec:MUBwitness}

Mutually unbiased bases are sets of different bases with the property that a measurement in one basis reveals no information about the measurement statistics for a measurement in another.
Mathematically, this can be defined as follows:
\begin{mydef}[Mutually Unbiased Bases]\label{def:unbiased}
    A pair of orthonormal bases $\mathcal{B}_k$ and $\mathcal{B}_l$ of the Hilbert space $\mathbb{C}^d$ are mutually unbiased if the basis vectors satisfy $|\langle i_k | j_l \rangle |^2 = \frac{1}{d} (1-\delta_{k,l}) +\delta_{k,l}\delta_{i,j}$ for all $\ket{i_k}\in\mathcal{B}_k$ and $\ket{j_l}\in\mathcal{B}_l$.
    A set of pairwise mutually unbiased bases $\{\mathcal{B}_k\}_k$ is referred to as mutually unbiased bases (MUBs).
\end{mydef}
 This condition formalizes mathematically the concept of complementarity, i.e., if the system resides in an eigenstate of one observable, the outcome of measuring a complementary observable is maximally uncertain. Generalizing to larger collections of measurements, we say a set of bases is mutually unbiased if each pair of bases is mutually unbiased. For prime-power dimensions, it has been shown that there exists a complete set of exactly $d+1$ MUBs~\cite{ivonovic_geometrical_1981,wootters_optimal_1989} and explicit constructions thereof, e.g., based on the Weyl-Heisenberg operators \eqref{eq:weyl_ops} (see \cite{durt_mutually_2010} for a review). However, it is conjectured that fewer MUBs exist in all other cases~\cite{boykin_real_2005}. In particular, for dimension $d=6$, so far, only a set of $3$ MUBs has been found.

The first direct connection between MUBs and entanglement witnesses is made in Ref.~\cite{spengler_entanglement_2012}, which we show below. The one for dimension $d=3$ was used for the first experiment witnessing bound entanglement (see Sec.~\ref{BEexperiment}).  Generalizations of the witnesses to a wider class of measurements, including $2$-designs and mutually unbiased measurements (MUMs), for which MUBs are an example, are investigated in Ref.~\cite{chruscinski_entanglement_2014,bae_linking_2019,hiesmayr_detecting_2021}.

We derive entanglement witnesses from a collection of MUBs. Let us consider $m$ MUBs in a $d$-dimensional Hilbert space,
\begin{equation}
\mathcal{M}_m = \{ \mathcal{B}_0,\mathcal{B}_1,\ldots,\mathcal{B}_{m-1}\}\,,
\end{equation}
where $\mathcal{B}_k =\{ |i_k\rangle \}_{i=0}^{d-1}$, and $\mathcal{B}_0$ is chosen as the computational basis $\{ |i_0\rangle \}=\{|0\rangle,\dots,|d-1\rangle\}$. From these MUBs, we construct the operators
\begin{eqnarray}
 \mathbf{B}(\mathcal{M}_m,s) &=& \sum_{l=0}^{d-1} |l \rangle\langle l| \otimes  | l +s \rangle\langle l+s|  +  \sum_{k=1}^{m-1} \sum_{i=0}^{d-1} |i_k \rangle\langle i _k| \otimes |i_k \rangle\langle i_k|\;, 
\end{eqnarray}
and its partial transpose
\begin{eqnarray}
 \mathbf{B}^\Gamma(\mathcal{M}_m,s) &=& \sum_{l=0}^{d-1} |l\rangle\langle l| \otimes  |l +s \rangle\langle l+s|  +  \sum_{k=1}^{m-1} \sum_{i=0}^{d-1} |i_k \rangle\langle i _k| \otimes |i_k^* \rangle\langle i _k^*|\;,\end{eqnarray}
where the complex conjugation (and transposition) is performed with respect to the canonical basis $\mathcal{B}_0$. It is proven in Ref.~\cite{spengler_entanglement_2012} that for any separable state $\sigma \in \mathrm{SEP}$, the two operators are bounded from above by the same quantity, depending on the number of MUBs, $m$, and the dimension $d$. In particular,
\begin{equation}
\myTr(\mathbf{B}(\mathcal{M}_m,s) \, \sigma) \leq \frac {d+m-1}{d} \;,
\end{equation}
and
\begin{equation}
\myTr(\mathbf{B}^\Gamma(\mathcal{M}_m,s)\, \sigma) \leq \frac {d+m-1}{d} \;.
\end{equation}
Hence, if the operators
\begin{equation}\label{MUBwitness}
  W(\mathcal{M}_m,s) = \frac {d+m-1}{d}\; \mathbb{1}_d \otimes \mathbb{1}_d -  \mathbf{B}(\mathcal{M}_m,s) ,
\end{equation}
and $W^\Gamma(\mathcal{M}_m,s)$ possess a strictly negative eigenvalue, they satisfy the very definition of an entanglement witness~(Def.~\ref{DefEW}) with
\begin{gather}
    \begin{aligned}
        0&\leq \myTr(W(\mathcal{M}_m,s)\, \sigma)\quad\textrm{for all}\quad \sigma\in \mathrm{SEP}\;,\\
        0&\leq \myTr(W^\Gamma(\mathcal{M}_m,s)\, \sigma)\quad\textrm{for all}\quad \sigma\in \mathrm{SEP}\;.
    \end{aligned}
\end{gather}

Only non-decomposable entanglement witnesses can detect PPT entangled states (see Thm.~\ref{nondecomposible}). This property depends on the choice $s$ \cite{spengler_entanglement_2012}:
\begin{mytheorem}
    The entanglement witnesses $W(\mathcal{M}_{m=d+1},s=0)$ and $W^\Gamma(\mathcal{M}_{m=d+1},s=0)$ are decomposable.
\end{mytheorem}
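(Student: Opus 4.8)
The plan is to compute the operator $\mathbf{B}(\mathcal{M}_{d+1},0)$ in closed form by exploiting that a complete set of $d+1$ MUBs is a $2$-design, and then to read off a decomposition $W = A + B^\Gamma$ directly. Once $\mathbf{B}(\mathcal{M}_{d+1},0)$ is identified with a multiple of the symmetric-subspace projector, decomposability of both $W(\mathcal{M}_{d+1},0)$ and $W^\Gamma(\mathcal{M}_{d+1},0)$ is essentially immediate.

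First I would observe that the choice $s=0$ absorbs the first term of $\mathbf{B}(\mathcal{M}_m,s)$ into the MUB sum: since $\mathcal{B}_0$ is the computational basis, $\sum_{l}|l\rangle\langle l|\otimes|l\rangle\langle l| = \sum_{i}\big(|i_0\rangle\langle i_0|\big)^{\otimes 2}$, so for $m=d+1$ one obtains the single uniform frame sum
\begin{equation}
\mathbf{B}(\mathcal{M}_{d+1},0) \;=\; \sum_{k=0}^{d}\sum_{i=0}^{d-1}\big(|i_k\rangle\langle i_k|\big)^{\otimes 2}.
\end{equation}
Next I would invoke the $2$-design property stated earlier in the text: the $d(d+1)$ MUB vectors form a complex projective $2$-design, whose second moment is proportional to the projector $\Pi_+ = \tfrac12(\mathbb{1}\otimes\mathbb{1}+\mathbb{F})$ onto the symmetric subspace, where $\mathbb{F} := \sum_{i,j}|i\rangle\langle j|\otimes|j\rangle\langle i|$ is the flip operator. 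The normalization works out to $\mathbf{B}(\mathcal{M}_{d+1},0) = 2\Pi_+ = \mathbb{1}\otimes\mathbb{1}+\mathbb{F}$ (both sides have trace $d(d+1)$, a useful sanity check).

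Since the prefactor of \eqref{MUBwitness} equals $\tfrac{d+m-1}{d}=2$ for $m=d+1$, this yields
\begin{equation}
W(\mathcal{M}_{d+1},0) \;=\; 2\,\mathbb{1}\otimes\mathbb{1} - (\mathbb{1}\otimes\mathbb{1}+\mathbb{F}) \;=\; \mathbb{1}\otimes\mathbb{1}-\mathbb{F} \;=\; 2\Pi_- \;\geq\; 0,
\end{equation}
where $\Pi_-$ is the antisymmetric projector. Decomposability then follows with no further computation: $W(\mathcal{M}_{d+1},0)=2\Pi_-\geq 0$ is trivially of the form $A+B^\Gamma$ with $A=2\Pi_-$, $B=0$, while the mirrored operator $W^\Gamma(\mathcal{M}_{d+1},0)=(2\Pi_-)^\Gamma$ is of the form $A+B^\Gamma$ with $A=0$, $B=2\Pi_-\geq 0$. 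By Thm.~\ref{nondecomposible}, neither detects any PPT entangled state. As a side remark, $W$ here is in fact positive semi-definite, so it is not a genuine witness at all, whereas $W^\Gamma = \mathbb{1}\otimes\mathbb{1}-d\,P_{0,0}$ does carry the negative eigenvalue $1-d$ yet remains decomposable.

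The main obstacle is not computational but lies in the single identification $\mathbf{B}(\mathcal{M}_{d+1},0)=\mathbb{1}\otimes\mathbb{1}+\mathbb{F}$, i.e.\ recognizing that the $s=0$ complete-MUB operator is exactly the $2$-design second moment and hence proportional to a symmetric-subspace projector; every other step is bookkeeping. A minor point to verify is consistency of the partial-transpose convention, since $W^\Gamma$ as defined transposes the second factor whereas the definition of decomposability fixes $\Gamma$ on the first; this is harmless because $\mathbb{F}^{T_A}=\mathbb{F}^{T_B}=d\,P_{0,0}$, so $(\mathbb{1}\otimes\mathbb{1}-\mathbb{F})^\Gamma$ is the same positive-semi-definite transpose object for either subsystem.
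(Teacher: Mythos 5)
Your argument is correct and complete. Note that the paper itself does not prove this theorem --- it only states it and points to Ref.~\cite{spengler_entanglement_2012} --- so the comparison is with the standard argument rather than with an in-text proof. Your route is exactly that standard one: the $s=0$ choice merges the diagonal term into the MUB sum, the $2$-design property of a complete set of $d+1$ MUBs (which the paper itself asserts in Sec.~\ref{POVMWitnesses}) gives $\mathbf{B}(\mathcal{M}_{d+1},0)=\mathbb{1}\otimes\mathbb{1}+\mathbb{F}$ (the trace check $d(d+1)$ on both sides confirms the normalization), and with the prefactor $\tfrac{d+m-1}{d}=2$ in Eq.~\eqref{MUBwitness} one lands on $W=2\Pi_-\geq 0$ and $W^\Gamma=\mathbb{1}\otimes\mathbb{1}-d\,P_{0,0}=(2\Pi_-)^\Gamma$, both manifestly of the form $A+B^\Gamma$ with $A,B\geq 0$. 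Your two side observations are also worth keeping: first, $W(\mathcal{M}_{d+1},0)$ being positive semi-definite means it has no negative eigenvalue and so, per the paper's own caveat preceding the theorem, is not a genuine witness at all (decomposability is then vacuous), whereas $W^\Gamma$ does have the negative eigenvalue $1-d$ and is the nontrivial half of the claim; second, the clash between the paper's convention ($\Gamma$ on the first factor in the definition of decomposability, conjugation on the second factor in $\mathbf{B}^\Gamma$) is indeed harmless here because $\mathbb{F}^{T_A}=\mathbb{F}^{T_B}=d\,P_{0,0}$. If one wanted the write-up fully self-contained, the only ingredient to add would be a one-line justification of the $2$-design identity itself, e.g.\ via the frame potential $\sum_{j,j'}|\langle\psi_j|\psi_{j'}\rangle|^4=2d(d+1)$ saturating the Welch bound for any $d+1$ MUBs; as written you lean on the paper's unproved assertion of that fact, which is acceptable but worth flagging.
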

Consequently, they can not detect PPT entangled states.
But note that the entanglement witnesses $W^\Gamma(\mathcal{M}_m,s=0)$ are anyhow powerful witnesses as they detect
all possible entangled states of the family of isotropic
states  $\{\rho_{iso}(p)=p P_{k,l}+ (1-p) \frac{1}{9}\mathbb{1}_9\}_{k,l=0}^{d-1}$ (where $P_{k,l}$ is one of the Bell states~\eqref{eq:bell_states}). Via this family of states,
it is also proven that the following theorem holds~\cite{spengler_entanglement_2012}:
\begin{mytheorem}
   No more than $d+1$ MUBs can exist. Otherwise, $W^\Gamma(\mathcal{M}_m,s=0)$ would witness separable states as entangled states.
\end{mytheorem}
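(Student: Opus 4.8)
The plan is to argue by contradiction, exploiting the explicit action of $W^\Gamma(\mathcal{M}_m,s=0)$ on the isotropic family together with the known separability threshold of that family. Suppose, for contradiction, that more than $d+1$ mutually unbiased bases exist, so that one may form $\mathcal{M}_m$ with $m>d+1$. By the bound of Ref.~\cite{spengler_entanglement_2012} quoted above, the operator $W^\Gamma(\mathcal{M}_m,0)=\frac{d+m-1}{d}\,\id_d\otimes\id_d-\mathbf{B}^\Gamma(\mathcal{M}_m,0)$ would satisfy $\myTr(W^\Gamma(\mathcal{M}_m,0)\,\sigma)\geq 0$ for all $\sigma\in\SEP$, i.e.\ it would be a legitimate entanglement witness. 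The goal is to show that this non-negativity necessarily fails on a genuinely separable isotropic state whenever $m>d+1$, which is the promised contradiction.

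First I would evaluate $\myTr(\mathbf{B}^\Gamma(\mathcal{M}_m,0)\,\rho_{\mathrm{iso}}(p))$ for $\rho_{\mathrm{iso}}(p)=p\,|\Omega_{0,0}\rangle\langle\Omega_{0,0}|+(1-p)\,d^{-2}\,\id_{d^2}$. The maximally mixed part contributes $d^{-2}\myTr(\mathbf{B}^\Gamma)=d^{-2}\cdot md=m/d$, since $\mathbf{B}^\Gamma(\mathcal{M}_m,0)$ is a sum of $md$ rank-one projectors. The decisive piece is the overlap with $|\Omega_{0,0}\rangle$, where the complex conjugation built into $\mathbf{B}^\Gamma$ matters: using $\langle\Omega_{0,0}|=d^{-1/2}\sum_n\langle n|\otimes\langle n|$ one finds $\langle\Omega_{0,0}|\big(|i_k\rangle\otimes|i_k^\ast\rangle\big)=d^{-1/2}\sum_n\langle n|i_k\rangle\,\overline{\langle n|i_k\rangle}=d^{-1/2}\sum_n|\langle n|i_k\rangle|^2=d^{-1/2}$, independently of the particular basis vector. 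Thus every one of the $md$ projectors contributes $1/d$ to $\langle\Omega_{0,0}|\mathbf{B}^\Gamma|\Omega_{0,0}\rangle=m$, so that
\begin{align*}
\myTr\!\big(\mathbf{B}^\Gamma(\mathcal{M}_m,0)\,\rho_{\mathrm{iso}}(p)\big)=pm+(1-p)\frac{m}{d}=m\,\frac{p(d-1)+1}{d}\;,
\end{align*}
and hence $\myTr(W^\Gamma(\mathcal{M}_m,0)\,\rho_{\mathrm{iso}}(p))=\frac{d-1}{d}\,(1-mp)$, linear in $p$ and vanishing exactly at $p=1/m$.

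Next I would invoke the separability threshold of the isotropic family: $\rho_{\mathrm{iso}}(p)\in\SEP$ if and only if $p\leq 1/(d+1)$, obtained by comparing its fidelity $p+(1-p)/d^2$ with the separable value $1/d$. Combining these facts closes the argument: if $m>d+1$ then $1/m<1/(d+1)$, so any $p$ with $1/m<p\leq 1/(d+1)$ yields a \emph{separable} state on which $\myTr(W^\Gamma(\mathcal{M}_m,0)\,\rho_{\mathrm{iso}}(p))=\frac{d-1}{d}(1-mp)<0$. This contradicts the witness property established above, so $m\leq d+1$. (As a consistency check, for $m=d+1$ the zero crossing sits exactly at the separability threshold, recovering the stated fact that $W^\Gamma(\mathcal{M}_{d+1},0)$ detects \emph{all} entangled isotropic states.)

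I expect the only delicate step to be the evaluation of $\langle\Omega_{0,0}|\mathbf{B}^\Gamma|\Omega_{0,0}\rangle$: one must track the transposition and conjugation carefully and recognise that it is precisely this conjugation that forces every term to contribute the \emph{same} overlap $1/d$ regardless of the MUB, so that the dependence on $m$ enters only through the overall prefactor. Everything else — the trace count, the linearity in $p$, and the isotropic separability threshold — is routine. A slightly more economical phrasing avoids the explicit threshold interval: the witness is required to stay non-negative up to $p=1/(d+1)$ while it crosses zero at $p=1/m$, which forces $1/m\geq 1/(d+1)$, i.e.\ $m\leq d+1$, directly.
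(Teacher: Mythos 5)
Your proof is correct and follows essentially the same route the paper indicates: the review only cites Ref.~\cite{spengler_entanglement_2012} and notes that the theorem is proven ``via this family of [isotropic] states,'' which is precisely the argument you reconstruct explicitly. Your overlap computation, the zero crossing of $\myTr(W^\Gamma(\mathcal{M}_m,0)\,\rho_{\mathrm{iso}}(p))$ at $p=1/m$, and the comparison with the isotropic separability threshold $p=1/(d+1)$ are all accurate.
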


We are interested in non-decomposable entanglement witnesses since only those can detect bound entangled states. As already mentioned above, the non-decomposability depends on the choice of $s$. In particular, it has been found that \cite{bae_how_2022}:
\begin{mytheorem}\label{mubforexp}
   Let $s>0$ and $d>2$.  Then  $W^\Gamma(\mathcal{M}_{m=d+1,s)}$ defines a non-decomposable entanglement witness if and only if  $s \neq \frac{d}{2}$.
\end{mytheorem}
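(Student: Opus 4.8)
My first move is to exploit that a complete set of $d+1$ MUBs is a $2$-design, which collapses the MUB sum appearing in $\mathbf{B}(\mathcal{M}_{d+1},s)$. Writing $D_t:=\sum_{l=0}^{d-1}|l\rangle\langle l|\otimes|l+t\rangle\langle l+t|$ for the diagonal ``shift'' operators and $F$ for the swap, the $2$-design identity $\sum_{k=0}^{d}\sum_{i}(|i_k\rangle\langle i_k|)^{\otimes 2}=\id\otimes\id+F$ together with $F^\Gamma=d\,P_{0,0}$ should give $\mathbf{B}(\mathcal{M}_{d+1},s)=D_s+\id\otimes\id+F-D_0$, hence the transparent expressions $W(\mathcal{M}_{d+1},s)=\id\otimes\id+D_0-D_s-F$ and $W^\Gamma(\mathcal{M}_{d+1},s)=\id\otimes\id+D_0-D_s-d\,P_{0,0}$. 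Both are block-diagonal over the shift sectors $\mathcal{W}_t:=\mathrm{span}\{|n,n+t\rangle\}_n$; in particular $W^\Gamma$ equals $2\,\id-d\,P_{0,0}$ on $\mathcal{W}_0$ (eigenvalue $2-d<0$ on $|\Omega_{0,0}\rangle$), equals $0$ on $\mathcal{W}_s$, and equals $\id$ on every other sector. The negative eigenvalue confirms $W^\Gamma$ is a genuine witness for $d>2$, so by Thm.~\ref{nondecomposible} non-decomposability is equivalent to the existence of a PPT state $\rho$ with $\myTr(W^\Gamma\rho)<0$ (necessarily entangled).

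\textbf{Non-decomposability for $s\neq d/2$.} I would produce such a $\rho$ inside the magic simplex. A short computation gives $\myTr(W^\Gamma P_{0,0})=2-d$ and $\myTr(W^\Gamma D_t)=d\,(1-\delta_{t,s})$, so $P_{0,0}$ pays a negative price while the whole sector $\mathcal{W}_s$ is free. I then take $\rho=\mu\,P_{0,0}+\sum_{t=0}^{d-1}\beta_t\,D_t$, for which $\rho^\Gamma=\tfrac{\mu}{d}F+\sum_t\beta_t D_t$ splits over sectors: on $\mathcal{W}_0$ it is $(\tfrac{\mu}{d}+\beta_0)\id\ge 0$, on each pair $\mathcal{W}_t\oplus\mathcal{W}_{-t}$ (with $t\ne -t$) it is $\left(\begin{smallmatrix}\beta_t\id&\frac{\mu}{d}U^\dagger\\ \frac{\mu}{d}U&\beta_{-t}\id\end{smallmatrix}\right)$ for a permutation $U$, and on $\mathcal{W}_{d/2}$ (for even $d$) it is $\beta_{d/2}\id+\tfrac{\mu}{d}\Pi$ with $\Pi^2=\id$. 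Positivity of $\rho^\Gamma$ thus reduces to the scalar conditions $\beta_t\beta_{-t}\ge(\mu/d)^2$ and $\beta_{d/2}\ge\mu/d$. The decisive point is that for $s\ne d/2$ the free sector $\mathcal{W}_s$ is paired with the \emph{distinct} sector $\mathcal{W}_{-s}$, so I may load all weight onto $\beta_s$ and shrink $\beta_{-s}=(\mu/d)^2/\beta_s$ to its PPT minimum, while the remaining $d-3$ sectors are each forced to weight $\mu/d$. Substituting the minimiser makes the objective collapse to $\myTr(W^\Gamma\rho)=-\mu+\mu^2/(d\beta_s)$, which is strictly negative as soon as $\beta_s>\mu/d$; a choice $\beta_s=(1+\epsilon)\mu/d$ with $\mu$ fixed by normalisation yields an explicit PPT-entangled witnessed state.

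\textbf{Decomposability at $s=d/2$.} Here the same family bottoms out at value exactly $0$ (no free second sector to exploit), which signals that I should instead prove decomposability constructively. The key is that the involution $a\mapsto a+d/2$ pairs the indices and makes $W$ block-diagonal over the four-dimensional, $\Gamma$-closed subspaces $\mathcal{H}_a=\mathrm{span}\{|a,a\rangle,|a',a'\rangle,|a,a'\rangle,|a',a\rangle\}$ with $a'=a+d/2$, together with the leftover shift sectors. On each $\mathcal{H}_a$ I expect $W|_{\mathcal{H}_a}=\id_2\oplus\left(\begin{smallmatrix}0&-1\\-1&0\end{smallmatrix}\right)$, whose two-qubit partial transpose is $\left(\begin{smallmatrix}1&-1\\-1&1\end{smallmatrix}\right)\oplus 0\ge 0$, so $W|_{\mathcal{H}_a}=C_a^{\,\Gamma}$ with $C_a\ge 0$; on every remaining pair $\mathcal{W}_t\oplus\mathcal{W}_{-t}$ the restriction is $\left(\begin{smallmatrix}\id&-U^\dagger\\-U&\id\end{smallmatrix}\right)\ge 0$. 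Summing the orthogonal blocks gives $W=(\sum_a C_a)^\Gamma+(\text{positive remainder})$, so $W$, and therefore $W^\Gamma$, is decomposable.

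\textbf{Main obstacle.} The hard part is the non-decomposability direction: one must build a PPT \emph{entangled} state, and the difficulty is that the negativity of $W^\Gamma$ lives on $|\Omega_{0,0}\rangle$ while partial transposition forces negative eigenvalues on the swap-coupled sectors $\mathcal{W}_t\oplus\mathcal{W}_{-t}$ unless each is balanced by diagonal Bell weight. Everything hinges on the arithmetic of the self-pairing $s\equiv -s\pmod d$: it is exactly at $s=d/2$ that the single cost-free sector becomes self-paired, removing the asymmetry that drives the objective below zero, and this is precisely the configuration that also enables the two-qubit block decomposition. I would be most careful with the sector bookkeeping (the count of paired sectors for even versus odd $d$) and with verifying that the forced weights cancel the $(2-d)\mu$ gain down to $-\mu+\mu^2/(d\beta_s)$, since that is where computational errors are most likely to creep in.
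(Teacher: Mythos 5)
The paper does not prove this theorem; it only states it with a citation to Ref.~\cite{bae_how_2022}, so there is no in-text proof to compare against. Judged on its own, your argument is correct and essentially complete. The $2$-design identity $\sum_{k=0}^{d}\sum_i(|i_k\rangle\langle i_k|)^{\otimes 2}=\id\otimes\id+F$ does collapse the witness to $W^\Gamma(\mathcal{M}_{d+1},s)=\id\otimes\id+D_0-D_s-d\,P_{0,0}$, your traces $\myTr(W^\Gamma P_{0,0})=2-d$ and $\myTr(W^\Gamma D_t)=d(1-\delta_{t,s})$ check out, the PPT conditions on the ansatz $\rho=\mu P_{0,0}+\sum_t\beta_t D_t$ reduce to exactly the scalar inequalities you state, and the optimization does bottom out at $-\mu+\mu^2/(d\beta_s)<0$ for $s\neq d/2$ versus $0$ for $s=d/2$; the explicit two-qubit block decomposition $W|_{\mathcal{H}_a}=C_a^\Gamma$ with $C_a\geq 0$ at $s=d/2$, plus the positive remainder $\id-F$ on the unpaired sectors, is also verified by direct computation, and decomposability of $W$ transfers to $W^\Gamma$ since $(A+B^\Gamma)^\Gamma=B+A^\Gamma$. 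Two small points to tighten: (i) the statement that ``both are block-diagonal over the shift sectors'' is literally true only for $W^\Gamma$ — $W$ itself is block-diagonal only over the paired sums $\mathcal{W}_t\oplus\mathcal{W}_{-t}$, which is what you in fact use later; (ii) you should say explicitly that the witness property of $W^\Gamma$ (nonnegativity on \emph{all} separable states, not just its spectrum) is inherited from the bound $\myTr(\mathbf{B}^\Gamma\sigma)\leq(d+m-1)/d$ of Ref.~\cite{spengler_entanglement_2012}, since a negative eigenvalue alone does not establish it. With those caveats, the proof stands, and it has the virtue of producing an explicit Bell-diagonal PPT-entangled state detected by $W^\Gamma$ for every $s\neq d/2$.
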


Finally, in strong contrast to SICs (see next section~\ref{sec:SICs}), a fascinating fact for entanglement witnesses based on MUBs has been proven, namely a complete set of MUBs is not needed to detect bound entangled states. In detail:
\begin{mytheorem}\label{notcompletestetofMUBs}
     Let $s \neq \frac{d}{2}$ with $s>0$ and $d>2$. If $m > \frac d2 + 1$, then  $W^{\Gamma}(\mathcal{M}_m,s)$ defines a non-decomposable entanglement witness, i.e., it can detect bound entangled states.
\end{mytheorem}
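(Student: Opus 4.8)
The plan is to read off non-decomposability from its dual characterisation. By the equivalent definition given after Thm.~\ref{nondecomposible}, an entanglement witness is decomposable precisely when $\myTr(W\rho)\ge 0$ for all $\rho\in\PPT$; hence $W^\Gamma(\mathcal{M}_m,s)$ is non-decomposable if and only if there is a PPT state $\rho$ with
\[
  \myTr\big(\mathbf{B}^\Gamma(\mathcal{M}_m,s)\,\rho\big)\;>\;\frac{d+m-1}{d}\,.
\]
Such a $\rho$ is automatically entangled (separable states obey the bound from above) and PPT, hence bound entangled, so one construction settles both assertions of the theorem at once. I would therefore reduce everything to exhibiting a single PPT state that overshoots the separability threshold exactly when $m>\tfrac d2+1$; this also shows in passing that $W^\Gamma$ genuinely has a negative eigenvalue and is a bona fide witness.

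For the candidate I would work inside the Magic Simplex $\M_d$ (Def.~\ref{def:magic_simplex}), whose PPT region is the enclosure polytope $\E_d$ and whose Bell-diagonal geometry is already adapted to $|\Omega_{0,0}\rangle$. The anchor is $P_{0,0}=|\Omega_{0,0}\rangle\langle\Omega_{0,0}|$: a short computation gives $\myTr(T_{k}\,P_{0,0})=1$ for every MUB term $T_{k}=\sum_i|i_k\rangle\langle i_k|\otimes|i_k^\ast\rangle\langle i_k^\ast|$, independently of the MUB phases (since $\sum_l|\langle l|i_k\rangle|^2=1$), while the diagonal term $D_s=\sum_l|l\rangle\langle l|\otimes|l+s\rangle\langle l+s|$ satisfies $\myTr(D_s\,P_{k,l})=\delta_{l,s}$, so it selects only the shift-$s$ Bell states and vanishes on $P_{0,0}$ for $s>0$. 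Hence $\myTr(\mathbf{B}^\Gamma P_{0,0})=m-1$, already exceeding $\tfrac{d+m-1}{d}$ for every $m\ge 3$ when $d>2$; the sole defect of $P_{0,0}$ is that it is NPT. I would then spread weight from $P_{0,0}$ onto a symmetric family of neighbouring Bell projectors, the combination being dictated by the simplex symmetries (translations, inversion, rotation, shear), so as to move the state into $\E_d$ while sacrificing as little overlap with $\mathbf{B}^\Gamma$ as possible.

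The threshold should then emerge from balancing two constraints on the weight $c_{0,0}$ carried by the anchor: positivity of the partial transpose pushes $c_{0,0}$ down, whereas beating $\tfrac{d+m-1}{d}$ pushes it up, and the admissible ranges overlap exactly when $m>\tfrac d2+1$. The role of the shift is encoded in the remaining overlaps $\myTr(T_{k}P_{k',l'})$ and $\myTr(D_sP_{k',l'})$, which bring in the quadratic Weyl--Heisenberg phases of the MUB vectors and hence Gauss sums depending on $s\bmod d$; at the excluded value $s=\tfrac d2$ one has $\w^{s}=-1$, these sums degenerate to real values and the gain from the $T_k$ terms collapses, which is exactly why the witness fails to be non-decomposable there (consistent with the $m=d+1$ dichotomy of Thm.~\ref{mubforexp}).

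The main obstacle is producing a PPT state efficient enough to reach the $d/2$ threshold rather than a far larger one. The naive isotropic dilution $p\,P_{0,0}+(1-p)\,\id/d^2$ only becomes PPT at $p\le 1/(d+1)$, by which point the overlap has fallen so far that violation would demand $m\ge d+3$, impossible since $m\le d+1$; so the entire content of the theorem lies in exploiting the shift and MUB structure to stay much closer to $P_{0,0}$ while remaining PPT. Carrying out the Gauss-sum bookkeeping that fixes the crossover at precisely $m=\tfrac d2+1$, and verifying that the constructed Bell-diagonal state indeed lies in $\E_d$, is the technical heart of the argument.
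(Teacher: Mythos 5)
The paper itself does not prove this theorem; it is quoted as a result from the literature (Ref.~\cite{bae_how_2022}), so there is no internal proof to compare against. Judged on its own terms, your reduction is the right one: by the equivalent characterisation following Thm.~\ref{nondecomposible}, non-decomposability of $W^\Gamma(\mathcal{M}_m,s)$ is exactly the existence of a PPT state $\rho$ with $\myTr(\mathbf{B}^\Gamma(\mathcal{M}_m,s)\,\rho)>\tfrac{d+m-1}{d}$, and such a $\rho$ is automatically PPT entangled, hence bound entangled. Your preliminary overlaps are also correct: $\myTr(T_k\,P_{0,0})=\tfrac1d\sum_i\myTr\bigl(|i_k\rangle\langle i_k|\,(|i_k^*\rangle\langle i_k^*|)^T\bigr)=1$ and $\myTr(D_s\,P_{k,l})=\delta_{l,s}$, so $\myTr(\mathbf{B}^\Gamma P_{0,0})=m-1$.

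The genuine gap is that the proof stops exactly where the theorem lives. Everything up to and including the observation that $P_{0,0}$ overshoots the bound is easy and does not involve the hypotheses $m>\tfrac d2+1$ or $s\neq\tfrac d2$ at all; the entire content of the statement is the existence of a \emph{PPT} state that still overshoots, and you never exhibit one. You describe a strategy (spread weight from $P_{0,0}$ over neighbouring Bell projectors, balance the PPT constraint against the overlap) and explicitly defer ``the Gauss-sum bookkeeping that fixes the crossover'' — but that bookkeeping \emph{is} the proof. Without an explicit family $\rho(d,m,s)$ together with (i) a verification that $\rho^\Gamma\geq 0$ and (ii) the inequality $\myTr(\mathbf{B}^\Gamma\rho)>\tfrac{d+m-1}{d}$ holding precisely for $m>\tfrac d2+1$, nothing is established. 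Two further problems with the plan as written: first, you propose to certify PPT by checking membership in the enclosure polytope $\E_d$, but the paper is explicit that $\E_d$ contains all PPT Bell-diagonal states \emph{and other states}, so lying in $\E_d$ is necessary, not sufficient, for PPT — you must check $\rho^\Gamma\geq 0$ directly. Second, the theorem (as the paper emphasises in Sec.~\ref{sectionsubsets}) holds for \emph{any} set of $m$ MUBs, whereas your Gauss-sum mechanism is tied to the Weyl--Heisenberg construction; an argument along your lines would at best prove the statement for that particular family of MUBs.
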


Particularly, for $d=3$, exactly $4$ mutually unbiased bases exist. However, bound entangled states can also be detected by choosing only $3$ MUBs, considerably reducing the number of different experimental setups. In Fig.~\ref{fig:notcompleteMUB}, we present two slices through the Magic Simplex~\eqref{eq:magic_simplex} visualizing the effect of a MUB witness based on $4$ versus $3$ MUBs.

\begin{figure}
    \centering
    \includegraphics[width=1\linewidth]{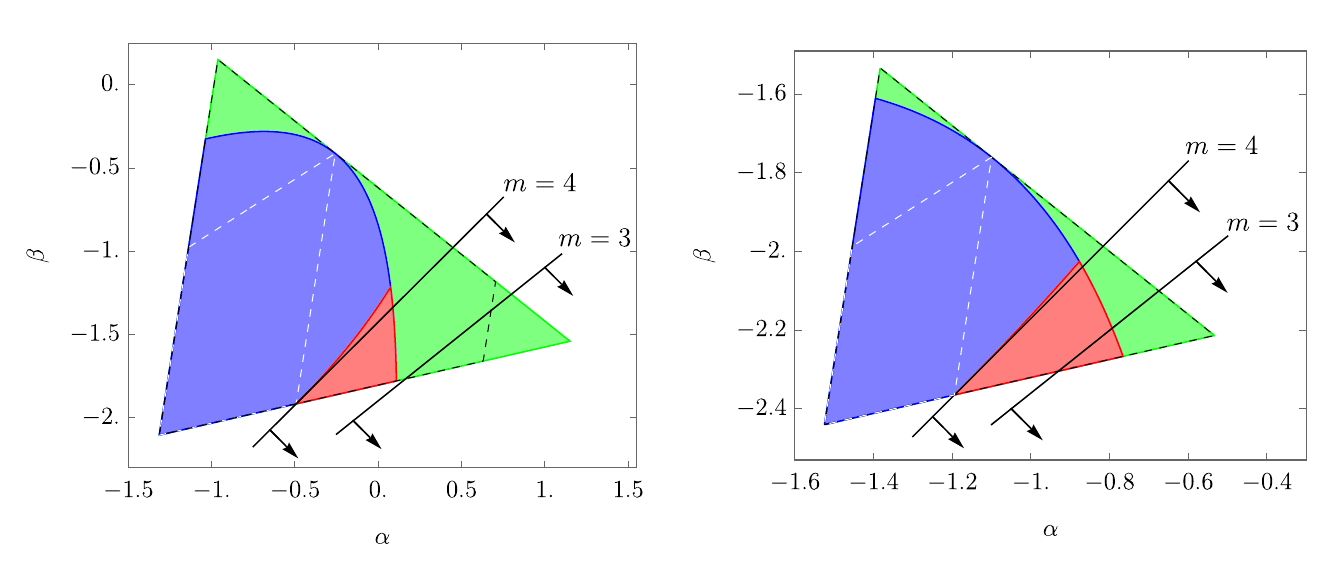}
\caption{
These graphics depict two slices, $B_1$ (left; $\gamma = -1/\sqrt{3}$) and $B_2$ (right; $\gamma = -0.83$), through the Magic Simplex for bipartite qutrits in the state family $\rho_B=\frac{1-\frac{\alpha}{5}-\frac{\beta}{4}-\gamma}{9}\mathbb{1}_9+\frac{\alpha}{5} P_{0,0}+\frac{\beta}{8} \sum_{i=1}^2 P_{i,0}+\frac{\gamma}{3} \sum_{i=0}^2 P_{i,1}$. NPT states are colored green. The red area corresponds to bound entangled states detected by the realignment criterion (Thm.~\ref{thm:realignment_criterion}). The MUB witnesses $W^\Gamma(\mathcal{M}_m,1)$, solid lines with arrows, are drawn for the maximal number of MUBs $m=4$ and $m=3$. Only for family $B_2$ can it detect entanglement in the PPT region (blue and red). The white dashed lines correspond to the kernel polytope (\ref{eq:kernel}); all states inside must be separable. The black dashed lines correspond to the enclosure polytope (\ref{eq:enclosure_polytope}); all states outside must be NPT entangled.}
\label{fig:notcompleteMUB}
\end{figure}

This brings us to the next question: Can a different subset of MUBs be more or less efficient than another subset in detecting entanglement? Indeed, this is the case~\cite{hiesmayr_detecting_2021} and is discussed in the next section.

\subsubsection{On the Universality of MUBs for Non-Decomposable Witnesses}\label{sectionsubsets} 

It has been shown that certain sets of MUBs are more useful than others in detecting entanglement~\cite{hiesmayr_detecting_2021}---a consequence of the unitary inequivalence of different classes of MUBs. In this sense, MUBs are not universal. Thus, it is expected that MUB witnesses do not only depend on the number of MUBs $m$ (see Thm.~\ref{notcompletestetofMUBs}) but also on the particular choice of the set of MUBs. In Ref.~\cite{hiesmayr_detecting_2021}, it is shown that for $d\geq 4$, inequivalent sets differ in their efficiency of detecting entanglement. Moreover, there are not only unitarily inequivalent sets of complete MUBs but also unextendible MUB sets, i.e., sets that can not be extended to a complete set of MUBs, which are more effective in detecting entanglement.

The next section discusses the mirrored entanglement witnesses of MUB witnesses, where the lower bounds show their dependence not only on the number of MUBs but the particular choice of MUBs within the MUB set.

In summary, Theorem~\ref{notcompletestetofMUBs} shows that, since \emph{any} set of $m$ MUBs ($m>d/2+1$) can be used to construct a non-decomposable entanglement witness that detects bound entangled states, the non-decomposability of such a witness is a universal property of mutually unbiased bases. However, since different equivalence classes of MUBs give rise to different entanglement witnesses, an interesting interplay is revealed between the universality of MUBs for non-decomposable witnesses and the dependence on the local choice of MUBs in detecting a given bound entangled state. The nature of such connections would be worth investigating further since they may shed light on the structure of the set of bound entangled states.

\subsubsection{Entanglement Witnesses based on Sets of SICs}\label{sec:SICs}

Let us consider a special set of POVMs (see also Sec.~\ref{sec:POVM}) called \textit{symmetric informationally complete} POVMs (abbreviated SICs). 

\begin{mydef}[Symmetric Informationally Complete POVMs] Let $\{\Pi_s\}_{s=1}^{d^2}$ be a set of rank-$1$ projectors with 
\begin{equation}
    \label{def:SIC}
    \myTr(\Pi_s \Pi_r)=\frac{1}{d+1} (1+ d\; \delta_{sr})\;, 
\end{equation} 
such that the set $\{\frac{1}{d} \Pi_s\}_{s=1}^{d^2}$ forms an informationally complete POVM, i.e., a set of $d^2$ POVM elements that span the whole Hilbert space.
Then, the set $ \{\Pi_s\}_{s=1}^{d^2} $ is called symmetric informationally complete (SIC) and referred to as SICs.
\label{def:true_def_SIC}
\end{mydef}

SICs are a special case of generalized measurements on a Hilbert space like MUBs, therefore considering the full set of MUBs or SICs needs to be equivalent for entanglement detection because both are so-called $2$-designs~\cite{bae_linking_2019}. In contrast to MUBs, the existence of SICs is conjectured for any $d$, the largest dimension for which an exact solution has been found is $d=323$~\cite{scott_sics_2017}.

\setlength{\arrayrulewidth}{0.5pt}
\begin{table}
\begin{center}
\begin{tabular}{||c|c|c|c|c||c|c|c||}
\hhline{|t=====:t:===t|}
&\multicolumn{4}{c||}{$\vphantom{\biggl\lbrace}$ Lower Bounds $L$}&\multicolumn{3}{c||}{$\vphantom{\biggl\lbrace}$ Upper Bounds $U$}\\
\hhline{||-|-|-|-|-||-|-|-||}
&$\cellcolor{gray!20}d=2$&$\cellcolor{gray!20}d=3$&\multicolumn{2}{c||}{\cellcolor{gray!20}$d=4$}& \cellcolor{gray!20}$d=2$&\cellcolor{gray!20}$d=3$&\cellcolor{gray!20}$d=4$\\
\hhline{||-|-|-|-|-||-|-|-||}
$m$&$\vphantom{\biggl\lbrace} L_{m,2}^{(MUB)}$ ~&~ $L_{m,3}^{(MUB)}$ ~&~ $L_{m,4}^{- (MUB)}$ ~&~ $L_{m,4}^{+ (MUB)}$  ~&~ $U_{m,2}^{(MUB)}$ ~&~ $U_{m,3}^{(MUB)}$
~&~  $U_{m,4}^{(MUB)}$    \\
\hhline{|:=====||===:|}
$\vphantom{\big\lbrace}2$  & $0.5$& 0.211& 0  & 0  & 1.5 & 1.333 & 1.125  \\ \hhline{||-|-|-|-|-|-|-|-||}
 $\vphantom{\big\lbrace}3$  &  1& 0.5& \textbf{0.25} & \textbf{0.5}   & 2    & 1.666 &  1.5 \\ \hhline{||-|-|-|-|-|-|-|-||}
$\vphantom{\big\lbrace}4$ & & 1& 0.5  & 0.5          &   &    2  & 1.725 \\ \hhline{||-|-|-|-|-|-|-|-||}
$\vphantom{\big\lbrace}5$ & & & 1  & 1  &  &  &  2 \\
\hhline{|b:=====:b:===:b|}
\end{tabular}
\end{center}
\caption{[Table taken from Ref.~\cite{bae_linking_2019}.] This table shows the results of the optimization over all separable states for the observable~ \eqref{MUBKriterion}, which is based on a set of $m$ MUBs. The upper bounds $U$ only depend on the number $m$ of MUBs; however, the lower bounds $L$ are starting to depend on the choice of the MUBs for $d\geq 4$. Results for higher dimensions can be found in Ref.~\cite{hiesmayr_detecting_2021}.}
\label{tab:MUB}
\end{table}

Following the idea behind entanglement witnesses based on MUBs (presented in Sec.~\ref{sec:MUBwitness}), we can construct a witness based on a set of complete basis-POVMs in the same manner.

In an experiment, the following correlation functions $C$ are bounded from below and from above:
\begin{eqnarray}
\label{MUBKriterion}   L^{MUB}_{m,d}\;\leq\; &C_{m}:=\sum_{k=1}^{m}\sum_{i=0}^{d-1}\;\myTr(|i_k\rangle\langle i_k|\otimes|i_k\rangle\langle i_k|\;\rho)&\;\leq\;U^{MUB}_{m,d}\\
\label{SICKriterion} L^{SIC}_{\tilde{m},d}\;\leq\; &C_{\tilde{m}}:=\sum_{k=1}^{\tilde{m}}\;\myTr(|s_k\rangle\langle s_k|\otimes|s_k\rangle\langle s_k|\;\rho)&\;\leq\;U^{SIC}_{\tilde{m},d} 
\end{eqnarray}
with $|\langle i_k|j_l\rangle|^2= \frac{1}{d}  (1-\delta_{k,l})+\delta_{k,l} \delta_{i,j}$ and $|\langle s_k|s_j\rangle|^2=\frac{1}{d+1} (1-\delta_{k,j})+\delta_{k,j}$.
Here $U$ and $L$ denote the bounds over all separable states, i.e., a violation of those inequalities detects entanglement. The bounds depend on the number of MUBs $m$ and on the number of SICs $\tilde{m}$ that are applied, respectively. Note that $[L, U]$ is the separability window~\eqref{eq:separabiltywindow} of the pair of mirrored witnesses which can be constructed from $C_m$ or $C_{\tilde{m}}$ (see Sec.~\ref{sec:mirroredEW}).

The upper bound for the MUBs has been derived for all dimensions to be $U^{M}_{m,d}=1+\frac{m-1}{d}$ and is independent of the choice of the set of MUBs~\cite{spengler_entanglement_2012}. A particular reordering of the MUB or SIC projectors for one of the two subsystems in $C_m$ or $C_{\tilde{m}}$ changes, in general, the upper and lower bounds, which can be exploited to detect bound entanglement (cf. Sec.~\ref{sec:MUBwitness}). In Tab.~\ref{tab:MUB} and Tab.~\ref{tab:SICs}, taken from Ref.~\cite{bae_linking_2019}, we present results of the lower and upper bounds on the expression $C_m$ and $C_{\tilde{m}}$ as defined above. In Sec.~\ref{BEexperiment}, we present an experiment that utilizes the above expression $C_m$ based on MUBs to detect bound entanglement (by applying a particular reordering in one subsystem).

For the MUB case, the lower bounds depend on the choice of a set of MUBs for dimensions greater than $d=4$. Here, inequivalent sets of $m$ MUBs exist~\cite{grassl_small_2017}. Optimizing $C_m$ over all separable states, one finds the smallest and the largest lower bound $L^\pm$ if $m=3$ (Tab.~\ref{tab:MUB}). Thus, an experimenter needs to optimize the MUB observables with respect to the source's state or try out all options. Otherwise, they may not be able to detect the entanglement.

In the SIC case, in strong contrast to MUBs, different sets of SICs affect lower and upper bounds, $L^\pm, U^\pm$, for some $\tilde{m}$ numbers of SICs (Tab.~\ref{tab:SICs}), and it starts already with the first nontrivial dimension, i.e., $d=3$. In practice, this means that witnesses based on SIC observables need local information about the state to be optimized.

In Fig.~\ref{fig:MUBSIC}, we visualize the entanglement witnesses depending on different numbers of MUBs or SICs for a slice in the Magic Simplex for two qutrits, \eqref{eq:magic_simplex}, of the mixture of the totally mixed states and two Bell states. 
Both MUBs and SICs are equivalent over a 2-design if they form a complete set, and therefore, the corresponding witnesses are also equivalent~\cite{bae_linking_2019}.
In this slice, they represent the optimal witnesses for the family of isotropic states $\{\rho_{iso}(p)=p P_{k,l}+ (1-p) \frac{1}{9}\mathbb{1}_9\}_{k,l=0}^{d-1}$. As can be deduced from Fig.~\ref{fig:MUBSIC}, fewer MUBs or SICs detect a smaller region of the entangled states; however, which region is detected is very different for MUBs and SICs. Thus, incomplete sets of those two POVMs can reveal different entangled states in the Hilbert space. In this particular case, the MUB witness ``rotates'' in this slice, whereas for the SIC witness, we find a ``parallel shift'' for this slice in the Magic Simplex.

Less is known for higher dimensions since there is little knowledge about the constructions of the MUBs and SICs in general dimensions. For dimensions up to $d=9$, results for MUBs are presented in Ref.~\cite{hiesmayr_detecting_2021}.

\setlength{\arrayrulewidth}{0.5pt}
\begin{table}
\begin{center}
\begin{tabular}{||c | c | c | c || c | c | c||}
\hhline{|t====:t:===t|}
&\multicolumn{3}{c||}{$\vphantom{\biggl\lbrace}$ Lower Bounds $L$}&\multicolumn{3}{c||}{$\vphantom{\biggl\lbrace}$ Upper Bounds $U$}\\
\hhline{||-|-|-|-||-|-|-||}
&\cellcolor{gray!20}$d=2$&\multicolumn{2}{c||}{\cellcolor{gray!20}$d=3$}& \cellcolor{gray!20}$d=2$&\multicolumn{2}{c||}{\cellcolor{gray!20}$d=3$}\\
\hhline{||-|-|-|-||-|-|-||}
 $\tilde{m}~ $   &  $\vphantom{\biggl\lbrace}  ~L_{\tilde{m},2}^{(SIC)}$   ~ & ~$ L_{\tilde{m},3}^{ - (SIC ) } $ ~ &~ $L_{\tilde{m}, 3}^{+(SIC)} $  ~&  ~ $U_{\tilde{m} ,2}^{(SIC)}$  ~ &~ $ U_{\tilde{m}, 3}^{+ (SIC ) } $ ~ & ~  $U_{\tilde{m}, 3}^{- (SIC)}$ \\
\hhline{|:====||===:|}
 $\vphantom{\big\lbrace}2$  ~&  0 & -      & - &      1.244      & - & -  \\ \hhline{||----||---||}
 $\vphantom{\big\lbrace}3$  ~& 0.266 &  0      & 0 &      1.333      & \textbf{1.254} & \textbf{1.125}  \\ \hhline{||----||---||}
 $\vphantom{\big\lbrace}4$  ~&  0.666 & 0      & 0 &    1.333    &\textbf{1.4} &  \textbf{1.25}  \\ \hhline{||----||---||}
 $\vphantom{\big\lbrace}5$  ~&  - &     0  & 0   &  -      &\textbf{1.463}  &  \textbf{1.400} \\\hhline{||----||---||}
 $\vphantom{\big\lbrace}6$  ~&   -     & \textbf{0}      & \textbf{0.112}&    - & \textbf{1.5}   &  \textbf{1.482}  \\ \hhline{||----||---||}
 $\vphantom{\big\lbrace}7$  ~&   -    & 0.15 & 0.15     &  - & 1.5      &   1.5 \\ \hhline{||----||---||}
 $\vphantom{\big\lbrace}8$  ~&   -    & 0.375  & 0.375    &    - & 1.5      &    1.5 \\ \hhline{||----||---||}
$\vphantom{\big\lbrace}9$  ~&    -   & 0.75   &  0.75  &     - & 1.5      &    1.5 \\
\hhline{|b====:b:===b|}
\end{tabular}
\caption{[Table taken from Ref.~\cite{bae_linking_2019}.] The lower and upper bounds, $L$ and $U$, respectively, for entanglement witnesses based on $\tilde{m}$ SICs are shown for $d=2,3$. We use the Hesse SICs for $d=3$ defined in Ref.~\cite{stacey_sporadic_2017}. Note that for $d=2$, the smallest and largest lower/upper bounds, $L^\pm/U^\pm$, are equal, whereas for $d>2$, they may differ (marked in bold). This strongly contrasts MUBs, where only the lower bounds can vary.}\label{tab:SICs}
\end{center}
\end{table}

\begin{figure}
    \centering
    \includegraphics[width=1\linewidth]{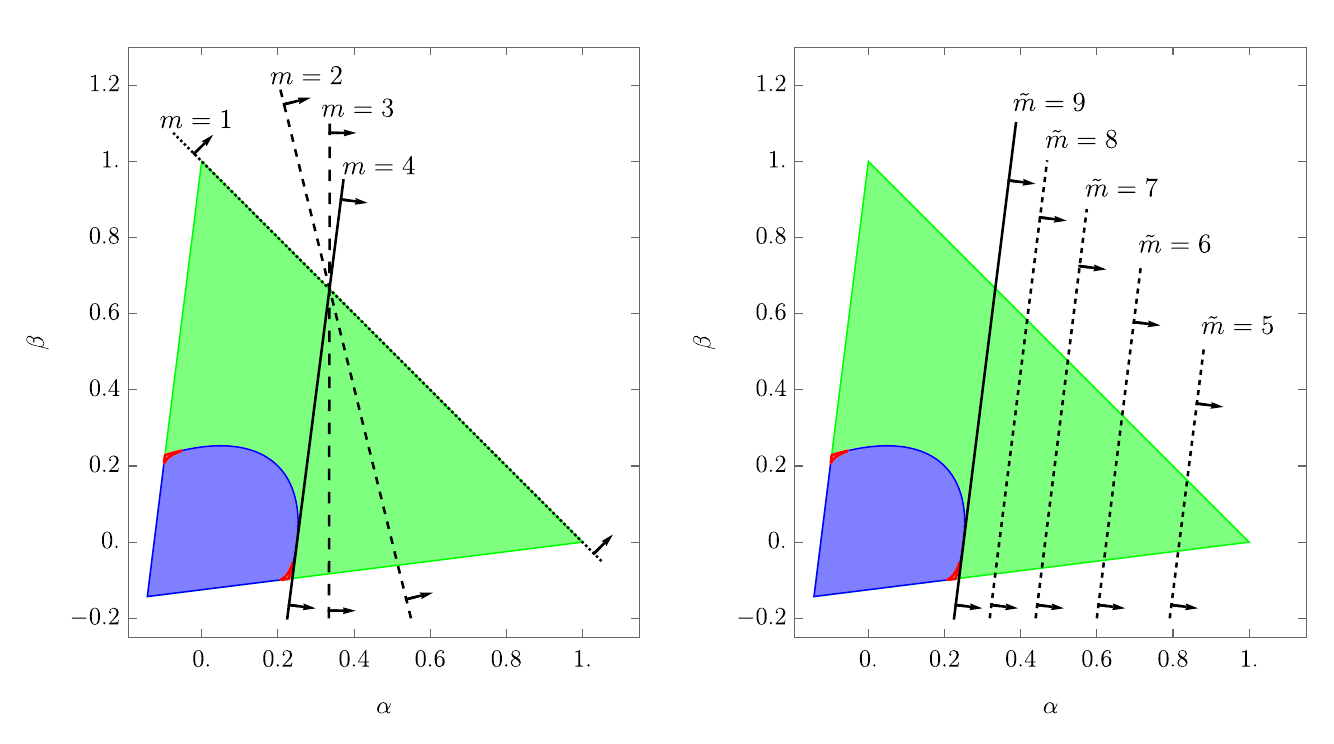}
\caption{These graphics show a $2$-dimensional slice through the $8$-dimensional Magic Simplex for two qutrits for the state family $\rho_A=\frac{1-\alpha-\beta}{9}\, \mathbb{1}_9+\alpha\,P_{0,0}+\beta\, P_{1,0}$. The green-colored region corresponds to NPT states. The red area corresponds to PPT states for which the quasi-pure approximation~(cf. Sec.~\ref{quasipure}) detects bound entanglement. On the left, the MUB witnesses for different numbers $m$ of MUBs are shown, and on the right, the SIC witnesses for different numbers $\tilde{m}$ of SICs. None of the witnesses can detect bound entanglement; however, these graphics illustrate that those observables detect different sets of entangled states.}
\label{fig:MUBSIC}
\end{figure}

Due to the mentioned equivalence over a $2$-design, all bound entangled states that are detected by entanglement witnesses based on MUBs are also detected by the corresponding SIC witnesses. In contrast to MUBs (see Thm.~\ref{notcompletestetofMUBs}), using an incomplete set of SICs, no example of a bound entangled state that is detected by the corresponding witness has been found so far. This also holds for different generalizations of SICs (see also Sec.~\ref{sec:SIC_criteria}).

In summary, sets of MUBs or SICs are typically easily constructed observables in experiments (see Sec.~\ref{BEexperiment}) and, as shown, capable of detecting bound entangled states since, in general, non-decomposable entanglement witnesses can be constructed. Moreover, those entanglement witnesses can be mirrored, giving rise to two bounds, $L$ and $U$.
Generally, these bounds depend on the specific set of MUBs. Interestingly, the upper bounds for the MUB witnesses depend solely on the number of MUBs.

\subsection{Criteria Based on a Correlation Tensor}
\label{sec:cor_ten_crit}
In Ref.~\cite{sarbicki_family_2020}, a family of entanglement criteria is introduced that is based on the Bloch representation of bipartite quantum states $\rho \in \denop(\hs_A \otimes \hs_B)$ of dimension $d = d_Ad_B$.

Let $G^{A}_\alpha$ be an orthonormal basis of $\linop(H_{A})$ with respect to the Hilbert-Schmidt inner product $\myTr(G^{A \dagger}_\alpha G^{A}_\beta)=\delta_{\alpha,\beta}$ and $G^{A}_0 = \frac{1}{d_{A}}\id_{A}$ (similar for system $B$). The Bloch representation of a state $\rho$ is then 
\begin{align}
    \rho &= \frac{1}{d_A d_B} \id_A \otimes \id_B  
    + \sum_{i=1}^{d_A^2-1} \frac{r_i^A}{d_B} G_i^A \otimes  \id_B
    + \sum_{j=1}^{d_B^2-1} \frac{r_j^B}{d_A} \id_A  \otimes G_j^B
    + \sum_{i=1}^{d_A^2-1} \sum_{j=1}^{d_B^2-1} t_{i,j} G_i^A \otimes G_j^B \\
    &=: \sum_{\alpha = 0}^{d_A^2-1} \sum_{\beta = 0}^{d_B^2-1}  C_{\alpha, \beta} G_\alpha^A \otimes G_\beta^B \;.
\end{align}
Let us define two diagonal matrices $D_x^A := diag(x, 1, \dots, 1)$ and $D_y := diag(y, 1, \dots, 1)$ of dimension $d_A^2$ and $d_B^2$, respectively. With the matrix $C=(C_{\alpha, \beta})$, the following entanglement criterion can be formulated \cite{sarbicki_family_2020}:
\begin{mytheorem}[Correlation Tensor Criterion]
    \begin{align}
        \exists ~x,y \geq 0~with~ \myTr \left( \sqrt{D_x^A C D_y^B  (D_x^A C D_y^B )^\dagger} \right)  > \sqrt{\frac{d_A -1 +x^2}{d_A}} \sqrt{\frac{d_B -1 +y^2}{d_B}} 
        \implies \rho \in \ENT.
     \end{align}    
\end{mytheorem}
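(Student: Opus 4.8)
The plan is to establish the contrapositive: I will show that \emph{every} separable state satisfies $\myTr(\sqrt{M M^\dagger}) \leq \sqrt{(d_A-1+x^2)/d_A}\,\sqrt{(d_B-1+y^2)/d_B}$ for all $x,y\geq 0$, where $M := D_x^A C D_y^B$. The quantity on the left is nothing but the trace norm $\|M\|_1$ (the sum of the singular values of $M$). Once this bound is proven for all $\sigma\in\SEP$, the stated criterion follows at once, since a state violating it for some admissible $x,y$ cannot be separable and must therefore lie in $\ENT$.

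First I would analyze a single product state $\sigma = \rho_A \otimes \rho_B$. Since the expansion coefficients factorize as $C_{\alpha,\beta} = \myTr(G_\alpha^{A\dagger}\rho_A)\,\myTr(G_\beta^{B\dagger}\rho_B) =: a_\alpha\, b_\beta$, the correlation matrix is the outer product $C = \vec a\, \vec b^{\,T}$ and thus has rank one. Conjugation by the diagonal matrices preserves rank one, giving $M = (D_x^A \vec a)(D_y^B \vec b)^T$, whose single nonzero singular value equals $\|D_x^A \vec a\|\,\|D_y^B \vec b\|$. Hence the trace norm factorizes, and it remains to bound each Euclidean norm separately.

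The heart of the argument is to control these norms using the Bloch-basis normalization together with purity. The zeroth basis element is proportional to the identity, so by the trace condition its coefficient is fixed, $a_0 = 1/\sqrt{d_A}$, while orthonormality gives the Parseval identity $\sum_\alpha |a_\alpha|^2 = \myTr(\rho_A^2) \leq 1$. Subtracting the zeroth term yields $\sum_{i\geq 1}|a_i|^2 \leq 1 - 1/d_A = (d_A-1)/d_A$. Because $D_x^A$ rescales only the zeroth component by $x$, this produces $\|D_x^A \vec a\|^2 = x^2 a_0^2 + \sum_{i\geq 1}|a_i|^2 \leq (d_A-1+x^2)/d_A$, and symmetrically for $\vec b$. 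Multiplying the two bounds reproduces exactly the right-hand side for product states.

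Finally I would extend to arbitrary separable states by convexity. The map $\sigma \mapsto M(\sigma)$ is linear in $\sigma$, and the trace norm is a norm, hence convex; for $\sigma = \sum_k p_k\, \rho_A^k \otimes \rho_B^k$ the triangle inequality combined with the product-state bound and $\sum_k p_k = 1$ carries the estimate to all of $\SEP$. I expect the main obstacle to be bookkeeping rather than any deep step: one must correctly pin the zeroth Bloch component to the identity with its normalization $1/\sqrt{d_A}$ (so that the $x,y$-weighting of $D_x^A$ and $D_y^B$ aligns precisely with the purity bound on the remaining components), and verify carefully that the trace norm of a rank-one matrix is the product of the two vector norms. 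With these two facts secured, the convexity argument is routine.
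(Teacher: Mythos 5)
Your argument is correct and is the standard proof of this criterion from the cited Ref.~\cite{sarbicki_family_2020}; the review itself states the theorem without proof, and your route --- rank-one factorization of $C$ on product states, the Parseval/purity bound $\sum_{i\geq 1}|a_i|^2\leq (d_A-1)/d_A$ on each Bloch vector, the trace norm of a rank-one matrix as a product of Euclidean norms, and convexity of the trace norm over $\SEP$ --- is exactly how the bound is obtained there. The one choice you had to make, taking $G_0^A=\id_A/\sqrt{d_A}$ so that $a_0=1/\sqrt{d_A}$, is the right one: it is the normalization forced by orthonormality of the operator basis and the only one for which the right-hand side $\sqrt{(d_A-1+x^2)/d_A}$ emerges (the factor $\frac{1}{d_A}\id_A$ printed in the paper's definition of $G_0^A$ is inconsistent with $\myTr(G_0^{A\dagger}G_0^A)=1$ and should be read as a typo).
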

This family of criteria reproduces several other known entanglement criteria. For $x=y=1$, it reproduces the realignment criterion (cf. Sec.~\ref{sec:realignment}), while for $x=y=0$ it is equivalent to the so-called \emph{de Vicente criterion} \cite{de_vicente_separability_2007}. 
It is shown in Ref.~\cite{sarbicki_family_2020} that the correlation tensor criterion is also equivalent to the ESIC criterion (cf. Sec.~\ref{sec:SIC_criteria}) for $(x,y)=(\sqrt{d_A+1}, \sqrt{d_B+1})$. Moreover, criteria based on this correlation tensor allow deriving a corresponding class of entanglement witnesses (cf. Sec.~\ref{sec:EWs}; Ref.~\cite{sarbicki_family_2020}).

\subsection{Realignment Criterion} 
\label{sec:realignment}
The realignment criterion (also named cross norm criterion) is presented in Ref.~\cite{chen_matrix_2003} in matrix notation and in Ref.~\cite{rudolph_further_2005} in Dirac notation. It is based on the singular values of a realigned bipartite density matrix. 

The realignment operator $R$ can be defined via its action on the basis states of a bipartite system
\begin{align}
      R: |i\rangle \langle j | \otimes |k\rangle \langle l| 
      \mapsto 
      (|i\rangle \langle j | \otimes |k\rangle \langle l|)_{R} 
      := |i\rangle \langle k | \otimes |j\rangle \langle l |
\end{align}
and extends to general density matrices $\rho \in \denop(\hs_A \otimes \hs_B)$ by linearity:
\begin{align}
    R: \rho \mapsto \rho_R 
\end{align}
In Ref.~\cite{chen_matrix_2003, rudolph_further_2005} it is shown that if $\rho \in \SEP$, then $\myTr(\sqrt{\rho^\dagger_R \rho_R}) \leq 1$. Conversely, if the sum of singular values of a realigned density matrix is greater than one, then the state must be entangled. Thus, the realignment or cross norm criterion reads:
\begin{mytheorem}[Realignment Criterion / Cross Norm Criterion]
\label{thm:realignment_criterion}
\begin{align}
    \myTr(\sqrt{\rho^\dagger_R \rho_R}) > 1 \implies \rho \in \ENT.
\end{align}    
\end{mytheorem}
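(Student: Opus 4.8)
The plan is to prove the contrapositive: every separable state $\rho \in \SEP$ satisfies $\myTr(\sqrt{\rho_R^\dagger \rho_R}) \leq 1$. Writing $\|X\|_1 := \myTr(\sqrt{X^\dagger X})$ for the trace norm and $\|X\|_2 := \sqrt{\myTr(X^\dagger X)}$ for the Hilbert--Schmidt norm, the quantity in question is exactly $\|\rho_R\|_1$. The two structural facts I would exploit are that $R$ is linear (immediate from its definition on the basis $|i\rangle\langle j|\otimes|k\rangle\langle l|$) and that $\|\cdot\|_1$ is a norm, hence convex. Given a separable decomposition $\rho = \sum_i p_i\, \rho_A^i \otimes \rho_B^i$ with $p_i \geq 0$ and $\sum_i p_i = 1$, linearity gives $\rho_R = \sum_i p_i\, (\rho_A^i \otimes \rho_B^i)_R$, and convexity of the trace norm then yields $\|\rho_R\|_1 \leq \sum_i p_i\, \|(\rho_A^i \otimes \rho_B^i)_R\|_1$. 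This reduces the entire claim to a single inequality for product states: it suffices to show $\|(\rho_A \otimes \rho_B)_R\|_1 \leq 1$ for arbitrary $\rho_A \in \denop(\hs_A)$ and $\rho_B \in \denop(\hs_B)$.

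The second step is to compute the realignment of a product state explicitly and recognise it as a rank-one operator. Expanding $\rho_A \otimes \rho_B = \sum_{i,j,k,l} (\rho_A)_{ij} (\rho_B)_{kl}\, |i\rangle\langle j| \otimes |k\rangle\langle l|$ and applying $R$ gives
\begin{align}
    (\rho_A \otimes \rho_B)_R = \sum_{i,j,k,l} (\rho_A)_{ij} (\rho_B)_{kl}\, |i\rangle\langle k| \otimes |j\rangle\langle l| = |u\rangle\langle w| \;,
\end{align}
where $|u\rangle := \sum_{i,j}(\rho_A)_{ij}\, |i\rangle\otimes|j\rangle$ and $|w\rangle := \sum_{k,l}(\rho_B)_{kl}^\ast\, |k\rangle\otimes|l\rangle$ are the vectorisations of $\rho_A$ and $\rho_B^\ast$. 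Since the trace norm of a rank-one operator $|u\rangle\langle w|$ is simply $\||u\rangle\|\,\||w\rangle\|$, and since $\||u\rangle\|^2 = \sum_{i,j}|(\rho_A)_{ij}|^2 = \myTr(\rho_A^2)$ (using hermiticity of $\rho_A$) and likewise $\||w\rangle\|^2 = \myTr(\rho_B^2)$, I obtain $\|(\rho_A \otimes \rho_B)_R\|_1 = \|\rho_A\|_2\,\|\rho_B\|_2$. Finally, for any density operator the purity is bounded as $\myTr(\rho_A^2) = \sum_m \lambda_m^2 \leq \big(\sum_m \lambda_m\big)^2 = 1$ in terms of its nonnegative eigenvalues $\lambda_m$, so $\|\rho_A\|_2 \leq 1$ and $\|\rho_B\|_2 \leq 1$. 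Hence $\|(\rho_A \otimes \rho_B)_R\|_1 \leq 1$, which, combined with the convexity bound of the first step, gives $\|\rho_R\|_1 \leq \sum_i p_i = 1$ for every separable $\rho$, completing the contrapositive.

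I expect the only genuinely delicate step to be the index bookkeeping in the second paragraph: one must track the permutation induced by $R$ carefully enough to see that the product state collapses to a single outer product $|u\rangle\langle w|$, and then correctly read off its unique nonzero singular value as $\||u\rangle\|\,\||w\rangle\|$. The alternative name ``cross norm criterion'' reflects precisely this structure: on a product state the realigned trace norm factorises as $\|\rho_A\|_2\,\|\rho_B\|_2$, and the convexity step extends the resulting bound from product states to all of $\SEP$. Everything else --- linearity of $R$, convexity of the trace norm, and the purity bound $\myTr(\rho^2)\leq 1$ --- is standard. I would also stress that the argument establishes only the necessary condition for separability stated in the theorem; it says nothing about how large the class of detected (PPT) entangled states is, which is a separate question.
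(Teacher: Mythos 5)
Your proof is correct and is exactly the standard ``computable cross norm'' argument: reduce to product states by linearity of $R$ and convexity of the trace norm, observe that $(\rho_A\otimes\rho_B)_R$ is the rank-one operator $|u\rangle\langle w|$ built from the vectorisations of $\rho_A$ and $\rho_B^\ast$, and bound its single singular value by $\|\rho_A\|_2\,\|\rho_B\|_2\leq 1$ via the purity bound. The paper itself gives no proof, deferring entirely to Refs.~\cite{chen_matrix_2003, rudolph_further_2005}, and your argument is precisely the one found there, so nothing is missing.
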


The realignment criterion is neither stronger nor weaker than the PPT criterion (Thm.~\ref{thm:PPT-crit}), i.e., it does not detect all NPT entangled states but can detect PPT entanglement.

Considering Bell-diagonal states (cf. Sec.~\ref{sec:bell_system}), the realignment criterion seems very effective for the detection of PPT entanglement for $d=3$ and $d=4$ (cf. Refs.~\cite{popp_comparing_2023,popp_special_2024}).

\subsection{Entanglement Criteria based on SICs and GSICs}
\label{sec:SIC_criteria}
SICs, i.e., symmetric informationally complete POVMs, provide another way to derive entanglement criteria.
Let $\rho \in \denop(\hs_A \otimes \hs_B)$, $\Pi_\alpha^A$ and $\Pi_\beta^B$ be SICs (Def.~\ref{def:true_def_SIC}) for $\hs_A$ and $\hs_B$ and define the correlation tensor $\mathcal{P}$ with elements 
\begin{align}
\label{eq:corr_tensor}
    \mathcal{P}_{\alpha, \beta} := \myTr( \Pi_\alpha^A \otimes \Pi_\beta^B\;\rho).
\end{align} In Ref.~\cite{shang_enhanced_2018}, the following entanglement criterion is derived:
\begin{mytheorem}[ESIC Criterion]
    \begin{align}
        \myTr(\sqrt{\mathcal{P}^\dagger \mathcal{P}})
         > \frac{2}{\sqrt{d_A(d_A+1)d_B(d_B+1)}} \implies \rho \in \ENT.
    \end{align}
\end{mytheorem}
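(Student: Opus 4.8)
The plan is to prove the contrapositive: I will show that every $\rho\in\SEP$ satisfies $\myTr(\sqrt{\mathcal{P}^\dagger\mathcal{P}}) \le \frac{2}{\sqrt{d_A(d_A+1)d_B(d_B+1)}}$, so that any strict violation forces $\rho\in\ENT$. The key observation is that $\myTr(\sqrt{\mathcal{P}^\dagger\mathcal{P}})$ is exactly the trace (nuclear) norm $\|\mathcal{P}\|_1$, i.e. the sum of the singular values of the real matrix $\mathcal{P}=(\mathcal{P}_{\alpha,\beta})$; the whole problem thus reduces to maximizing $\|\mathcal{P}\|_1$ over the separable states. I read the tensor entries as the joint SIC outcome probabilities, built from the genuine POVM elements $E_\alpha^{A}=\Pi_\alpha^{A}/d_A$ and $E_\beta^{B}=\Pi_\beta^{B}/d_B$ (cf. Def.~\ref{def:true_def_SIC}), which fixes the normalization entering the threshold.

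First I would treat a product state $\rho=\rho_A\otimes\rho_B$. Then the tensor factorizes, $\mathcal{P}_{\alpha,\beta}=\myTr(E_\alpha^{A}\rho_A)\,\myTr(E_\beta^{B}\rho_B)=p_\alpha q_\beta$, so $\mathcal{P}=\mathbf{p}\,\mathbf{q}^{T}$ is rank one with a single singular value $\|\mathbf{p}\|_2\|\mathbf{q}\|_2$; hence $\|\mathcal{P}\|_1=\|\mathbf{p}\|_2\,\|\mathbf{q}\|_2$. The crucial ingredient is a purity identity for SICs. Using that a SIC is a complex-projective $2$-design (equivalently, directly from the defining relation~\eqref{def:SIC}), one has $\sum_\alpha \Pi_\alpha\otimes\Pi_\alpha=\tfrac{d}{d+1}(\id+\mathbb{S})$ with $\mathbb{S}$ the swap operator, so that $\sum_\alpha \myTr(\Pi_\alpha\rho)^2=\myTr\!\big[(\sum_\alpha\Pi_\alpha\otimes\Pi_\alpha)(\rho\otimes\rho)\big]=\tfrac{d}{d+1}\big(1+\myTr(\rho^2)\big)$. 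Dividing by $d^2$ for the POVM normalization gives $\|\mathbf{p}\|_2^2=\sum_\alpha\myTr(E_\alpha^{A}\rho_A)^2=\tfrac{1+\myTr(\rho_A^2)}{d_A(d_A+1)}\le\tfrac{2}{d_A(d_A+1)}$, with equality iff $\rho_A$ is pure (since $\myTr(\rho_A^2)\le1$), and analogously for $\mathbf{q}$. Multiplying the two bounds yields $\|\mathcal{P}\|_1\le\frac{2}{\sqrt{d_A(d_A+1)d_B(d_B+1)}}$ for every product state.

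Second, I would extend this to all of $\SEP$ by convexity. Writing $\rho=\sum_i\lambda_i\,\rho_A^{i}\otimes\rho_B^{i}$ with $\lambda_i\ge0$, $\sum_i\lambda_i=1$, linearity of the trace gives $\mathcal{P}=\sum_i\lambda_i\mathcal{P}^{i}$, and the triangle inequality for the trace norm gives $\|\mathcal{P}\|_1\le\sum_i\lambda_i\|\mathcal{P}^{i}\|_1\le\frac{2}{\sqrt{d_A(d_A+1)d_B(d_B+1)}}$, applying the product-state bound term by term. Taking the contrapositive completes the argument.

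The step I expect to be the main obstacle is establishing the SIC purity identity $\sum_\alpha\myTr(\Pi_\alpha\rho)^2=\tfrac{d}{d+1}(1+\myTr(\rho^2))$ from \eqref{def:SIC} and keeping the factor-of-$d$ POVM normalization straight so that the bound saturates precisely at pure product states; this is exactly where the special geometry of SICs (their $2$-design property) is really used. By contrast, the rank-one factorization for product states and the triangle inequality for $\|\cdot\|_1$ that promotes the bound to all separable states are routine. It is also worth recording explicitly why pure states maximize the product-state expression, namely $\myTr(\rho^2)\le1$ with equality iff $\rho$ is pure, since this is what makes the separable bound tight.
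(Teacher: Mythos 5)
Your argument is correct, and it is essentially the standard proof of this criterion; the paper itself only states the theorem with a citation to Ref.~\cite{shang_enhanced_2018} and gives no proof, and the argument there is exactly the one you reconstruct: identify $\myTr(\sqrt{\mathcal{P}^\dagger\mathcal{P}})$ with the trace norm $\|\mathcal{P}\|_1$, reduce to product states via linearity of $\rho\mapsto\mathcal{P}$ and the triangle inequality, note that $\mathcal{P}$ is rank one for product states, and bound its single singular value through the $2$-design identity $\sum_\alpha\Pi_\alpha\otimes\Pi_\alpha=\tfrac{d}{d+1}(\id+\mathbb{S})$, which yields $\sum_\alpha\myTr(\Pi_\alpha\rho)^2=\tfrac{d}{d+1}\bigl(1+\myTr(\rho^2)\bigr)$. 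All of these steps check out, including the saturation at pure product states.

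One point is worth recording explicitly. You observed that the stated threshold $2/\sqrt{d_A(d_A+1)d_B(d_B+1)}$ is only consistent if the correlation tensor is built from the POVM elements $E_\alpha=\Pi_\alpha/d$, and you adopted that normalization. The paper's Eq.~\eqref{eq:corr_tensor}, however, literally defines $\mathcal{P}_{\alpha,\beta}=\myTr(\Pi_\alpha^A\otimes\Pi_\beta^B\,\rho)$ with the unnormalized rank-one projectors; with that definition the separable bound would be $2\sqrt{d_Ad_B}/\sqrt{(d_A+1)(d_B+1)}$, larger by a factor of $d_Ad_B$. So the discrepancy lies in the paper's definition of $\mathcal{P}$, not in your proof; your normalization is the one that makes the theorem as stated (and the original reference) correct.
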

A similar criterion can be stated for \emph{general symmetric informationally complete POVMs (GSICs)}. Compared to SICs, for GSICs the POVM elements $\lbrace \Pi^{A}_{\alpha} \rbrace$ are not required to be of rank-1, but instead satisfy 
\begin{align}
    \myTr(\Pi^{A}_\alpha \Pi^A_{\alpha}) &= a_{A} \\
    \myTr(
        \Pi^{A}_{\alpha}
        \Pi^{A}_{\gamma}
        )
       &= \frac{1-d_{A}a_{A}} {d_{A}(d^2_{A}-1)}, ~\text{for } \alpha \neq \gamma
\end{align}
for system $\hs_A$ and similarly for POVM elements $\lbrace \Pi^{B}_{\beta} \rbrace$ for $\hs_B$. With the corresponding correlation tensor \eqref{eq:corr_tensor}, the GSIC criterion reads \cite{lai_entanglement_2018}:
\begin{mytheorem}[GSIC Criterion]
    \begin{align}
        \myTr(\sqrt{\mathcal{P} \mathcal{P}^\dagger}) > 
        \sqrt{\frac{a_Ad^2_A+1}{d_A(d_A+1)}} \sqrt{\frac{a_Bd^2_B+1}{d_B(d_B+1)}} 
        \implies \rho \in \ENT.
    \end{align}
\end{mytheorem}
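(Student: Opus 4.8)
The plan is to prove the contrapositive: I will show that every $\rho\in\SEP$ satisfies $\myTr(\sqrt{\mathcal{P}\mathcal{P}^\dagger}) \le \sqrt{\frac{a_A d_A^2+1}{d_A(d_A+1)}}\,\sqrt{\frac{a_B d_B^2+1}{d_B(d_B+1)}}$, so that any violation forces $\rho\in\ENT$. The first observation is that the left-hand side is exactly the trace (nuclear) norm $\|\mathcal{P}\|_1=\sum_k \sigma_k(\mathcal{P})$ of the correlation matrix, hence a genuine matrix norm. Since $\mathcal{P}_{\alpha,\beta}=\myTr(\Pi^A_\alpha\otimes\Pi^B_\beta\,\rho)$ is linear in $\rho$, writing a separable state as $\rho=\sum_i w_i\,\rho_A^i\otimes\rho_B^i$ gives $\mathcal{P}=\sum_i w_i\,\mathcal{P}^{(i)}$, and subadditivity of the norm together with $\sum_i w_i=1$ reduces the whole estimate to the single product-state case.

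Second, for a product state $\rho_A\otimes\rho_B$ the tensor factorizes as $\mathcal{P}_{\alpha,\beta}=p_\alpha q_\beta$ with $p_\alpha=\myTr(\Pi^A_\alpha\rho_A)$ and $q_\beta=\myTr(\Pi^B_\beta\rho_B)$; that is, $\mathcal{P}=p\,q^{T}$ is rank one. Its unique nonzero singular value is $\|p\|_2\|q\|_2$, so $\|\mathcal{P}\|_1=\|p\|_2\|q\|_2$ and the estimate splits cleanly into the two subsystems.

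Third, and this is the technical heart, I would bound $\|p\|_2^2=\sum_\alpha p_\alpha^2$ for an arbitrary single-system state using only the two defining inner products of the GSIC. The Gram matrix $G_{\alpha\gamma}=\myTr(\Pi^A_\alpha\Pi^A_\gamma)$ has the form $(a_A-b)\,\id+b\,J$ with $b=\frac{1-d_A a_A}{d_A(d_A^2-1)}$ and $J$ the all-ones matrix; because the $d_A^2$ elements are informationally complete they form an operator basis, so expanding $\rho_A$ in it and using $p=Gx$ yields $\myTr(\rho_A^2)=p^{T}G^{-1}p$. Inverting $G$ by Sherman–Morrison (where the convenient identity $(a_A-b)+b\,d_A^2=1/d_A$ drops out) together with the normalization $\sum_\alpha p_\alpha=\myTr(\rho_A)=1$ gives the affine relation $\sum_\alpha p_\alpha^2=\frac{a_A d_A^3-1}{d_A(d_A^2-1)}\,\myTr(\rho_A^2)+\frac{1-d_A a_A}{d_A^2-1}$. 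The coefficient of $\myTr(\rho_A^2)$ is non-negative precisely because Cauchy–Schwarz forces $a_A=\myTr((\Pi^A_\alpha)^2)\ge (\myTr\Pi^A_\alpha)^2/d_A=1/d_A^3$ (using $\myTr\Pi^A_\alpha=1/d_A$ by symmetry), so the purity bound $\myTr(\rho_A^2)\le 1$ gives $\sum_\alpha p_\alpha^2\le \frac{a_A d_A^2+1}{d_A(d_A+1)}$, attained on pure states; the same argument on $B$ finishes the product-state bound.

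Combining the steps, any separable $\rho$ obeys $\|\mathcal{P}\|_1\le \sum_i w_i\,\|p^{(i)}\|_2\|q^{(i)}\|_2\le \sqrt{\frac{a_A d_A^2+1}{d_A(d_A+1)}}\,\sqrt{\frac{a_B d_B^2+1}{d_B(d_B+1)}}$, which is the claimed inequality, and the contrapositive is the theorem. I expect the only delicate point to be the purity computation in the third step — getting the Gram-matrix inverse and the algebraic simplification right — while the reduction to product states and the rank-one singular-value computation are routine. As a consistency check, specializing to rank-one SICs normalized as POVM elements ($a=1/d^2$) recovers the ESIC bound $2/\sqrt{d_A(d_A+1)d_B(d_B+1)}$, confirming the constants.
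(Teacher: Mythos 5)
Your proof is correct: the reduction to product states by convexity of the trace norm, the rank-one factorization $\mathcal{P}=pq^{T}$ with $\|\mathcal{P}\|_1=\|p\|_2\|q\|_2$, and the Gram-matrix computation giving $\sum_\alpha p_\alpha^2=\frac{a_Ad_A^3-1}{d_A(d_A^2-1)}\myTr(\rho_A^2)+\frac{1-d_Aa_A}{d_A^2-1}\le\frac{a_Ad_A^2+1}{d_A(d_A+1)}$ all check out, including the sign condition $a_A\ge 1/d_A^3$ and the recovery of the ESIC constant. The paper itself states this theorem without proof, merely citing the original reference; your argument is essentially the standard derivation given there, so there is nothing to reconcile.
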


\subsection{Concurrence and Quasi-Pure Approximation}\label{quasipure}

The quasi-pure approximation $\mathcal{C}_{qp}$ \cite{bae_detection_2009} of the concurrence $\mathcal{C}$, presented in Ref.~\cite{wootters_entanglement_1998}, is another useful sufficient criterion to detect entanglement in systems of finite dimension. 

The concurrence, introduced in Ref.~\cite{wootters_entanglement_1998}, is an entanglement monotone closely related to the entanglement of formation (cf. Def.~\ref{def:entanglement_of_formation}). It can only be explicitly calculated for a pair of qubits, while for systems of larger size, only numerical solutions or estimates exist \cite{hiesmayr_multipartite_2008,wootters_entanglement_1998, mintert_concurrence_2005, mintert_measures_2005}. The quasi-pure approximation provides a lower bound for the concurrence of a quantum state
\begin{align}
    \mathcal{C}_{qp}(\rho) \leq \mathcal{C}(\rho).
\end{align}
Given the spectral decomposition  $\rho = \sum_{i=0}^{d-1} \lambda_i\; |\psi_i\rangle \langle\psi_i|$, a matrix $\mathcal{T}$ is constructed, whose singular values determine the quasi-pure approximation. 
Let $A := 4 \sum_{i<j, k<l} (|ikjl\rangle-|jkil\rangle-|iljk\rangle+|jlik\rangle) \times (h.c.)$ ($h.c$ being shorthand for hermitian conjugate) and let $|\psi_0\rangle$ be the dominant eigenvector. With $\ket{\chi} \propto A \ket{\psi_0} \otimes \ket{\psi_0}$, $\mathcal{T}$ is defined as
\begin{align}
    \label{eq:T}
    \mathcal{T}_{ij} := \sqrt{\lambda_i \lambda_j} = (\langle \psi_i | \otimes \langle \psi_j|) | \chi \rangle.
\end{align}
Let $\mathcal{S}_i$ be the singular values of $\mathcal{T}$ and denote the largest one by $\mathcal{S}_0$.
The quasi-pure approximation of the concurrence is defined as
\begin{align}
    \mathcal{C}_{qp}(\rho) := \text{max} \left(0,~\mathcal{S}_0 - \sum_{i>0} \mathcal{S}_i\right)\;.
\end{align}
For any state in the Magic Simplex, i.e., standard Bell-diagonal states (cf. Sec.~\ref{sec:magic_simplex}), there exists an explicit form for the singular values of $\mathcal{T}$ (cf. Ref.~\cite{bae_detection_2009}). 
Given a state of the form of Eq.~\eqref{eq:magic_simplex}, let $P_{n,m}$ be the Bell projector with the largest weight $c_{k,l}$. The singular values $\mathcal{S}_{k,l}$ are then explicitly given by
\begin{align}
    \mathcal{S}_{k,l} = \sqrt{\frac{d}{2(d-1)}c_{k,l}\left[\left( 1-\frac{2}{d}\right) c_{n,m} \delta_{k,n} \delta_{l,m} +\frac{1}{d^2}c_{2n-k,2m-l)} \right]}\;.
\end{align}
Any state satisfying $\mathcal{C}(\rho) > 0$ is entangled (cf. Ref.\cite{wootters_entanglement_1998}). Providing a lower bound, the quasi-pure approximation can therefore be directly used to detect entanglement:
\begin{mytheorem}[Quasi-Pure Concurrence Criterion]
\label{thm:qp_concurrence_criterion}
\begin{align}
    \mathcal{C}_{qp}(\rho) > 0 \implies \rho \in \ENT
\end{align}    
\end{mytheorem}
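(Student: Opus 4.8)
The plan is to obtain the statement as an immediate corollary of two facts recorded just above: the lower-bound property $\mathcal{C}_{qp}(\rho) \leq \mathcal{C}(\rho)$, valid for every $\rho \in \denop(\hs_A \otimes \hs_B)$, and Wootters' characterization that a strictly positive concurrence certifies entanglement, i.e.\ $\mathcal{C}(\rho) > 0 \Rightarrow \rho \in \ENT$. No new construction is required; the argument is a short chain of implications, so I would simply assemble these two ingredients.

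Concretely, I would assume the hypothesis $\mathcal{C}_{qp}(\rho) > 0$ and combine it with the lower bound to write
\begin{align}
    \mathcal{C}(\rho) \geq \mathcal{C}_{qp}(\rho) > 0,
\end{align}
so that $\mathcal{C}(\rho) > 0$. Invoking the concurrence criterion then yields $\rho \in \ENT$, which is precisely the assertion. Since both ingredients are already available from the surrounding text, there is no obstacle at this level of the argument, and the implication is essentially one line.

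The substance of the result sits entirely in justifying the lower bound $\mathcal{C}_{qp} \leq \mathcal{C}$, which the excerpt imports from Ref.~\cite{bae_detection_2009}; if one wanted a self-contained proof, the real work would be there. One defines $\mathcal{C}(\rho)$ by the convex-roof extension of the pure-state concurrence, expresses the pure-state concurrence through the antisymmetric operator $A$, and then shows that the quantity $\mathcal{S}_0 - \sum_{i>0}\mathcal{S}_i$ built from the singular values of $\mathcal{T}$ never exceeds the convex-roof minimum. Controlling that minimization, in particular ruling out decompositions of $\rho$ that could push the concurrence below $\mathcal{C}_{qp}$, is the genuinely delicate step; once it is granted, the criterion above follows trivially. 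Moreover, the explicit singular values $\mathcal{S}_{k,l}$ available for Magic-Simplex states make the hypothesis $\mathcal{C}_{qp}(\rho) > 0$ directly checkable, so the criterion becomes an effective, computable sufficient condition for entanglement on that family.
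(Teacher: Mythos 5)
Your proposal is correct and matches the paper's own reasoning exactly: the text introduces the criterion as a direct consequence of the lower bound $\mathcal{C}_{qp}(\rho) \leq \mathcal{C}(\rho)$ together with Wootters' fact that $\mathcal{C}(\rho) > 0$ certifies entanglement, with the lower bound itself imported from Ref.~\cite{bae_detection_2009} just as you note. No gap; your remark that the real substance lies in establishing the lower bound is apt but not required for the statement as posed.
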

Like the realignment criterion (Thm.~\ref{thm:realignment_criterion}), this entanglement criterion can detect PPT entangled states but generally not all NPT entangled states. In Ref.~\cite{popp_comparing_2023}, it is shown that the overlap of PPT entangled Bell-diagonal states detected by both the quasi-pure concurrence criterion and the realignment criterion is relatively small compared to the joint number of detected bound entangled states. This indicates a complementary behavior for these criteria, as a state detected by one criterion is likely not detected by the other (cf. Sec.~\ref{boundBDS}).

\subsection{Range Criterion} \label{sec:range_crit}
The range criterion is of historical importance as it led to the discovery of the first known PPT entangled state.
It is formalized as follows \cite{horodecki_separability_1997}:
\begin{mytheorem}[Range Criterion] \label{thm:range_crit}
    Let $\rho\in\denop(\hs_A \otimes \hs_B)$ be separable with $\dim(\hs_A \otimes \hs_B) = d^2$.
    Then there exists a set $\{|\psi_i\rangle \otimes |\phi_j\rangle \}_{i,j=0}^{n-1} \subset \hs_A \otimes \hs_B$ of $n\leq d^2$ product vectors such that
    \begin{enumerate}[label=(\alph*)]
        \item the vectors $\{|\psi_i\rangle \otimes |\phi_j\rangle \}_{i,j=0}^{n-1}$ span the range of $\rho$ and
        \item the vectors $\{|\psi_i\rangle \otimes |\phi^*_j\rangle \}_{i,j=0}^{n-1}$ span the range of $\rho^\Gamma$.
    \end{enumerate}
    Conversely, if no such vectors exist, $\rho$ is entangled.
\end{mytheorem}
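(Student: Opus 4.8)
The plan is to read off both spanning conditions directly from a product-state decomposition of a separable $\rho$, using the elementary fact that the range of a sum of positive semi-definite operators is the span of the individual ranges, and then to observe that partial transposition relates the two ranges through a single antiunitary, so that one spanning set simultaneously serves (a) and (b). The converse is then just the contrapositive.

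First I would invoke Def.~\ref{def:sep_and_ent} to write a separable $\rho$ as a finite convex combination $\rho = \sum_i p_i\,|\psi_i\rangle\langle\psi_i|\otimes|\phi_i\rangle\langle\phi_i|$ with $p_i>0$ and $|\psi_i\rangle,|\phi_i\rangle$ normalized (finitely many terms by Carathéodory). Each summand is positive semi-definite with one-dimensional range $\mathrm{span}\{|\psi_i\rangle\otimes|\phi_i\rangle\}$. The key lemma is that for positive semi-definite $A_i$ one has $\ker(\sum_i A_i)=\bigcap_i\ker A_i$: if $(\sum_i A_i)|v\rangle=0$ then $\sum_i\langle v|A_i|v\rangle=0$ forces each $\langle v|A_i|v\rangle=0$, hence $A_i|v\rangle=0$. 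Taking orthogonal complements gives $\mathrm{range}(\rho)=\mathrm{span}\{|\psi_i\rangle\otimes|\phi_i\rangle\}_i$. I would then prune to a subset of size $n=\mathrm{rank}(\rho)\le d^2$ forming a basis of $\mathrm{range}(\rho)$, which yields both condition (a) and the bound on $n$.

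For (b) I would take the partial transpose on subsystem $B$; this matches the conjugated $B$-factors $|\phi_j^*\rangle$ appearing in the statement, and it is equivalent to $\rho^\Gamma$ for the purposes of the range criterion (the two ranges are complex conjugates of one another and share the PPT status, cf. the remark preceding Def.~\ref{def:ppt-property}). Since $(|\phi_i\rangle\langle\phi_i|)^{T}=|\phi_i^*\rangle\langle\phi_i^*|$, we get $\rho^{T_B}=\sum_i p_i\,|\psi_i\rangle\langle\psi_i|\otimes|\phi_i^*\rangle\langle\phi_i^*|$, again a sum of positive semi-definite product terms. The crucial structural observation is that the antiunitary $K=\id_A\otimes C_B$, with $C_B$ complex conjugation in the chosen basis of $\hs_B$, satisfies $K(\mathrm{span}\,X)=\mathrm{span}(K X)$ for any set $X$ and maps $\mathrm{range}(\rho)$ onto $\mathrm{range}(\rho^{T_B})$. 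Hence the very same basis product vectors, with their $B$-factor conjugated, automatically span $\mathrm{range}(\rho^{T_B})$, giving (b) with no increase in $n$; as a byproduct this shows $\mathrm{rank}(\rho^{T_B})=\mathrm{rank}(\rho)$ for separable states. The final ``conversely'' clause is then immediate as the contrapositive of the implication just proved.

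I expect the main obstacle to be bookkeeping of conjugation conventions rather than any hard estimate: one must transpose precisely the tensor factor whose vectors get conjugated in the statement, track $(|\phi\rangle\langle\phi|)^{T}=|\phi^*\rangle\langle\phi^*|$ correctly, and verify that conjugation commutes with forming complex spans. This last point is what makes a single spanning set work for both ranges; the positive semi-definite range lemma itself is routine.
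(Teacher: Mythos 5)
Your decomposition of $\rho$ into pure product states, the lemma $\ker(\sum_i A_i)=\bigcap_i\ker A_i$ for positive semi-definite summands, and the resulting identifications $\mathrm{range}(\rho)=\mathrm{span}\{|\psi_i\rangle\otimes|\phi_i\rangle\}_i$ and $\mathrm{range}(\rho^{T_B})=\mathrm{span}\{|\psi_i\rangle\otimes|\phi_i^*\rangle\}_i$ are all correct, and taken together they already prove (a) and (b) for the \emph{full} set of product vectors in the decomposition. The gap is the pruning step. The operation $|\psi\rangle\otimes|\phi\rangle\mapsto|\psi\rangle\otimes|\phi^*\rangle$ is not induced by any linear or antilinear operator on $\hs_A\otimes\hs_B$: the putative map $K=\id_A\otimes C_B$ is not even well defined, since $(c|\psi\rangle)\otimes|\phi\rangle=|\psi\rangle\otimes(c|\phi\rangle)$ would have to be sent simultaneously to $c\,|\psi\rangle\otimes|\phi^*\rangle$ and to $\bar{c}\,|\psi\rangle\otimes|\phi^*\rangle$. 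Consequently partial conjugation does not preserve linear dependence relations among product vectors, your identity ``$K(\mathrm{span}\,X)=\mathrm{span}(KX)$'' has no meaning here, and a subset that is a basis of $\mathrm{range}(\rho)$ need not, after conjugation, span $\mathrm{range}(\rho^{T_B})$. Your claimed byproduct $\mathrm{rank}(\rho^{T_B})=\mathrm{rank}(\rho)$ for separable states is in fact false and is the cleanest way to see the problem: the normalized projector onto the symmetric subspace of $\mathbb{C}^d\otimes\mathbb{C}^d$ is separable and has rank $d(d+1)/2$, while its partial transpose $\tfrac{1}{2}(\id+d\,P_{0,0})$ (up to normalization) has full rank $d^2$. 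A concrete small instance: for $d=2$ take the four product vectors $|00\rangle$, $|11\rangle$, $|{+}{+}\rangle$, $|{i+}\rangle\otimes|{i+}\rangle$ with $|i{+}\rangle=(|0\rangle+i|1\rangle)/\sqrt{2}$; the first three form a basis of their common (symmetric, $3$-dimensional) span, which contains the fourth, yet $|{i+}\rangle\otimes|{i+}\rangle^{*}$ has an antisymmetric component and lies outside the span of the first three conjugated vectors. So discarding the fourth vector destroys property (b).

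The fix is simply not to prune down to $\mathrm{rank}(\rho)$ vectors: keep the whole finite family from the decomposition (finite by Carath\'eodory), which satisfies (a) and (b) simultaneously by your own range lemma; if the cardinality bound is wanted, one must select a subset whose images span \emph{both} ranges at once (a two-sided exchange argument), not a basis of $\mathrm{range}(\rho)$ alone. The converse-as-contrapositive remark is fine. Note the paper itself states this theorem with a citation and gives no proof, so there is no in-text argument to compare against; the above concerns the internal correctness of your proposal.
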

Here $\rho^\Gamma$ is the partial transposition of $\rho$ with respect to $\hs_B$, and $|\phi^*_j\rangle$ denotes the complex conjugate of $|\phi_j\rangle$ in the basis where the partial transposition is taken.
An analogous version of the theorem also holds for the partial transposition $\rho^{T_A}$ of $\rho$ with respect to $\hs_A$.
In this case, the vectors $\{|\psi^*_i\rangle \otimes |\phi_j\rangle \}$ span the range of $\rho^{T_A}$ if $\rho$ is separable.

\subsection{Reduction Criterion and more Criteria from PNCP Maps}
\label{sec:reduction_criterion}
Like the PPT criterion (Thm.~\ref{thm:PPT-crit}), the reduction criterion (Ref.~\cite{horodecki_reduction_1999}) is based on a PNCP map (cf. Sec.~\ref{sec:PNCP-maps_for_ent_detection}). Consider the PNCP map $\Lambda: \linop(\hs) \rightarrow \linop(\hs)$, $\dim(\hs) = d$, acting as
\begin{align}
    \Lambda(\sigma) := \id_d \myTr(\sigma) - \sigma \;.
\end{align}
$\Lambda$ is a PNCP map, so it provides a necessary condition for separability for density matrices:
\begin{align}
    \label{eq:PNCP_SEP_necessary_condition}
    \rho \in \mathrm{SEP} \implies (\Lambda \otimes \id_B) (\rho) \geq 0 \;.
\end{align}
The contrapositive of this statement can be formulated using the reduced density matrix $\rho_B := \pTrA(\rho)$:
\begin{mytheorem}[Reduction Criterion]
\label{thm:reduction_criterion}
\begin{align}
    \id_A \otimes \rho_B  - \rho \ngeq 0 \implies \rho \in \mathrm{ENT} \;.
\end{align}
\end{mytheorem}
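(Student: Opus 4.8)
The plan is to read off the statement as the contrapositive of the necessary separability condition~\eqref{eq:PNCP_SEP_necessary_condition}, which in turn follows from Thm.~\ref{thm:alt_sep_ent} once the map $\Lambda(\sigma) = \id_d\,\myTr(\sigma) - \sigma$ is shown to be positive. Concretely, I would proceed in three steps: first verify positivity of $\Lambda$, then compute the explicit form of $(\Lambda\otimes\id_B)(\rho)$, and finally negate the resulting implication.

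To establish that $\Lambda$ is a positive map, take any $\sigma\geq 0$ and diagonalize it as $\sigma = \sum_i \lambda_i\,|e_i\rangle\langle e_i|$ with eigenvalues $\lambda_i\geq 0$. Since $\id_d$ commutes with $\sigma$, the operator $\Lambda(\sigma) = \myTr(\sigma)\,\id_d - \sigma$ is diagonal in the same eigenbasis, with eigenvalues $\myTr(\sigma) - \lambda_i = \sum_{j\neq i}\lambda_j \geq 0$. Hence $\Lambda(\sigma)\geq 0$, so $\Lambda$ is positive. It is essential that $\Lambda$ is merely positive and not completely positive: if it were CP, the induced condition would hold for every state by Def.~\ref{def:P_and_CP} and detect nothing.

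Next I would compute the action of $\Lambda\otimes\id_B$ on a general state. Expanding $\rho = \sum_{i,j,k,l}\rho_{ik,jl}\,|i\rangle\langle j|\otimes|k\rangle\langle l|$ and using $\myTr(|i\rangle\langle j|)=\delta_{ij}$, linearity gives $(\Lambda\otimes\id_B)(\rho) = \id_A\otimes\pTrA(\rho) - \rho = \id_A\otimes\rho_B - \rho$, where $\rho_B=\pTrA(\rho)$ matches the definition of the partial trace used earlier. Because $\Lambda$ is positive, Thm.~\ref{thm:alt_sep_ent} yields $\rho\in\SEP \Rightarrow (\Lambda\otimes\id_B)(\rho)\geq 0$, i.e.\ $\id_A\otimes\rho_B - \rho \geq 0$. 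Taking the contrapositive gives precisely $\id_A\otimes\rho_B - \rho \ngeq 0 \Rightarrow \rho\in\ENT$, which is the claim.

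The only genuinely nontrivial ingredient is the positivity of $\Lambda$, and even that reduces to a one-line eigenvalue estimate; the remaining steps are bookkeeping. If anything, the subtle conceptual point is not a computational obstacle but the interpretive one already flagged above: positivity without complete positivity is exactly what makes the induced inequality a nontrivial entanglement test, while the identification of $(\Lambda\otimes\id_B)(\rho)$ with $\id_A\otimes\rho_B - \rho$ is a direct consequence of the definition of the partial trace.
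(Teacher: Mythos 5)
Your proposal is correct and follows essentially the same route as the paper: the paper likewise derives the criterion by noting that $\Lambda(\sigma)=\id_d\myTr(\sigma)-\sigma$ is a positive (but not completely positive) map, invoking Thm.~\ref{thm:alt_sep_ent} to get $\rho\in\mathrm{SEP}\Rightarrow(\Lambda\otimes\id_B)(\rho)=\id_A\otimes\rho_B-\rho\geq 0$, and taking the contrapositive. You supply slightly more detail than the paper by explicitly proving the positivity of $\Lambda$ via the eigenvalue estimate $\myTr(\sigma)-\lambda_i=\sum_{j\neq i}\lambda_j\geq 0$, which the paper only asserts.
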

Ref.~\cite{horodecki_reduction_1999} shows that the reduction criterion is weaker than the PPT criterion. Any state detected as entangled by the reduction criterion is also detected by the PPT criterion. Consequently, NPT entangled states exist that are not detected by the reduction criterion. Additionally, it is shown that all entangled states detected by the reduction criterion (Thm.~\ref{thm:reduction_criterion}) can be \emph{distilled}, so only states that are detected by the PPT criterion and not by the reduction criterion are candidates for NPT bound entangled states.

However, there exist structurally similar criteria based on PNCP maps that can detect bound entanglement. In Ref.~\cite{bhattacharya_generating_2021}, a family of PNCP maps is presented that allows to detect two PPT entangled qutrits due to Eq.~\eqref{eq:PNCP_SEP_necessary_condition}.

\section{Characteristics of Bound Entanglement} \label{charOfBE}

This chapter reviews several key properties and characteristics of bound entangled states.
The goal of this investigation is twofold.
The first is a logical follow-up to the last chapter, which summarizes our primary methods of detecting bound entanglement and shows that there are still many challenges.
In particular, a hierarchical structure between the presented entanglement criteria is absent.
Solving these issues requires further insight into the entanglement structure and the properties of bound entangled states.
In this regard, the following sections clarify the role of impurity for bound entangled states and present results on the (nonzero) volume of the set BE, showing that bound entanglement cannot be neglected in general.
Moreover, Sec.~\ref{sec:BE_and_ent_measures} introduces commonly used entanglement measures, and by showing the inequivalence of two such measures, the ``bound nature'' of the entanglement in bound entangled states is elucidated.
Furthermore, the relation of bound entanglement to Bell inequalities and special relativity is discussed.
The second goal is to address the role of bound entangled states in known quantum information processing tasks such as quantum teleportation or quantum key distribution.
The advantage of bound entangled resource states over fully separable ones is limited to some of these schemes.
Further insight into the structure of bound entanglement and its characteristics may contribute to a better understanding of entanglement as a resource and, consequently, to novel applications in quantum technologies.

\subsection{Rank of Bound Entangled States}

From the discussion in Sec.~\ref{sec:ent_dist}, it is clear that all bound entangled states must be mixed.
This follows from the fact that every entangled pure state has a Schmidt rank greater than one and can thus be distilled with a nonzero distillation rate.
Consequently, there are no pure, i.e., rank-1 bound entangled states.
The logical follow-up question is: what are the general restrictions on the rank for bipartite bound entangled states?

A first advance to answering this question is presented in Ref.~\cite{horodecki_rank_2003}.
The authors prove that if the rank of a bipartite quantum state $\rho$ is strictly less than the rank of either of its reduced density matrices $\pTrA(\rho)$ and $\pTrB(\rho)$, then $\rho$ is distillable.
The contrapositive of this statement is formalized by
\begin{mytheorem} \label{thm:rank_BE_undistillable}
    Let $\rho \in \denop(\hs_A\otimes\hs_B)$ be a bipartite quantum state and let $\rho_A = \pTrB(\rho)$ and $\rho_B = \pTrA(\rho)$ be its reduced states with respect to the Hilbert spaces $\hs_A$ and $\hs_B$, respectively.
    If $\rho$ is undistillable, then     
    $\mathrm{rank}(\rho) \geq \max\{\mathrm{rank}(\rho_A), \mathrm{rank}(\rho_B)\}$.
\end{mytheorem}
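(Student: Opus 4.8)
The plan is to prove the contrapositive: if $\mathrm{rank}(\rho) < \mathrm{rank}(\rho_B)$, then $\rho$ is distillable, and symmetrically for $\rho_A$. The key observation I would exploit is that this rank deficiency forces a \emph{violation} of the reduction criterion (Thm.~\ref{thm:reduction_criterion}), and, as recalled in Sec.~\ref{sec:reduction_criterion}, every state detected by the reduction criterion is distillable. Concretely, it suffices to prove the sub-claim in contrapositive form: assume the reduction criterion \emph{holds}, i.e.\ $\rho \le \id_A \otimes \rho_B$, and deduce $\mathrm{rank}(\rho) \ge \mathrm{rank}(\rho_B)$. Reading this backwards then gives that $\mathrm{rank}(\rho) < \mathrm{rank}(\rho_B)$ implies $\id_A \otimes \rho_B - \rho \ngeq 0$, whence distillability.

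First I would record a routine support inclusion. Since $\rho \ge 0$, any $|b\rangle \in \ker(\rho_B)$ satisfies $(\id_A \otimes \langle b|)\,\rho\,(\id_A \otimes |b\rangle) = 0$ (a positive operator of zero trace), so $\hs_A \otimes \ker(\rho_B) \subseteq \ker(\rho)$ and therefore $\mathrm{range}(\rho) \subseteq \hs_A \otimes \mathrm{supp}(\rho_B)$. The heart of the argument is then a conjugation trick that converts the operator inequality into a rank bound. Let $\rho_B^{-1/2}$ denote the inverse square root of $\rho_B$ on its support, set $Q := \id_A \otimes \rho_B^{-1/2}$, and define $\tilde\rho := Q\,\rho\,Q$. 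Since $\rho_B^{-1/2}$ acts only on $B$ and commutes with $\pTrA$, one finds $\pTrA(\tilde\rho) = \rho_B^{-1/2}\,\rho_B\,\rho_B^{-1/2} = \Pi_B$, the projector onto $\mathrm{supp}(\rho_B)$.

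The hypothesis $\rho \le \id_A \otimes \rho_B$, conjugated by the Hermitian $Q$, gives $\tilde\rho \le Q(\id_A \otimes \rho_B)Q = \id_A \otimes \Pi_B \le \id$, so every eigenvalue of the positive operator $\tilde\rho$ lies in $[0,1]$ and hence $\myTr(\tilde\rho) \le \mathrm{rank}(\tilde\rho)$. By the support inclusion above, $Q$ is injective on $\mathrm{range}(\rho)$, so $\mathrm{rank}(\tilde\rho) = \mathrm{rank}(\rho)$. Combining these facts,
\[
\mathrm{rank}(\rho_B) = \myTr(\Pi_B) = \myTr\big(\pTrA(\tilde\rho)\big) = \myTr(\tilde\rho) \le \mathrm{rank}(\tilde\rho) = \mathrm{rank}(\rho).
\]
This proves the sub-claim, and its contrapositive yields distillability whenever $\mathrm{rank}(\rho) < \mathrm{rank}(\rho_B)$. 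Applying the same PNCP map to subsystem $B$ instead (equivalently, relabelling the two parties, under which both distillability and the target $|\Phi^+\rangle$ are invariant) gives the analogous implication involving $\rho_A$. Taken together, these show that an undistillable $\rho$ must satisfy $\mathrm{rank}(\rho) \ge \max\{\mathrm{rank}(\rho_A), \mathrm{rank}(\rho_B)\}$.

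I expect the main obstacle to be spotting the conjugation by $\rho_B^{-1/2}$: this is precisely what turns the positivity hypothesis into the clean eigenvalue bound $\tilde\rho \le \id$, and hence into $\myTr(\tilde\rho) \le \mathrm{rank}(\tilde\rho)$. The accompanying bookkeeping, namely the support inclusion and the rank preservation $\mathrm{rank}(\tilde\rho) = \mathrm{rank}(\rho)$ under the (only partially invertible) map $Q$, is routine but must be verified carefully so that ranks are genuinely preserved. The substantive external input on which the whole reduction rests is the cited fact that reduction-criterion-detected states are distillable; everything else is elementary linear algebra.
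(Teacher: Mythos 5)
Your proposal is correct. Note, however, that the paper does not actually prove Thm.~\ref{thm:rank_BE_undistillable}: it simply states it as the contrapositive of a result it attributes to Ref.~\cite{horodecki_rank_2003}, so there is no in-text argument to compare against. What you have written is a valid, self-contained reconstruction of the standard proof, and it follows the same strategy as that reference: reduce everything to the facts that (i) violation of the reduction criterion implies distillability (quoted in Sec.~\ref{sec:reduction_criterion} from Ref.~\cite{horodecki_reduction_1999}), and (ii) $\rho \le \id_A \otimes \rho_B$ forces $\mathrm{rank}(\rho) \ge \mathrm{rank}(\rho_B)$. Your linear-algebra step is sound: the support inclusion $\mathrm{range}(\rho) \subseteq \hs_A \otimes \mathrm{supp}(\rho_B)$ is the standard consequence of positivity, the conjugation by $Q = \id_A \otimes \rho_B^{-1/2}$ does give $0 \le \tilde\rho \le \id$ and $\pTrA(\tilde\rho) = \Pi_B$, the rank is preserved because $Q$ is injective on $\mathrm{range}(\rho^{1/2}) = \mathrm{range}(\rho)$ (write $\tilde\rho = (Q\rho^{1/2})(Q\rho^{1/2})^\dagger$ to make this airtight), and $\myTr(\tilde\rho) \le \mathrm{rank}(\tilde\rho)$ then closes the chain. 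The symmetric case for $\rho_A$ and the final passage to the contrapositive are handled correctly. The only external inputs are exactly the ones you flag, both of which the paper supplies with citations.
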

In specific cases even the strict inequality $\mathrm{rank}(\rho) > \max\{\mathrm{rank}(\rho_A), \mathrm{rank}(\rho_B)\}$ holds for undistillable states.
In particular, it is shown in Ref.~\cite{chen_distillability_2011} that this includes all states violating the PPT criterion (Thm.~\ref{thm:PPT-crit}).
Therefore, no NPT bound entangled states with $\mathrm{rank}(\rho) = \max\{\mathrm{rank}(\rho_A), \mathrm{rank}(\rho_B)\}$ exist.

As observed in Ref.~\cite{horodecki_rank_2003}, a consequence of Thm.~\ref{thm:rank_BE_undistillable} is that undistillable quantum states with rank $n$ have support on at most an $n\times n$-dimensional subspace of the underlying bipartite Hilbert space.
Consequently, an undistillable rank-2 state can be viewed as a qubit state, and in this case, Thm.~\ref{thm:qubit_distillation} implies that the state is separable.
Thus, there are no rank-2 bound entangled states.
Additionally, it is demonstrated in Ref.~\cite{horodecki_operational_2000} that no rank-3 bound entangled states satisfying the PPT criterion exist.
The authors of Ref.~\cite{chen_rank-three_2008} show that this also holds for potential rank-3 NPT bound entangled states.

These findings leave open the possibility of bound entangled states with rank 4 or greater.
From explicit constructions (cf. Sec.~\ref{sec:construction} and Ref.~\cite{brus_construction_2000, divincenzo_unextendible_2003}), we know that such rank-4 states do exist.
The authors of Ref.~\cite{bej_unextendible_2021, bandyopadhyay_non-full-rank_2005} moreover argue that one can use these states of minimum rank to construct bound entangled states of rank 5 up to the full dimension $d_A d_B$ of the bipartite Hilbert space.
The above results can be summarized by
\begin{mytheorem} \label{thm:rank_BE}
    No bound entangled states exist with rank 1, 2, or 3.
    Moreover, bound entangled states with rank 4 and greater can be explicitly constructed.
\end{mytheorem}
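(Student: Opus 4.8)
The plan is to split the statement into its two halves and treat them separately: the nonexistence of bound entangled states of rank at most $3$, which I would establish by handling ranks $1$, $2$, and $3$ in turn, and the existence of bound entangled states of every rank from $4$ up to $d_Ad_B$, which I would establish by exhibiting explicit constructions. The unifying observation for the nonexistence part is the support confinement implied by Thm.~\ref{thm:rank_BE_undistillable}: if $\rho$ is undistillable of rank $n$, then $\max\{\mathrm{rank}(\rho_A),\mathrm{rank}(\rho_B)\}\leq n$, so the support of $\rho$ lies in a subspace $\hs_A'\otimes\hs_B'$ with $\dim(\hs_A')\leq n$ and $\dim(\hs_B')\leq n$. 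Hence any candidate bound entangled state of rank $n$ may be regarded as a state on $\mathbb{C}^n\otimes\mathbb{C}^n$, and I would try to reduce each low-rank case to an already-solved local dimension.

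For rank $1$ the state is pure, $\rho=|\psi\rangle\langle\psi|$; if it is entangled its Schmidt rank exceeds one, so by the pure-state distillation theory recalled in Sec.~\ref{sec:ent_dist} it is distillable, and no rank-$1$ bound entangled state exists. For rank $2$ the support confinement places $\rho$ on a $2\times2$ subspace, so it is effectively a two-qubit state; Thm.~\ref{thm:qubit_distillation} then yields the dichotomy that every entangled state there is distillable, so an undistillable rank-$2$ state must be separable, ruling out rank-$2$ bound entanglement.

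The genuinely hard case, and the main obstacle, is rank $3$: here the confinement only reaches a $3\times3$ subspace, and two qutrits do support bound entanglement, so no reduction to a solved dimension is available. I would split according to the partial transpose. For an NPT rank-$3$ state one must prove distillability directly, which is the content of Ref.~\cite{chen_rank-three_2008}, where it is shown that every NPT state of rank at most $3$ is distillable. For a PPT rank-$3$ state one must instead prove separability; this is established in Ref.~\cite{horodecki_operational_2000} by a structural analysis of the ranges of $\rho$ and $\rho^\Gamma$ (in the spirit of the range criterion, Thm.~\ref{thm:range_crit}), extracting enough product vectors from the range to exhibit a separable decomposition. Combining the two subcases rules out rank-$3$ bound entanglement. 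I expect essentially all of the work to reside in the product-vector bookkeeping of the PPT subcase and in the explicit distillation protocol of the NPT subcase; both are drawn from the cited literature rather than derivable from the tools assembled so far.

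For the existence half I would simply exhibit states, most cleanly via unextendible product bases (Sec.~\ref{sec:construction}): the normalised projector onto the orthogonal complement of a UPB is PPT by construction yet entangled, since its range contains no product vector, and for the standard $3\times3$ UPB this produces a rank-$4$ PPT---hence bound---entangled state \cite{brus_construction_2000, divincenzo_unextendible_2003}. To reach every higher rank I would start from such a minimal rank-$4$ example and mix in separable states on suitably chosen support, raising the rank one step at a time while preserving both entanglement and the PPT property, following the lifting constructions of Refs.~\cite{bej_unextendible_2021, bandyopadhyay_non-full-rank_2005}. This yields bound entangled states of every rank from $4$ up to the full dimension $d_Ad_B$ and completes the proof.
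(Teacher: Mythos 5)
Your proposal follows essentially the same route as the paper: rank 1 via Schmidt-rank distillability of pure states, rank 2 via the support-confinement consequence of Thm.~\ref{thm:rank_BE_undistillable} combined with Thm.~\ref{thm:qubit_distillation}, rank 3 by splitting into the PPT case (separability, Ref.~\cite{horodecki_operational_2000}) and the NPT case (distillability, Ref.~\cite{chen_rank-three_2008}), and existence via the UPB construction for rank 4 with the lifting of Refs.~\cite{bej_unextendible_2021, bandyopadhyay_non-full-rank_2005} for higher ranks. The argument is correct and matches the paper's, including the appropriate deferral of the rank-3 subcases and the higher-rank constructions to the cited literature.
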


\subsection{Frequency of Bound Entangled States} \label{sec:frequency_bound}

A natural question to ask about the characteristics of entanglement classes (NPT, PPT, SEP, BE, FE) is: Given a certain quantum system and a randomly chosen quantum state, what are the probabilities of that state belonging to a certain entanglement class? Or equivalently: How big are the relative volumes of NPT, PPT, SEP, BE, and FE within the quantum state space? 

Even for the Hilbert space of bipartite quantum states $\denop(\hs_A \otimes \hs_B)$, this question is not straightforward to answer as the results strongly depend on the properties of the subsystems. 
In fact, it is shown in Ref.~\cite{horodecki_bound_2003} that for infinite-dimensional (i.e., continuous-variable) systems, bipartite quantum states are generically distillable. At the same time, Ref.~\cite{zyczkowski_volume_1998} demonstrates that the volume of undistillable states is finite for finite-dimensional systems. 
A fundamental difficulty in quantifying the volumes within the set of all quantum states is that calculations generally depend on the chosen probability measure. The choice of a ``natural'' measure on $\denop(\hs_A \otimes \hs_B)$ is not clear (cf. Ref.~\cite{slater_priori_1999}). In Sec.~\ref{genBoundEntStates}, we present results related to the frequency of bound entangled states that can be obtained despite this challenge.

The geometric representation of Bell-diagonal states as a $(d^2-1)$-simplex in real space (cf. Sec.~\ref{sec:bell_system}) allows determining frequencies of states in the entanglement classes SEP, BE and FE in this subset of states. These results are compared for the standard and generalized Bell-diagonal states in Sec.~\ref{boundBDS}.

Since only PPT bound entangled states are detected, reported frequencies of states in BE correspond to the relative volume of $\PPT\cap \ENT$. This provides a lower bound for the volume of BE because the volume of NPT bound entangled states might be nonzero in certain systems. So far, however, no bound entangled state with negative partial transposition is known.

\subsubsection{General Bipartite Bound Entangled States} \label{genBoundEntStates}

For general systems, the relative amount of bound entangled states in the bipartite Hilbert space strongly depends on the dimensions of the subsystems.

As mentioned above, a unique, ``natural'' measure for the space $\denop(\hs_A \otimes \hs_B)$ is not clear, a priori. Let $d_{AB} = d_A d_B$ be the dimension of the bipartite system. In Ref.~\cite{zyczkowski_volume_1998, zyczkowski_volume_1999}, a family of product measures is proposed, which follows from the spectral decomposition of any density matrix $\rho = U C U^{\dagger}$. The unitary matrix $U \in \mathcal{U}(d_{AB})$ ($\mathcal{U}(d_{AB})$ denotes the set of unitary operators on $\hs_A \otimes \hs_B$) is not unique. $C=diag(c_1,\dots,c_{d_{AB}})$ is the diagonal matrix of eigenvalues of $\rho$ with the trace condition $\sum_i c_i = 1$. A measure $\mu$ on the space $\denop(\hs_A \otimes \hs_B)$ can then be defined as $\mu = \nu_C \cross \nu_H$. Here, $\nu_C$ is a measure on the $(d_{AB}-1)$-dimensional simplex formed by the eigenvalues $c_i$. $\nu_H$ is the Haar measure defined on $\mathcal{U}(d_{AB})$. While the Haar measure $\nu_H$ can indeed be motivated as a ``natural'' measure on $\mathcal{U}(d_{AB})$ due to its invariance under unitary transformations, the choice of $\nu_C$ is more arbitrary. The normalized Lebesgue measure $\mathcal{L}_{d_{AB}-1}$ corresponds to a uniform and rotationally invariant distribution on the simplex as a submanifold of $\mathbb{R}^{d_{AB}}$. However, based on information-theoretic arguments, other distributions and induced measures can also be considered (cf. Ref.~\cite{slater_priori_1999, zyczkowski_volume_1999}). 

Using the product measure $\mu = \mathcal{L}_{d_{AB}-1} \cross \nu_H$, it is shown in Ref.~\cite{zyczkowski_volume_1998} that there exists a neighborhood of the maximally mixed state $d^{-1}_{AB} \id_{d_{AB}}$ which only contains separable states and has nonzero measure (less than one). Moreover, it is shown that the measure of PPT entangled states is larger than zero if $d_{AB}>6$. As those states are bound entangled, the same holds for the set BE. In Ref.~\cite{zyczkowski_volume_1999}, it is argued that this is also true for any nonsingular measure $\mu$ for finite dimension $d_{AB}$, and we can state the following:
\begin{mytheorem}
    The measures $\mu(\SEP)$ of separable states in $\denop(\hs_A \otimes \hs_B)$ for finite $d_{AB}$-dimensional Hilbert space $\hs_A \otimes \hs_B$ is nonzero and less than one. If $d_{AB}> 6$, the same holds true for the measure $\mu(\BE)$ of bound entangled states.
\end{mytheorem}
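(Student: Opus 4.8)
The plan is to prove all four bounds by exhibiting, in each case, a nonempty set that is open in the relative topology of $\denop(\hs_A\otimes\hs_B)$ and lies entirely inside the relevant entanglement class. The one property of the measure I would invoke is that a \emph{nonsingular} $\mu$ has full support, so that every such relatively open set receives positive $\mu$-measure; for the product measure $\mu=\mathcal{L}_{d_{AB}-1}\times\nu_H$ this is clear, since any relatively open set meets the full-rank interior of $\denop(\hs_A\otimes\hs_B)$ in a set of positive Lebesgue$\times$Haar measure, and nonsingularity is precisely the hypothesis that makes the conclusion insensitive to the particular choice of $\nu_C$.

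For $\mu(\SEP)>0$ I would use that the maximally mixed state $\tfrac{1}{d_{AB}}\id_{d_{AB}}$ is separable and, by Ref.~\cite{zyczkowski_volume_1998}, lies in the \emph{interior} of $\SEP$, so that an entire open ball about it is separable and has positive measure. For $\mu(\SEP)<1$ it is enough to show $\mu(\ENT)>0$, since $\mu(\SEP)=1-\mu(\ENT)$: picking any entangled $\rho_0$ (e.g.\ a maximally entangled pure state) and a witness $W$ with $\myTr(W\rho_0)<0$ from Def.~\ref{DefEW}, the set $\{\rho:\myTr(W\rho)<0\}$ is relatively open, nonempty, and contained in $\ENT$. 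This already settles the first claim for every finite-dimensional genuine bipartite system.

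The substantive part is $\mu(\BE)>0$ for $d_{AB}>6$, where I would construct a relatively open family of PPT entangled states; each is bound entangled by Thm.~\ref{thm:dist_implies_NPT}. For $d_{AB}>6$ the system is neither qubit-qubit nor qubit-qutrit, so by Thm.~\ref{thm:PPT-crit_qubit_qubit} together with known constructions (cf.\ Sec.~\ref{sec:construction}) a PPT entangled state $\sigma$ exists. As $\sigma$ may be of low rank and lie on the boundary of $\PPT$, I would first push it into the interior by mixing with the maximally mixed state,
\begin{align}
    \rho_\epsilon := (1-\epsilon)\,\sigma + \tfrac{\epsilon}{d_{AB}}\,\id_{d_{AB}}\,, \qquad \epsilon>0 \ \text{small}.
\end{align}
Convexity of $\PPT$ makes $\rho_\epsilon$ PPT, and since the identity is fixed by partial transposition one has $\rho_\epsilon\geq\tfrac{\epsilon}{d_{AB}}\id>0$ and $\rho_\epsilon^\Gamma\geq\tfrac{\epsilon}{d_{AB}}\id>0$, placing $\rho_\epsilon$ strictly inside $\PPT$. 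Choosing a witness $W$ with $\myTr(W\sigma)<0$, continuity gives $\myTr(W\rho_\epsilon)<0$ for small $\epsilon$, so $\rho_\epsilon$ remains entangled. The three open conditions $\rho>0$, $\rho^\Gamma>0$, and $\myTr(W\rho)<0$ then cut out a relatively open neighborhood of $\rho_\epsilon$ consisting of PPT entangled, hence bound entangled, states, giving $\mu(\BE)>0$. Finally $\mu(\BE)\leq 1-\mu(\SEP)<1$, since $\BE$ is disjoint from $\SEP$ and $\mu(\SEP)>0$.

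I expect the main obstacle to be exactly this third step: one needs a full-dimensional family of PPT entangled states rather than a single example, and must keep the perturbed state inside the interior of $\PPT$ while preserving entanglement. The thickening by $\tfrac{\epsilon}{d_{AB}}\id$ handles both at once, lifting every eigenvalue of $\rho$ and of $\rho^\Gamma$ off zero while the strict witness inequality survives by continuity. A secondary subtlety is the input that $d_{AB}>6$ genuinely guarantees a PPT entangled state for the admissible factorizations $d_Ad_B=d_{AB}$ with $d_A,d_B\geq 2$, which rests on the known existence results and on Thm.~\ref{thm:PPT-crit_qubit_qubit}, by which only the excluded dimensions $d_{AB}\in\{4,6\}$ satisfy $\PPT=\SEP$.
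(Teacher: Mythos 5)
Your proposal is correct and follows essentially the same route as the paper, which itself only sketches the argument by citing Refs.~\cite{zyczkowski_volume_1998, zyczkowski_volume_1999}: a separable neighborhood of the maximally mixed state gives $0<\mu(\SEP)<1$, and a positive-measure set of PPT entangled states for $d_{AB}>6$ gives $\mu(\BE)>0$ via Thm.~\ref{thm:dist_implies_NPT}; your thickening construction $\rho_\epsilon=(1-\epsilon)\sigma+\tfrac{\epsilon}{d_{AB}}\id$ together with the witness inequality is exactly how those cited results produce a relatively open set of PPT entangled states. The only (shared) looseness is the claim that ``nonsingular'' implies full support --- strictly, absolute continuity alone does not guarantee that every relatively open set has positive measure --- but you correctly note that this holds for the product measure $\mathcal{L}_{d_{AB}-1}\times\nu_H$ actually used, which matches the level of rigor of the paper and its sources.
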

Investigations in Ref.~\cite{slater_priori_1999} show that the volumes depend on the chosen measure. Nonetheless, numerical results of Ref.~\cite{zyczkowski_volume_1999} indicate that there exists a dominant dependence of the volume of states with positive partial transposition (PPT) on the dimension $d_{AB}$ for all nonsingular measures. In particular, it is shown that the volume of PPT decreases approximately exponentially with the dimension: $\mu(\PPT) \propto K_{\mu} \exp(-t_{\mu}d_{AB}) $ for some constants $K_{\mu}$ and $t_{\mu}$ depending on the chosen measure $\mu$. From $\mu(\PPT) = \mu(\SEP) + \mu(\BE \cap \PPT)$ it follows that also the volumes of separable and PPT entangled states decrease approximately exponentially with the dimension.

Finally, and exceeding the primary scope of this article, we mention a result showing that bipartite bound entanglement occupies a finite volume only if the Hilbert space is finite-dimensional. Consider infinite-dimensional bipartite Hilbert spaces of square-integrable functions ($L^2(\mathbb{R})$), i.e., continuous-variable systems. In Ref.~\cite{horodecki_bound_2003}, the following result is shown:
\begin{mytheorem}
    The subset of undistillable states is nowhere dense in the set of all continuous-variable states in $\denop(\hs_A \otimes \hs_B)$ for $\hs_A \cong \hs_B \cong L^2(\mathbb{R})$.
\end{mytheorem}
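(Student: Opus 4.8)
The plan is to work in the trace-norm topology on states and to show that the \emph{distillable} states contain an open dense subset; since a set is nowhere dense exactly when its closure has empty interior, and $\overline{\mathrm{undistillable}}=X\setminus\mathrm{int}(\mathrm{distillable})$, exhibiting an open dense set $G$ of distillable states finishes the proof (then the undistillable states lie in the closed, nowhere-dense set $G^{c}$). The engine is the single-copy criterion built on Thm.~\ref{thm:qubit_distillation}: locally projecting each of $\hs_A,\hs_B$ onto a two-dimensional subspace is an LOCC filtering that produces a two-qubit state, which is distillable as soon as it is entangled, equivalently (Thm.~\ref{thm:PPT-crit_qubit_qubit}) as soon as it is NPT. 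Hence $\rho$ is $1$-distillable (Def.~\ref{def:n-distillability}), and therefore distillable, whenever some vector $|\eta\rangle$ of Schmidt rank at most $2$ satisfies $\langle\eta|\rho^{\Gamma}|\eta\rangle<0$.

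First I would establish \textbf{openness} of the $1$-distillable set. The obstruction specific to infinite dimensions is that partial transposition is \emph{not} trace-norm continuous, so I cannot perturb $\rho^{\Gamma}$ directly. I sidestep this by fixing the witness: for $|\eta\rangle$ supported on finitely many Fock levels, $\langle\eta|\rho^{\Gamma}|\eta\rangle=\myTr\big(\rho\,M_\eta\big)$ with $M_\eta:=(|\eta\rangle\langle\eta|)^{\Gamma}$ a \emph{fixed bounded} finite-rank Hermitian operator. Then $|\myTr(\rho'M_\eta)-\myTr(\rho M_\eta)|\le\|M_\eta\|_\infty\,\|\rho'-\rho\|_1$, so $\myTr(\rho M_\eta)<0$ persists on a trace-norm ball around $\rho$; thus the $1$-distillable states form an open set.

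Next I would prove \textbf{density}. Fix a state $\rho$ and $\varepsilon>0$, and choose Fock bases $\{|n\rangle\}$ on both factors. Since $\rho_A=\pTrB(\rho)$ is trace class, the tail weights $\langle N|\rho_A|N\rangle$ tend to $0$, so by Cauchy--Schwarz every matrix element of $\rho$ among the four vectors $|N,N\rangle,|N,N{+}1\rangle,|N{+}1,N\rangle,|N{+}1,N{+}1\rangle$ is smaller than any prescribed $\delta$ once $N$ is large. Put $|\eta\rangle=\tfrac1{\sqrt2}(|N,N{+}1\rangle-|N{+}1,N\rangle)$ and form $\sigma=(1-\varepsilon')\rho+\varepsilon'\,|\Phi\rangle\langle\Phi|$ with $|\Phi\rangle=\tfrac1{\sqrt2}(|N,N\rangle+|N{+}1,N{+}1\rangle)$. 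Computing $M_\eta$ gives $\langle\eta|(|\Phi\rangle\langle\Phi|)^{\Gamma}|\eta\rangle=-\tfrac12$ (the familiar singlet eigenvalue of the partial transpose of a Bell projector), while $\langle\eta|\rho^{\Gamma}|\eta\rangle$ is a fixed combination of the four tiny matrix elements, hence $O(\delta)$ in modulus. Choosing $N$ large enough that this is below $\varepsilon'/4$ yields $\myTr(\sigma M_\eta)<\varepsilon'/4-\varepsilon'/2<0$, so $\sigma$ is $1$-distillable, with $\|\sigma-\rho\|_1\le2\varepsilon'<\varepsilon$.

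Combining the two steps, the $1$-distillable states form an open dense subset $G$ of the distillable states, so the undistillable states lie in $G^{c}$ and are nowhere dense. I expect the main obstacle to be precisely the lack of trace-norm continuity of the partial transpose; the resolution—pushing the transpose onto the fixed bounded witness $M_\eta$—is what legitimizes both the openness estimate and the perturbation bound. The decisive use of arbitrarily high Fock sectors is what has no finite-dimensional analogue: there the maximally mixed state anchors a robust PPT interior, whereas here one can always escape into a sector of negligible weight and inject an NPT two-qubit block, forcing distillability to be generic.
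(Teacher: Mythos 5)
Your proof is correct. The paper itself offers no proof of this theorem---it only states the result with a citation---but your argument (openness of the single-copy-distillable set via a fixed finite-rank witness $M_\eta=(|\eta\rangle\langle\eta|)^{\Gamma}$ to circumvent the trace-norm discontinuity of partial transposition, plus density by injecting a small Bell component into a high-excitation $2\times 2$ block of negligible weight) is essentially the strategy of the cited reference, and all the estimates check out.
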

The theorem directly implies that quantum states are generically distillable in continuous-variable systems. Because the set of undistillable states does not contain any open ball, it is of measure zero independent of the chosen measure. As subsets of undistillable states, the sets SEP, PPT, and BE are nowhere dense for infinite-dimensional systems. Note that this result is not limited to $\PPT \cap \BE$ but holds for possibly existing NPT bound entangled states as well.

\subsubsection{Bell-diagonal States} \label{boundBDS}

In this section, results of Ref.~\cite{popp_almost_2022, popp_comparing_2023,popp_special_2024} regarding the frequency of bound entangled states of Bell-diagonal systems (Sec.~\ref{sec:bell_system}) are summarized. It is shown that a significant share of states is bound entangled for small dimensions of the subsystems ($d = d_A = d_B \geq 3$). Interestingly, the standard Bell-diagonal system, the Magic Simplex \eqref{eq:magic_simplex}, stands out among all generalized Bell-diagonal systems \eqref{eq:gen_BDS}. This indicates that the entanglement structure, including the frequency of bound entanglement, strongly depends on the choice of the Bell basis.

As discussed in the previous section, the choice of a measure to determine volumes within the set of quantum states is not unique since it depends on the designation of a ``natural'' distribution of states. Following a geometric approach to entanglement via the identification of $d^2$-dimensional Bell-diagonal states with points in the standard $(d^2-1)$-simplex, specific volumes of entanglement classes can be well approximated by Monte Carlo simulations. More precisely, the normalized Lebesgue measure $\mathcal{L}_{d^2-1}$ can be approximated by the frequencies of a uniformly distributed sampling of points in the $(d^2-1)$-simplex. As described in Ref.~\cite{popp_almost_2022}, the volumes (regarding the induced uniform measure) of certain subsets of Bell-diagonal states can be determined as follows. First, a sufficiently large set of uniformly distributed points in the simplex is generated and associated with states in $\denop(\hs_A \otimes \hs_B)$. These states are then classified as separable (SEP), PPT entangled (PPT-ENT), or NPT entangled (NPT) using the methods presented in Sec.~\ref{sec:detection of BE}. The relative frequencies approximate the volumes of these subsets in the state space. Note that due to the challenge of detecting bound entangled states, not all PPT states are classified as SEP or PPT-ENT.

In Ref.~\cite{popp_almost_2022}, this method is used to analyze bipartite standard Bell-diagonal qutrits ($d=3$) for the standard Bell basis, i.e., in the Magic Simplex, and to determine the entanglement class of more than $95\%$ of states. In Ref.~\cite{popp_comparing_2023}, similar methods are applied for higher dimensions. As for general systems (cf. Sec.~\ref{genBoundEntStates}) the relative volume of PPT states decreases with the dimension (cf. Fig.~\ref{fig:freq_BDS}). For the standard Bell basis, the relative volume of PPT decreases from $50\%$ for $d=2$ to less than $1\%$ for $d\geq 6$. 
For $d=3$ and $d=4$, also a significant amount of PPT Bell-diagonal states can be detected as bound entangled. Including the uncertainty of PPT states that can not be identified as separable or entangled, the relative volume of PPT-ENT lies in the interval $[5.4\%,7.4\%]$ for qutrits and in $[0.2\%,2.8\%]$ for ququarts.
The results for the obtained frequencies of states in NPT, PPT, SEP, and PPT-ENT are summarized in Tab.~\ref{tab:frequencies} and visualized in Fig.~\ref{fig:freq_BDS}.

\begin{table}[h]
    \centering
    \begin{tabular}{c|c|c|c|c}
         & $d=2$ & $d=3$ & $d=4$ & $d=5$ \\
       \hline
       rel. NPT  & $0.50$ & $0.61$ & $0.884$ & $0.93$ \\
       \hline
       rel. PPT  & $0.50$ & $0.39$ & $0.116$ & $0.07$ \\
       \hline
       rel. SEP  & $0.50$ & $0.33 \pm 0.01$ & $0.101 \pm 0.013$ & $-$\\
       \hline
       rel. PPT-ENT & $0.0$ & $0.06 \pm 0.01$ & $0.015 \pm 0.013$ & $-$\\
    \end{tabular}
    \caption{
    Relative volumes of entanglement classes in the Magic Simplex. For the dimensions $d=2,3,4,5$, the relative frequencies of NPT, PPT, SEP and PPT-ENT states among all standard Bell-diagonal states are shown. For $d=3$ and $d=4$, a significant amount of PPT entangled states are detected, but not all PPT states can be classified as SEP or PPT-ENT and the relative volumes of SEP and PPT-ENT lies in the given interval.
    }
    \label{tab:frequencies}
\end{table}

\begin{figure}[H]
    \centering
    \begin{center}
\includegraphics[width=0.65 \linewidth]{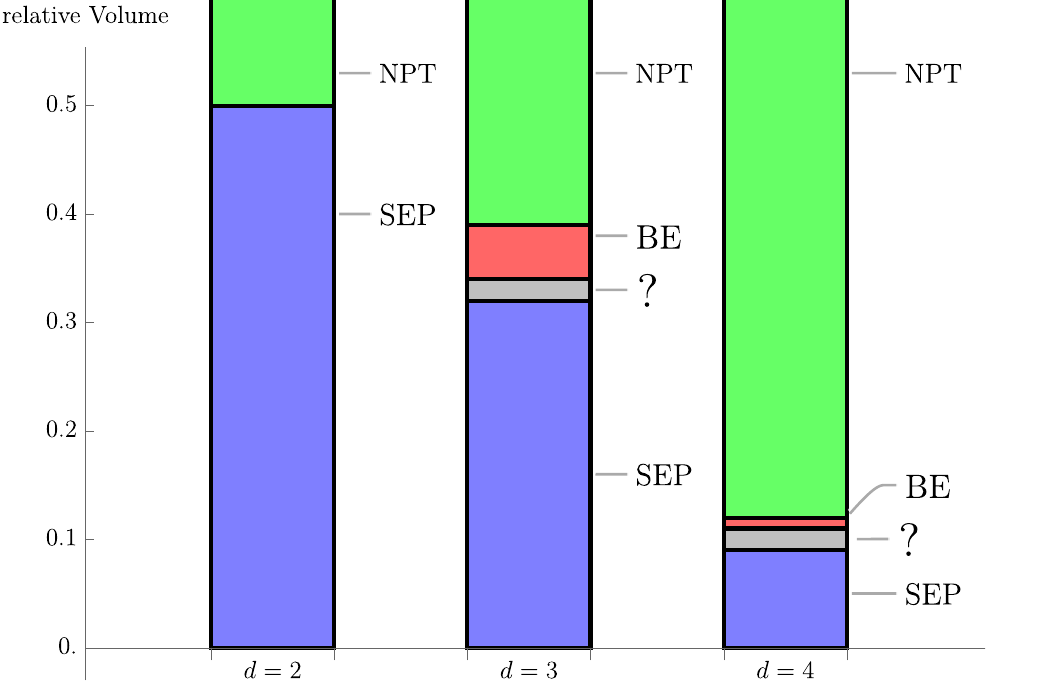}
\end{center}
    \caption{The frequencies of PPT and NPT states for Bell-diagonal systems of dimension $d=2,3,4
    $ are illustrated. The (not completely shown) green bar denotes the proportion of NPT states, while the blue, red, and gray bars denote PPT states. The fraction of separable and bound entangled states is depicted in blue and red, respectively. Unclassified PPT states are denoted by ``?'' (gray). One observes a significant decrease in PPT states for increasing dimension.}
    \label{fig:freq_BDS}
\end{figure}

Ref.~\cite{popp_special_2024} investigates generalized Bell-diagonal systems (cf. Sec.~\ref{sec:gen_bell_system}) and shows that the standard Bell basis has some very special properties that distinguish it from other generalized Bell bases. This is also reflected in the entanglement structure, particularly in the frequency of PPT bound entangled states among all generalized Bell-diagonal states. 
For thousands of generalized Bell bases for $d=3$, the standard Bell-diagonal system, i.e., the Magic Simplex, has the smallest relative volume of NPT states among the corresponding Bell-diagonal states. In particular, the relative volumes of PPT and PPT-ENT are generally smaller for generalized Bell-diagonal states than for the Magic Simplex. 
The influence of the chosen Bell basis on the sets PPT, NPT, PPT-ENT, and SEP can also be visualized by considering certain subsets of Bell-diagonal states. In Fig.~\ref{fig:gen_bell_families}, it is shown how the sets of the entanglement classes within a two-dimensional family of states significantly depend on the Bell basis.

Interestingly, when comparing different generalized Bell-diagonal systems, the relative frequency of PPT entangled states among all PPT states strongly correlates positively with the volume of PPT itself. This means that the frequency of PPT entangled states increases disproportionately with the PPT volume. The more PPT states exist, the higher the frequency of entangled states among them.

These observations indicate that the relations of basis states in a Bell basis strongly affect the entanglement structure of corresponding Bell-diagonal states, including the frequency of  PPT bound entangled states. 
The underlying group structure, characterizing the standard Bell basis and the Magic Simplex via the Weyl relations \eqref{eq:weyl_relations}, is connected with the frequency of PPT states and PPT entanglement for Bell-diagonal states. If generalized Bell bases are not related by such structure, then fewer Bell-diagonal PPT states and PPT entangled states can be observed, despite the fact that all states represent mixtures of maximally entangled Bell states.

\begin{figure}
    \centering
    \includegraphics[width=1 \linewidth]{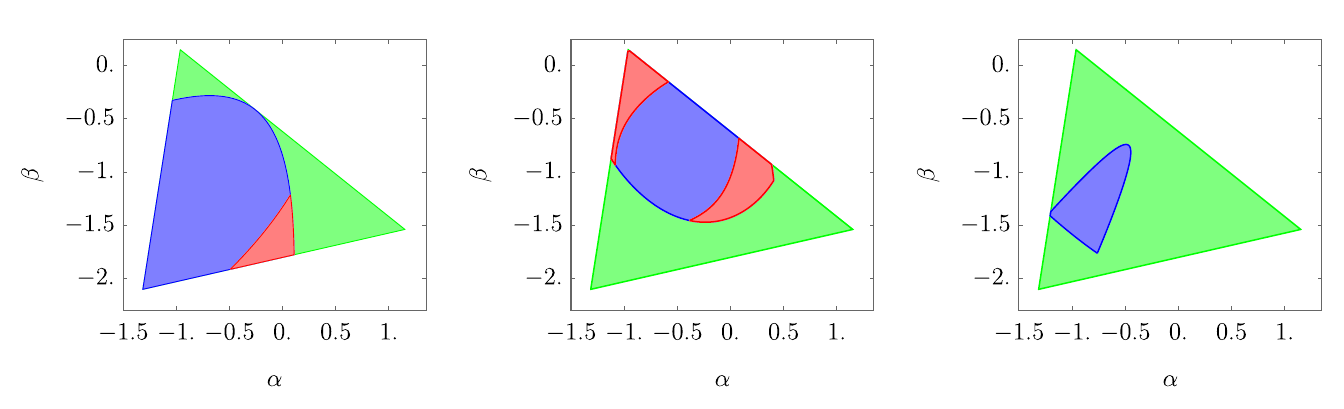}
\caption{
These graphics illustrate a particular slice (state family $B_1$; see caption of Fig.~\ref{fig:notcompleteMUB} for the definition) through the Magic Simplex of bipartite qutrits for the standard Bell basis on the left. The middle and the right panel show the same parameterized family for two generalized Bell bases. The green area corresponds to NPT entangled states. The red area depicts PPT bound entangled states that are detected by the realignment criterion~(Thm.~\ref{thm:realignment_criterion}). These three graphics illustrate the dependence of the entanglement structure of Bell-diagonal states on the Bell basis.
}
\label{fig:gen_bell_families}
\end{figure}

\subsection{Bound Entanglement and Information Processing} \label{sec:BE_and_info_processing}

Many quantum information protocols require some entangled states shared between two parties as a resource.
Most of them operate best if the entanglement is concentrated in maximally entangled pure states.
However, the resource states in experiments are generally mixed, so the standard protocols do not apply unconditionally.
In principle, if one has a close to unrestricted supply of free entangled states, any protocol can be preceded by entanglement distillation.
In this case, the protocol for maximally entangled states is applicable after the distillation process.
Consequently, all free entangled states can be utilized in specific quantum information processing tasks to outperform classical schemes.
This strategy is, by definition, impossible for bound entangled states, yet they are still inseparable and thus display characteristics different from separable states.
Therefore, the question arises whether they can serve as a resource for quantum information protocols. 
This section presents results that answer this question for most currently known applications in the negative.

\subsubsection{Quantum Teleportation and Distillation} \label{sec:BE_teleportation}

The most general setup of quantum teleportation is as follows \cite{horodecki_general_1999}.
Two parties, Alice and Bob, share a single resource state $\rho_{AB}\in\denop(\hs_A\otimes\hs_B)$ with $\dim(\hs_A) = \dim(\hs_B) = d \geq 2$.
Additionally, Alice is in possession of an unknown $d$-dimensional pure\footnote{Teleportation of an unknown mixed state $\sigma = \sum_i p_i |\phi_i\rangle\langle\phi_i|$ is readily obtained by linearity of the protocol.}  quantum state $|\phi\rangle\in\hs_{A'}$.
Teleportation aims to transmit $|\phi\rangle$ from Alice to Bob by some LOCC scheme, i.e., by only using local quantum operations and classical communication (cf. Sec.~\ref{sec:ent_dist}).
Thus, teleportation through the resource state $\rho_{AB}$ constitutes an LOCC-restricted quantum channel $\mathcal{E}_{\rho_{AB}}: \denop(\hs_{A'}) \rightarrow \denop(\hs_B)$.
In the following, we denote the (generally mixed) teleported state by $\sigma_\phi = \mathcal{E}_{\rho_{AB}}(|\phi\rangle \langle \phi|) \in \denop(\hs_B)$, thereby omitting its explicit dependence on $\rho_{AB}$ and the particular LOCC protocol.

A customary figure of merit for a given teleportation scheme is the \emph{fidelity} \cite{jozsa_fidelity_1994} between the teleported state $\sigma_\phi$ and the input state $|\phi\rangle$,
\begin{align} \label{eq:fidelity}
    F ( \sigma_\phi, |\phi\rangle) = \myTr\left(\sqrt{(|\phi\rangle\langle\phi|)^{1/2}\; \sigma_\phi\;(|\phi\rangle\langle\phi|)^{1/2} }\right)^2=\langle\phi|\, \sigma_\phi \, |\phi\rangle \;.
\end{align}
As the state $|\phi\rangle$ is unknown, it is necessary to integrate over all potential input states to obtain the averaged teleportation fidelity.
We thus arrive at the \emph{fidelity of transmission} \cite{horodecki_general_1999}
\begin{align} \label{eq:fid_of_transmission}
    f(\rho_{AB}) := \int \diff \phi\, F ( \sigma_\phi, |\phi\rangle ) \;,
\end{align}
where the integral is taken with respect to the Haar measure on the space of input states $|\phi\rangle$.
Since $F(\sigma_\phi , |\phi\rangle) \in [0,1]$, we also have $f(\rho_{AB})\in [0,1]$ for any $\rho_{AB}$ and any teleportation procedure.
The upper bound $f_{max} = 1$ is achieved by the original teleportation protocol of Ref.~\cite{bennett_teleporting_1993} when a maximally entangled pure resource state is used.
On the contrary, assuming a scenario with a separable resource state $\rho_{AB} = \rho_{A}\otimes\rho_{B}$, a nonzero bound of the teleportation fidelity can be obtained.
In this case, one can apply a measure-and-prepare protocol \cite{linden_bound_1999} and obtain $f_{SEP} = 2/(d+1)$.\footnote{This can be shown by direct calculation (see the supplementary information). Alternatively, it follows from Thm. \ref{thm:teleportation_and_singlet_fraction}. For this, note that there are separable states $\sigma$ with $\mathcal{F}_{max}=1/d$, e.g., any subgroup state \eqref{eq:subgroup_state} in the Magic Simplex. For such a state, \eqref{eq:teleportation_and_singlet_fraction} applies and yields the desired bound.}

For what follows we also need to define the \emph{singlet fraction} of a state $\rho_{AB}$ by
\begin{align} \label{eq:singlet_fraction}
    \mathcal{F}(\rho_{AB}) := F(\rho_{AB}, |\Omega_{0,0}\rangle)  = \myTr(P_{0,0}\, \rho_{AB}) \;,
\end{align}
where $|\Omega_{0,0}\rangle$ is a maximally entangled state\footnote{Inconsequentially, for dimension $d=2$, $|\Omega_{0,0}\rangle$ is not the singlet state but can be transformed into the true singlet by local operations.} and $P_{0,0}$ is its projector, defined in \eqref{eq:max_ent_state_omega_00} and \eqref{eq:bell_states}, respectively.
Note that distillation aims to maximize the singlet fraction of the initial state by possibly using up multiple additional copies of $\rho_{AB}$.
A state is distillable if $\mathcal{F} = 1$ is reached with nonzero probability after some number of LOCC rounds (cf. Sec.~\ref{sec:ent_dist}).

Ref.~\cite{horodecki_general_1999} establishes an important connection between the singlet fraction of a state and the fidelity of transmission:
\begin{mytheorem} \label{thm:teleportation_and_singlet_fraction}
    For a given resource state $\rho_{AB} \in \denop(\hs_A \otimes \hs_B)$ with $\dim(\hs_A) = \dim(\hs_B) = d$, the maximal fidelity of transmission $f_{max}$ achievable by any teleportation protocol is given by
    \begin{align} \label{eq:teleportation_and_singlet_fraction}
        f_{max} = \frac{d \,\mathcal{F}_{max}+1}{d+1} \;.
    \end{align}
    Here, $\mathcal{F}_{max}$ denotes the maximal singlet fraction attainable by LOCC from a single copy of $\rho_{AB}$.
\end{mytheorem}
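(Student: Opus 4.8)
\section*{Proof proposal}

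The plan is to reduce the statement to the standard affine relation between the average fidelity of a quantum channel and the singlet fraction of its Choi--Jamiołkowski state, and then to exploit a state--channel correspondence: every teleportation protocol through $\rho_{AB}$ gives rise to an LOCC-reachable state whose singlet fraction controls the fidelity, and conversely every LOCC-reachable singlet fraction can be converted back into a teleportation scheme. The two matching bounds produced this way pin down $f_{max}$.

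First I would establish the channel-to-fidelity dictionary. For an arbitrary channel $\mathcal{N}:\denop(\hs_{A'})\rightarrow\denop(\hs_B)$ with $\dim = d$, I would compute the Haar average
\[
\int \diff\phi\; \langle\phi|\mathcal{N}(|\phi\rangle\langle\phi|)|\phi\rangle = \frac{d\, F_J(\mathcal{N}) + 1}{d+1},
\]
where $F_J(\mathcal{N}) = \langle\Omega_{0,0}| (\id\otimes\mathcal{N})(|\Omega_{0,0}\rangle\langle\Omega_{0,0}|) |\Omega_{0,0}\rangle$ is the singlet fraction of the normalized Choi state of $\mathcal{N}$. This is a purely representation-theoretic identity: writing $\int \diff\phi\, |\phi\rangle\langle\phi|^{\otimes 2} = (\id + S)/(d(d+1))$ with $S$ the swap operator (Schur's lemma on the symmetric subspace) yields the stated affine relation between $f$ and $F_J$. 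The integration is routine and I would record it once in the supplementary material.

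Next, the upper bound. Any teleportation protocol through $\rho_{AB}$ is by definition an LOCC-implemented channel $\mathcal{E}_{\rho_{AB}}$. The key move is to feed in half of a \emph{locally} prepared maximally entangled state: let Alice hold an ancilla $R$ together with the input register $A'$ in the state $|\Omega_{0,0}\rangle_{RA'}$, and run the protocol on $A'$. The resulting state on $R\otimes B$ is exactly the normalized Choi state of $\mathcal{E}_{\rho_{AB}}$, and since $|\Omega_{0,0}\rangle_{RA'}$ is a resource entirely local to Alice, this Choi state is obtained from $\rho_{AB}$ by LOCC alone. Hence its singlet fraction cannot exceed the best reachable by LOCC, i.e. $F_J(\mathcal{E}_{\rho_{AB}}) \leq \mathcal{F}_{max}$, and the dictionary gives $f(\rho_{AB}) \leq \frac{d\,\mathcal{F}_{max}+1}{d+1}$ for every protocol. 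For the matching lower bound I would first apply the LOCC map attaining $\mathcal{F}_{max}$, obtaining $\rho'$ with $\langle\Omega_{0,0}|\rho'|\Omega_{0,0}\rangle = \mathcal{F}_{max}$, then twirl by $\rho' \mapsto \int \diff U\, (U\otimes U^\ast)\rho'(U\otimes U^\ast)^\dagger$; since $(U\otimes U^\ast)|\Omega_{0,0}\rangle = |\Omega_{0,0}\rangle$ this is LOCC (Alice draws $U$, announces it, Bob applies $U^\ast$) and preserves the singlet fraction while forcing $\rho'$ into isotropic form. Running the standard Weyl--Heisenberg scheme (Bell measurement on $A'A$, then the correcting $W_{k,l}$ on $B$) with this isotropic resource produces a depolarizing channel whose Choi state is precisely the isotropic state of singlet fraction $\mathcal{F}_{max}$, so the dictionary saturates the bound.

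The main obstacle is the upper-bound step: one must argue carefully that the Choi-state extraction is genuinely LOCC and does not covertly inject entanglement through the reference ancilla, so that $F_J \leq \mathcal{F}_{max}$ truly holds. The affine dictionary and the twirling are standard; the conceptual weight of the theorem rests on this state--channel correspondence.
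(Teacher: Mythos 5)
Your proposal is correct, and it reconstructs what is essentially the canonical argument: the paper itself does not prove Thm.~\ref{thm:teleportation_and_singlet_fraction} but only cites Ref.~\cite{horodecki_general_1999}, and your three ingredients --- the affine relation $f=(dF_J+1)/(d+1)$ between average channel fidelity and the singlet fraction of the Choi state, the observation that extracting the Choi state of a teleportation channel is itself an LOCC operation on $\rho_{AB}$ (the reference ancilla being local to Alice), and the $U\otimes U^\ast$ twirl plus standard Weyl--Heisenberg teleportation to saturate the bound --- are exactly the steps of the original proof. The Haar computation via $\int\diff\phi\,|\phi\rangle\langle\phi|^{\otimes 2}=(\id+S)/(d(d+1))$ checks out, and your worry about the upper-bound step is resolved the way you suggest: since $R$ sits entirely on Alice's side, the state on $R\otimes B$ after the protocol is reached from $\rho_{AB}$ by LOCC across the $A|B$ cut, so its overlap with $|\Omega_{0,0}\rangle_{RB}$ cannot exceed $\mathcal{F}_{max}$. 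The only point worth making explicit is that $\mathcal{F}_{max}$ must be understood as the optimum over \emph{trace-preserving} LOCC: a teleportation protocol is a deterministic channel, so probabilistic filtering that raises the singlet fraction only conditionally does not enter either direction of the argument; with that reading both bounds match and the theorem follows.
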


Applied to bound entangled states, we get
\begin{mycor} \label{cor:BE_and_tele}
    Bound entangled resource states do not permit better-than-classical quantum teleportation since $\mathcal{F}_{max} \leq 1/d$.
\end{mycor}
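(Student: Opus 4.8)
The plan is to reduce the corollary to the single inequality $\mathcal{F}_{max} \leq 1/d$ and then invoke Theorem~\ref{thm:teleportation_and_singlet_fraction}. Once this bound is available, substituting it into \eqref{eq:teleportation_and_singlet_fraction} and using that the right-hand side is monotonically increasing in $\mathcal{F}_{max}$ (since $d>0$) gives $f_{max} = \frac{d\,\mathcal{F}_{max}+1}{d+1} \leq \frac{d\cdot(1/d)+1}{d+1} = \frac{2}{d+1} = f_{SEP}$, so that a bound entangled resource performs no better than the optimal separable (measure-and-prepare) strategy whose fidelity $f_{SEP} = 2/(d+1)$ was recorded earlier. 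The entire content of the corollary therefore rests on proving $\mathcal{F}_{max} \leq 1/d$ for every bound entangled $\rho_{AB}$.

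To establish this, I would argue by contraposition, invoking the distillability criterion recalled in Sec.~\ref{sec:ent_dist}: by Ref.~\cite{horodecki_reduction_1999}, any state $\sigma$ whose singlet fraction satisfies $\mathcal{F}(\sigma) > 1/d$ is distillable. Suppose, toward a contradiction, that $\mathcal{F}_{max} > 1/d$ for some bound entangled $\rho_{AB}$. By definition of $\mathcal{F}_{max}$ as the supremum of $\mathcal{F}(\cdot) = \myTr(P_{0,0}\,\cdot)$ over single-copy LOCC, there exists an LOCC map $\Lambda$ with $\mathcal{F}(\Lambda(\rho_{AB})) > 1/d$. The criterion then shows that $\Lambda(\rho_{AB})$ is distillable, so some further LOCC protocol converts copies of $\Lambda(\rho_{AB})$ into maximally entangled pairs at a positive rate. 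Since LOCC is closed under composition, concatenating $\Lambda$ with this distillation protocol yields an LOCC procedure distilling $\rho_{AB}$ itself, contradicting the undistillability of $\rho_{AB}$ (Def.~\ref{def:FE_and_BE}). Hence $\mathcal{F}_{max} \leq 1/d$, which combined with the first paragraph completes the argument.

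The substitution and the monotonicity of \eqref{eq:teleportation_and_singlet_fraction} are routine; the step requiring the most care is the composition argument that ties the \emph{static} quantity $\mathcal{F}_{max}$ to the \emph{asymptotic} notion of distillability in Def.~\ref{def:distillation}. One must verify that the single-copy LOCC map $\Lambda$ maximizing the singlet fraction leaves the system in $\denop(\hs_A\otimes\hs_B)$, so that the $d$-dimensional singlet fraction remains the relevant figure of merit throughout, and that concatenating the (possibly probabilistic) map $\Lambda$ with the subsequent distillation protocol is again a legitimate LOCC procedure achieving a positive rate. Both facts are standard, but they are the genuine crux, since they are precisely what convert an improvement in the optimized singlet fraction of a single copy into a contradiction with undistillability.
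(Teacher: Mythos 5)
Your proposal is correct and follows essentially the same route as the paper: the paper likewise derives $\mathcal{F}_{max}\leq 1/d$ by noting that otherwise the state would be distillable via the protocol of Ref.~\cite{horodecki_reduction_1999} (the singlet-fraction-greater-than-$1/d$ distillability criterion recalled in Sec.~\ref{sec:ent_dist}), contradicting bound entanglement, and then substitutes into Thm.~\ref{thm:teleportation_and_singlet_fraction} to obtain $f_{max}^{BE}\leq 2/(d+1)$, the measure-and-prepare bound. You merely make explicit the LOCC-composition step that the paper leaves implicit.
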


Otherwise, they would be distillable by the protocol of Ref.~\cite{horodecki_reduction_1999}, contradicting the definition of bound entanglement.
As a consequence, the maximal fidelity of transmission for bound entangled states satisfies $f_{max}^{BE} \leq 2/(d+1)$.
This is exactly the bound of the measure-and-prepare protocol mentioned above.
Thus, bound entangled states do not allow for better transmission fidelity than any purely classical channel.
Hence, the best teleportation strategy for Alice and Bob sharing a bound entangled state is to discard it and perform classical teleportation.

Finally, Ref.~\cite{chitambar_duality_2024} recasts quantum teleportation into a state discrimination problem.
Doing so allows them to characterize the resource states that offer no advantage over separable resource states in one-way teleportation protocols.
Their novel approach allows for an alternative proof of Cor. \ref{cor:BE_and_tele}.

\subsubsection{Dense Coding}

The concept of \emph{dense coding} \cite{bennett_communication_1992} (also called super-dense coding) is intimately connected to teleportation.
Indeed, in a certain sense, the two problems are dual to one another \cite{chitambar_duality_2024, wilde_quantum_2017}.
Dense coding aims to transmit a certain number of bits (i.e., classical information) from Alice to Bob by exchanging fewer qudits.
We assume a noiseless quantum channel through which the qudits are sent, whereas the shared states are generally mixed and noisy.
The procedure starts with Alice and Bob sharing $n$ identical bipartite quantum states $\rho_{AB}^{\otimes n}$.
Alice can then perform any local quantum operation on her part of the shared states to encode some classical bits and then send her altered $n$ qudits to Bob.
Now Bob tries to retrieve $m>n$ classical bits of information encoded by Alice.

The problem as presented above can be restated in terms of the capacity of the (noisy) entanglement-assisted classical channel \cite{bennett_entanglement-assisted_1999, bennett_entanglement-assisted_2002, horodecki_classical_2001}.
A channel capacity greater than $1$ indicates that more than one bit of classical information is transmitted for every qudit sent through the noiseless quantum channel.
A quantum channel's maximum capacity is $2 \log_2(d)$, where $d$ is the dimension of the qudit (see Ref.~\cite{wilde_quantum_2017}).
One of the results of Ref.~\cite{horodecki_classical_2001} is
\begin{mytheorem}
    Bound entanglement-assisted classical channels do not admit a capacity greater than 1.
\end{mytheorem}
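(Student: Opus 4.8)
The plan is to decompose the dense coding capacity into the trivial classical baseline plus a genuinely quantum surplus, and then to show that the surplus vanishes for any bound entangled resource. Without any shared state, a single noiseless qudit already transmits $\log_2 d$ classical bits (Alice prepares one of $d$ orthogonal signal states), so in the units of the excerpt this baseline is the capacity value $1$; the theorem then amounts to the claim that a resource $\rho_{AB}\in\BE$ cannot raise the capacity strictly above it.

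First I would invoke the capacity characterization of Ref.~\cite{horodecki_classical_2001} (see also the teleportation/dense-coding duality of Ref.~\cite{chitambar_duality_2024}), namely that the one-way entanglement-assisted capacity is
\begin{align}
    C_{dc}(\rho_{AB}) = \log_2 d + E_D^{\rightarrow}(\rho_{AB}),
\end{align}
where $E_D^{\rightarrow}$ is the one-way distillable entanglement, equal to the regularized coherent information $\sup_n \tfrac{1}{n}\sup_{\Lambda}(S(\rho_B') - S(\rho_{AB}'))$ optimized over local pre-processing maps $\Lambda$ acting on Alice's share of $\rho_{AB}^{\otimes n}$, with $\rho' = (\Lambda\otimes\id)\rho_{AB}^{\otimes n}$. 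This reduces the statement to proving that the surplus term is zero for bound entangled states.

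Second, I would argue $E_D^{\rightarrow}(\rho_{AB}) = 0$ directly from undistillability. Since $E_D^{\rightarrow} \le E_D$ and a bound entangled state has $E_D(\rho_{AB}) = 0$ by definition, non-negativity of $E_D^{\rightarrow}$ forces it to vanish, giving $C_{dc}(\rho_{AB}) = \log_2 d$. A self-contained variant needs no entanglement-measure inequalities: were the surplus positive, some $(\Lambda\otimes\id)\rho_{AB}^{\otimes n}$ would carry strictly positive coherent information, the hashing inequality would render that state distillable, and distilling after the local map $\Lambda\otimes\id$ would distill $\rho_{AB}$ itself --- contradicting $\rho_{AB}\in\BE$. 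For the presently known, PPT, bound entangled states one may alternatively reach the contradiction through Thm.~\ref{thm:dist_implies_NPT}, since positive coherent information would place $\rho_{AB}$ in $\mathrm{NPT}$.

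I expect the genuine obstacle to be the capacity formula of the first step rather than the entanglement argument of the second: establishing that the quantum surplus is exactly the regularized, pre-processing-optimized coherent information is a full coding theorem, requiring both an achievability (random-coding/hashing) construction and a matching converse, and the regularization over copies is precisely what makes the identification with distillable entanglement delicate. Once that formula is granted, the collapse of the surplus to zero for undistillable resources is immediate, and the theorem follows with $C_{dc} = \log_2 d$, i.e.\ capacity $1$.
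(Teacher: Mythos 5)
The paper offers no proof of this statement: it is reproduced verbatim as a result of Ref.~\cite{horodecki_classical_2001}, with only the citation as justification, so there is no in-paper argument to compare yours against. Measured against the argument in that reference, your reconstruction has the right architecture and correctly locates where the real work lies. The known proof does what you describe: the assisted capacity is $\log_2 d$ plus a surplus given by the regularized, pre-processing-optimized coherent information, and the surplus is then shown to vanish on undistillable resources. Your contradiction step is sound --- $\Lambda\otimes\id$ is a local operation, so a distillation protocol for $(\Lambda\otimes\id)\rho_{AB}^{\otimes n}$ composes into an LOCC distillation protocol for $\rho_{AB}$ itself, contradicting $\rho_{AB}\in\mathrm{BE}$. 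Three remarks. First, the exact identity $C_{dc}=\log_2 d+E_D^{\rightarrow}$ is more than you need (and is essentially the Devetak--Winter theorem \cite{devetak_distillation_2005}); for the statement at hand it suffices that the surplus is nonpositive whenever the resource is undistillable, which is precisely what your hashing-based contradiction delivers. Second, the hashing inequality postdates Ref.~\cite{horodecki_classical_2001}; the route available to the original authors runs through the reduction criterion (Thm.~\ref{thm:reduction_criterion}): states with strictly positive coherent information violate it, and such violators are distillable and NPT \cite{horodecki_reduction_1999}. Your "alternative" argument for PPT states implicitly relies on this chain (or else still routes through hashing plus Thm.~\ref{thm:dist_implies_NPT}), so it deserves one more line of justification, but it is not wrong. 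Third, you are right that the unproven content is the coding theorem itself (achievability and converse of the capacity formula, with the normalization in which the unassisted baseline is capacity $1$); granting that, the theorem follows as you state.
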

Consequently, they do not outperform a fully classical protocol, i.e., one where the bits are sent via a classical channel from Alice to Bob.

A different approach to dense coding with bound entangled states is presented in Ref.~\cite{chitambar_duality_2024}.
Therein, the authors reexamine the duality between dense coding and teleportation.
As a consequence of Cor. \ref{cor:BE_and_tele}, they likewise prove that bound entanglement is not useful in general dense coding schemes beyond the classical level.

\subsubsection{Activation of Bound Entanglement} \label{sec:Activation_of_BE}

In Ref.~\cite{linden_purifying_1998, kent_entangled_1998}, it is proven that for generic mixed states, it is impossible to distill a maximally entangled state with nonzero probability from only a single copy.
This means that no LOCC operations applied to a generic bipartite mixed state (of equal dimensions $d_A = d_B$) can bring the singlet fraction \eqref{eq:singlet_fraction} arbitrarily close to $1$ with non-vanishing probability.
Consequently, by Thm.~\ref{thm:teleportation_and_singlet_fraction}, it is generally impossible to teleport faithfully using only a single entangled mixed state, even if it is free entangled.

An example\footnote{This example and the specific activation protocol of Ref.~\cite{horodecki_bound_1999} mentioned below are reproduced in the supplementary information.} of such a free entangled qutrit-qutrit state $\rho^{\mathrm{FE}}$ is given in Ref.~\cite{horodecki_general_1999}.
Its initial singlet fraction is $\mathcal{F}_0$, and this cannot be increased beyond some $\mathcal{F}_{max}<1$ by LOCC and using only a single copy of the state.
They further show that this upper bound even holds for LOCC protocols that have a probability of success converging to zero.
According to Thm.~\ref{thm:teleportation_and_singlet_fraction} the fidelity of teleportation through $\rho^{\mathrm{FE}}$ as the sole resource state is thus upper bounded by $f_{max} = \frac{1}{4}(3 \,\mathcal{F}_{max}+1) < 1$.
Hence, faithful teleportation through $\rho^{\mathrm{FE}}$ is not guaranteed.
Surprisingly, an additional supply of bound entangled states can considerably improve teleportation fidelity.
This is unexpected because bound entangled states can neither be distilled nor allow faithful teleportation on their own (cf. Cor.~\ref{cor:BE_and_tele}).

A concrete LOCC protocol that achieves this is presented in Ref.~\cite{horodecki_bound_1999}.
It consumes a suitable bound entangled state $\rho^{\mathrm{BE}}$ and transforms $\rho^{\mathrm{FE}}$ to $\rho_1^{\mathrm{FE}}$ with singlet fraction $\mathcal{F}_1 > \mathcal{F}_0$.
Heuristically speaking, the entanglement from the bound entangled state is transferred to the free entangled pair.
As the states $\rho^{\mathrm{FE}}$ and $\rho_1^{\mathrm{FE}}$ are structurally identical, the protocol can be repeated.
After $n$ repetitions, the singlet fraction of $\rho_n^{\mathrm{FE}}$ is $\mathcal{F}_n$ which approaches 1 as $n$ goes to infinity.
Importantly, for some $n=n_0$, one eventually obtains $\mathcal{F}_{n_0} > \mathcal{F}_{max}$.
Therefore, the teleportation fidelity of $\rho_{n_0}^{\mathrm{FE}}$ exceeds $f_{max}$ and ultimately approaches $1$ for large $n_0$, thus making faithful teleportation possible.

The recent work in Ref.~\cite{ozaydin_nonlocal_2022} investigates a different protocol of bound entanglement activation compared to the one in Ref.~\cite{horodecki_bound_1999}.
The authors utilize quantum Zeno dynamics to similarly increase the singlet fraction \eqref{eq:singlet_fraction} of a free entangled state by consuming one or more bound entangled states.

In scenarios like this, where bound entangled states improve the performance of some quantum information theoretic task that requires free entanglement, we say that the bound entanglement has been \textit{activated}.
On the other hand, some authors interpret this the other way around:
In the above example, they would say that the teleportation capability of the state $\rho^{\mathrm{FE}}$ has been activated by the bound entanglement in $\rho^{\mathrm{BE}}$.

A different kind of activation of bound entanglement is presented in Ref.~\cite{vollbrecht_activating_2002}.
The authors show that bound entanglement can more generally activate a quantum state's distillability.
In particular, bound entangled states can make any NPT state $1$-distillable (cf. Sec.~\ref{sec:ent_dist}).
To show this, they use bound entangled states constructed from two highly symmetric families of states, Werner and isotropic states.
It is worth noting that the resulting bound entangled activator states have full rank.
Moreover, some of them are universal activators, i.e., they can activate the distillability of any given NPT state.
Some bound entangled universal activator states lie arbitrarily close to the set of separable states.

\subsubsection{Pure State Conversion assisted by Bound Entanglement}\label{sec:pureconversionBE}

A similar effect to the activation of bound entanglement can be achieved for pure state conversion.
In particular, it is known that if only one copy of a pure bipartite quantum state is available, its Schmidt rank cannot be increased by any (probabilistic) LOCC scheme \cite{lo_concentrating_2001, vidal_entanglement_1999}.
However, when additional specific bound entangled states are available as a resource, this restriction can be circumvented \cite{ishizaka_bound_2004}:
\begin{mytheorem}
    A single copy of any bipartite entangled pure state $|\psi\rangle \in \hs_A \otimes \hs_B$ can be converted with nonzero success probability to any other pure bipartite state $|\phi\rangle \in \hs_A \otimes \hs_B$ by LOCC with the help of suitable bound entangled states.
\end{mytheorem}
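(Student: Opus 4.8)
The plan is to isolate the single genuine obstruction to single-copy pure-state conversion and then show that bound entanglement is exactly the resource that removes it. By the probabilistic single-copy theory (Vidal's analogue of Nielsen's majorization result, cf.~\cite{vidal_entanglement_1999, lo_concentrating_2001}), a single copy of $|\psi\rangle$ can be converted to $|\phi\rangle$ by LOCC with nonzero probability if and only if the Schmidt rank of $|\psi\rangle$ is at least that of $|\phi\rangle$. Hence the only thing LOCC alone cannot do is raise the Schmidt rank, and it suffices to produce, from $|\psi\rangle$ together with a suitable bound entangled resource, a single pure state of maximal Schmidt rank $d$; the final descent to the arbitrary target $|\phi\rangle$ is then free by Nielsen's theorem, since a maximally entangled state majorizes every Schmidt vector (cf.~Sec.~12.5 of~\cite{nielsen_quantum_2012}).

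Next I would reduce the input. Since $|\psi\rangle$ is entangled it has Schmidt rank at least two, so by Vidal's theorem it converts with nonzero probability to the rank-two maximally entangled state $|\Phi^+\rangle$ embedded in $\hs_A\otimes\hs_B$. Composing this probabilistic step with the one still to be constructed only multiplies the overall success probability by a positive factor, so without loss of generality the seed may be taken to be $|\Phi^+\rangle$. The whole theorem then rests on a single claim: there is a bound entangled state $\rho^{\mathrm{BE}}\in\denop(\hs_{A_2}\otimes\hs_{B_2})$ and a one-round LOCC filtering operation, acting on $|\Phi^+\rangle\langle\Phi^+|\otimes\rho^{\mathrm{BE}}$ with Alice holding $A_1A_2$ and Bob holding $B_1B_2$, whose successful branch is a pure state of Schmidt rank strictly greater than two across the $A|B$ cut.

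To establish the claim I would exhibit explicit local filters $A\otimes B$ (of the Kraus form appearing in~\eqref{eq:proof_dist_implies_NPT}) together with an explicit bound entangled $\rho^{\mathrm{BE}}$, following the construction of Ref.~\cite{ishizaka_bound_2004}, and then verify three things: that $(A\otimes B)\big(|\Phi^+\rangle\langle\Phi^+|\otimes\rho^{\mathrm{BE}}\big)(A^\dagger\otimes B^\dagger)$ is proportional to a rank-one projector onto a higher-Schmidt-rank vector; that this outcome occurs with strictly positive probability; and that $\rho^{\mathrm{BE}}$ is genuinely bound entangled, i.e.\ entangled and PPT, the latter guaranteeing undistillability by Thm.~\ref{thm:dist_implies_NPT}. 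The conceptual point that makes this possible without contradicting the Schmidt-rank barrier is that the combined resource $|\Phi^+\rangle\langle\Phi^+|\otimes\rho^{\mathrm{BE}}$ is a \emph{mixed} global state: Vidal's obstruction applies only to pure inputs, and the mixedness carried by the bound entangled ancilla is precisely what supplies the extra Schmidt rank that the filter coherently concentrates into the successful branch.

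The main obstacle is this core existence step, and it is delicate for two competing reasons. On the one hand the filter must coherently transfer rank out of $\rho^{\mathrm{BE}}$ into the output, so the range of $\rho^{\mathrm{BE}}$ must be engineered to contain suitable higher-rank vectors aligned with the seed; on the other hand $\rho^{\mathrm{BE}}$ must stay PPT, which by Thm.~\ref{thm:dist_implies_NPT} forbids ever reaching a maximally entangled state \emph{from it alone}. Reconciling these demands — a PPT state rich enough to boost the Schmidt rank in a single assisted round, yet poor enough to remain undistillable — is where the real content lies, and it is exactly a range-structure argument in the spirit of the range criterion (Thm.~\ref{thm:range_crit}) that the explicit construction of Ref.~\cite{ishizaka_bound_2004} is designed to satisfy. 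Once one such state and filter are in hand, the two reductions above promote the local Schmidt-rank boost to the full statement.
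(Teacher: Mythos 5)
Your proposal follows essentially the same route as the paper: reduce the seed to a maximally entangled state of the input's Schmidt rank via the standard probabilistic single-copy results, boost the Schmidt rank by a local filtering on the seed tensored with the bound entangled state $\sigma_{m\rightarrow d}$ of Eq.~\eqref{eq:pure_state_conversion_BE_state} from Ref.~\cite{ishizaka_bound_2004}, and then descend to the arbitrary target by Nielsen/Vidal. Two small corrections: the successful branch of the assisted step must reach Schmidt rank at least $r'$ (the target's rank) --- which $\sigma_{m\rightarrow d}$ delivers directly by producing $|\Omega_{0,0}^{d}\rangle$ --- not merely rank ``strictly greater than two'' as your core claim states (otherwise you would have to iterate with a fresh bound entangled ancilla at each rank); and the entanglement of $\sigma_{m\rightarrow d}$ is certified more cheaply than by a range analysis, namely by noting that a separable ancilla could not enable a conversion that is impossible under LOCC alone, while PPT across $A_1A_2|B_1B_2$ then gives undistillability via Thm.~\ref{thm:dist_implies_NPT}.
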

The crucial part of the conversion protocol involves the state
\begin{align}
    \label{eq:higher_dim_bell_state}
    |\Omega_{0,0}^{m} \rangle = \frac{1}{\sqrt{m}} \sum_{i = 1}^m \, |i\rangle \otimes  |i\rangle \in \mathbb{C}^d \otimes \mathbb{C}^d \;,
\end{align}
where $2\leq m \leq d$.
It is shown that this state can be probabilistically converted to the $d$-dimensional maximally entangled state $|\Omega_{0,0}^d\rangle$
by LOCC with the assistance of the (unnormalized) bound entangled state
\begin{align} \label{eq:pure_state_conversion_BE_state}
    \sigma_{m\rightarrow d} = \frac{1}{m^2(d-1)} \Big( (m-1) (P_{0,0}^d)_{A_1 B_1} \otimes (P_{0,0}^m)_{A_2 B_2} + \frac{1}{d+1} (\id_{d^2} - P_{0,0}^d)_{A_1 B_1} \otimes (\id_{m^2} - P_{0,0}^m)_{A_2 B_2} \Big) \;.
\end{align}
Here, $P_{0,0}^d$ and $P_{0,0}^m$ are the projectors onto $|\Omega_{0,0}^d \rangle$ and $|\Omega_{0,0}^m \rangle$, respectively.
The subscripts $A_1$, $A_2$, $B_1$, and $B_2$ indicate the corresponding Hilbert spaces on which the state acts.
This state is indeed PPT in the cut $A_1 A_2 | B_1 B_2$.
Furthermore, as the conversion of $|\Omega_{0,0}^m\rangle$ to $|\Omega_{0,0}^d\rangle$ is not possible for $d>m$ by LOCC alone, $\sigma_{m\rightarrow d}$ must be entangled.
Hence, it is bound entangled.

The explicit protocol of probabilistically converting any entangled state $|\psi\rangle$ to any other state $|\phi\rangle$ by LOCC (and potentially using bound entangled states) is as follows \cite{ishizaka_bound_2004}.
Suppose the initial state $|\psi\rangle$ is entangled with Schmidt rank $r>1$.
If the target state $|\phi\rangle$ has Schmidt rank $r'\leq r$, then the usual conversion techniques \cite{lo_concentrating_2001, vidal_entanglement_1999} are applicable.
However, if $r'>r$, the initial state $|\psi\rangle$ is first converted to $|\Omega_{0,0}^r \rangle$ by LOCC \cite{vidal_entanglement_1999}.
Next, $|\Omega_{0,0}^r \rangle$ is converted into $|\Omega_{0,0}^{r'} \rangle$ by LOCC and the use of $\sigma_{r\rightarrow r'}$ as in \eqref{eq:pure_state_conversion_BE_state}.
Finally, $|\Omega_{0,0}^{r'} \rangle$ can be converted to any desired state $|\phi\rangle$ with Schmidt rank $r'$.
Importantly, the probability of success is nonzero in either case.
Consequently, bound entangled states can be useful in pure state conversion tasks.

\subsubsection{Bound Entangled States and Quantum Channels}

There are various ways of identifying quantum states with quantum channels.
Two natural ones have been investigated for bipartite bound entangled states \cite{horodecki_binding_2000, divincenzo_unextendible_2003}, and both relate to the concept of binding entanglement channels:
\begin{mydef}[Binding Entanglement Channel] \label{def:binding_ent_channel}
    A quantum channel $\mathcal{E}$ is called a binding entanglement channel if it can be used to create bipartite bound entangled states between two parties, but it cannot be used to share free entangled states.
\end{mydef}
By definition, binding entanglement channels are never entanglement breaking \cite{horodecki_entanglement_2003}.
The authors of Ref.~\cite{horodecki_binding_2000} characterized binding entanglement channels by proving
\begin{mytheorem}
    A quantum channel $\mathcal{E}$ is a binding entanglement channel if and only if its Choi state \eqref{eq:choi_operator} is a bound entangled state.
\end{mytheorem}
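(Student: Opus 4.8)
The plan is to exploit the Choi--Jamio{\l}kowski isomorphism, which already underlies the statement. Write $\tau_{\mathcal{E}} := \tfrac{1}{d} J(\mathcal{E})$ for the normalized Choi state of \eqref{eq:choi_operator}; this is exactly the bipartite state obtained when Alice prepares $|\Omega_{0,0}\rangle$ and sends one half to Bob through $\mathcal{E}$, so $\tau_{\mathcal{E}}$ is by construction one of the states the channel can create. The whole proof then reduces to a single channel--state duality for distillability, which I would establish first: \emph{every bipartite state that Alice and Bob can produce using $\mathcal{E}$ (possibly several times, interspersed with LOCC) is obtainable from the corresponding number of copies of $\tau_{\mathcal{E}}$ by an LOCC protocol that succeeds with nonzero probability.}

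The key step in this lemma is a teleportation-based channel simulation. To route a system $A'$ held by Alice (possibly correlated with a reference $A$ she keeps) through $\mathcal{E}$ to Bob, Alice instead shares a copy of $\tau_{\mathcal{E}}$ --- she holds the untouched reference half, Bob holds the channel output --- and performs a joint Bell measurement on $A'$ and her reference half. Conditioned on the outcome $|\Omega_{0,0}\rangle$, which occurs with probability $1/d^2>0$, entanglement swapping transfers $A'$ through the channel so that Bob's system together with $A$ is left in exactly $(\id_A \otimes \mathcal{E})(\rho_{AA'})$, with no correction needed. This is the crucial point that circumvents the usual obstruction that Pauli-type corrections do not commute with a general $\mathcal{E}$: since distillability only requires nonzero success probability, post-selecting on this single Bell outcome is legitimate. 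Iterating over $n$ channel uses produces the desired state from $\tau_{\mathcal{E}}^{\otimes n}$ with probability $d^{-2n}>0$, again an LOCC transformation.

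With the lemma in hand, both directions are short. For ``$\tau_{\mathcal{E}}$ bound entangled $\Rightarrow$ $\mathcal{E}$ binding'', note first that $\tau_{\mathcal{E}}$ is entangled, so $\mathcal{E}$ does create a bound entangled state; second, since the composition of two LOCC protocols is again LOCC, any distillable state creatable by the channel would make the corresponding $\tau_{\mathcal{E}}^{\otimes n}$ distillable, contradicting the undistillability of the bound entangled $\tau_{\mathcal{E}}$ (Def.~\ref{def:FE_and_BE}). Hence $\mathcal{E}$ shares bound but never free entanglement. Conversely, if $\mathcal{E}$ is binding, then $\tau_{\mathcal{E}}$, being a state the channel creates, cannot be free entangled, so it is separable or bound entangled; and it cannot be separable, because a separable Choi state characterizes an entanglement-breaking channel \cite{horodecki_entanglement_2003}, which outputs only separable states and hence could never create the bound entangled state that a binding channel by definition does. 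Therefore $\tau_{\mathcal{E}}$ is bound entangled.

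I expect the main obstacle to be the reduction lemma itself: verifying that the post-selected entanglement-swapping identity holds for an arbitrary CPTP map $\mathcal{E}$, and that the construction composes faithfully through adaptive, multi-use LOCC protocols, so that ``creatable by $\mathcal{E}$'' is genuinely captured by ``LOCC-derivable from copies of $\tau_{\mathcal{E}}$''. Once this duality is secured, the two implications of the theorem are essentially bookkeeping built on top of it.
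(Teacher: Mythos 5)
Your proposal is correct. The paper itself gives no proof of this theorem --- it states it and cites Ref.~\cite{horodecki_binding_2000} --- and your argument is essentially a reconstruction of the proof in that reference: the channel--state duality in which one direction is trivial (the Choi state is itself creatable by a single channel use) and the other rests on simulating the channel by teleportation through its Choi state. Your key technical observation is the right one: deterministic teleportation through $(\mathrm{id}\otimes\mathcal{E})(P_{0,0})$ only reproduces $\mathcal{E}$ for covariant channels, but post-selecting on the single Bell outcome $|\Omega_{0,0}\rangle$ gives an exact, correction-free simulation with probability $1/d^2$, which suffices because distillability (Defs.~\ref{def:distillation} and \ref{def:n-distillability}) only demands nonzero success probability; the exponentially small probability $d^{-2n}$ over $n$ adaptive uses is harmless since it is enough to reach $N$-distillability of $\tau_{\mathcal{E}}$ for some finite $N$. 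Two small remarks: (i) in the converse direction you can avoid invoking the entanglement-breaking characterization of separable Choi states altogether --- if $\tau_{\mathcal{E}}$ were separable, your own simulation lemma plus the LOCC-invariance of SEP already shows every state creatable by $\mathcal{E}$ is separable, contradicting the definition of a binding channel; (ii) it is worth stating explicitly that the Bell measurement on $A'R$ commutes with $\mathcal{E}$ acting on the output system, which is what lets you pull the post-selected swapping identity through the channel. Neither point is a gap.
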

Thus, utilizing the Choi-Jamiołkowski isomorphism~\cite{choi_completely_1975, jamiolkowski_linear_1972}, it is possible to unambiguously associate a binding entanglement channel $\mathcal{E}$ with a bound entangled state $\rho$ whose reduced state is maximally mixed, i.e., $\rho_A = \myTr_A(\rho) = (1/d) \id_d$.
In particular, the identification is given by
\begin{align} \label{eq:BE_channel_identification}
    \rho = (\mathcal{E}\otimes \id ) P_{0,0} \;,
\end{align}
where $P_{0,0}$ is the Bell projector defined in \eqref{eq:bell_states}.
The original bound entangled state is thus obtained by sending one half of a maximally entangled state $|\Omega_{0,0}\rangle$ as in \eqref{eq:max_ent_state_omega_00} through the corresponding channel.
As any Bell-diagonal state is locally maximally mixed (cf. Sec.~\ref{sec:bell_system}), Eq.~\eqref{eq:BE_channel_identification} fully characterizes any bound entangled state within this subset of states.
For bound entangled states with reduced states not maximally mixed (but with full rank), it is still possible to find such an identification.
However, an additional preliminary step of local filtering is necessary in this case~\cite{horodecki_binding_2000}.

An alternative route from bound entangled states to binding entanglement channels is presented in Ref.~\cite{divincenzo_unextendible_2003}.
There, the binding entanglement channel is given by the teleportation through the bound entangled state (cf. Sec.~\ref{sec:BE_teleportation}).
No free entanglement can be shared by utilizing this channel.
Otherwise, one could use it to construct a distillation scheme for the initial bound entangled state.
Furthermore, sending one half of the maximally entangled state $|\Omega_{0,0}\rangle$ through the channel, the resulting state is bound entangled.
Consequently, the channel satisfies Def.~\ref{def:binding_ent_channel} and is thus a binding entanglement channel.

\subsubsection{Quantum Key Distribution}\label{sec:Keydistribution}

The main objective of quantum key distribution (originally devised in Ref.~\cite{bennett_quantum_2014}) is to ensure secure classical communication between two parties, Alice and Bob.
This is achieved by using ``effectively distributed'' entanglement \cite{curty_entanglement_2004} to establish a private key, i.e., a classical bit string known exclusively to them.
In this creation process, any potential interference of a third party, Eve, is detectable by Alice and Bob with nonzero probability.
Alice and Bob can subsequently use the shared key to encode classical data before transmission and decode it upon receipt.

A common measure of a state's capability of allowing for a secure key generation protocol is given by \cite{horodecki_quantum_2007}:
\begin{mydef}[Private Key Distillation Rate]
        The private key distillation rate (or simply distillable key) $K_D$ is the maximally attainable asymptotic rate of converting $n$ copies of some initial quantum state $\rho_{AB} \in \denop(\hs_A\otimes\hs_B)$ via LOCC operations into a private qudit state of dimension $d_n$ of the form
    \begin{align} \label{eq:private_state}
        \gamma_{ABA'B'}^{(d_n)} = \frac{1}{d_n} \sum_{i,j=0}^{d_n-1} |ii\rangle\langle jj|_{AB} \otimes U_i \, \Tilde{\rho}_{A'B'} \, U_j^\dagger \;,
    \end{align}
    where $\Tilde{\rho}_{A'B'}\in\denop(\hs_{A'}\otimes\hs_{B'})$ and the $U_i$ are some unitaries acting on $\hs_{A'}\otimes\hs_{B'}$.
    In particular, the (device-dependent) distillable key is thus given by
    \begin{align}
        \label{eq:distillable_key}
        K_D = \sup_\mathcal{P} \; \limsup_{n,d_n\rightarrow\infty}\frac{\mathrm{log}_2 (d_n)}{n} \;,
    \end{align}
    where the supremum is taken over all possible LOCC protocols $\mathcal{P}$ converting $\rho_{AB}^{\otimes n}$ into $\gamma_{ABA'B'}^{(d_n)}$.
\end{mydef}
The name \textit{private state} for states of the form \eqref{eq:private_state} comes from Ref.~\cite{horodecki_secure_2005} where it is shown that the most general state from which a private key can be obtained by LOCC is of this form.
The private key distillation rate is very similar to the asymptotic entanglement distillation rate of Sec.~\ref{sec:ent_dist}, with the main difference being the different target state $\gamma_{ABA'B'}$ \cite{horodecki_general_2009, horodecki_secure_2005}.
In general, distilling a private state is an easier task than distilling a maximally entangled state \cite{horodecki_quantum_2009}.

For some quantum key distribution protocols, entangled states shared between the parties are a valuable resource.
It is shown in Ref.~\cite{horodecki_secure_2005} that an arbitrarily secure private key can also be obtained from bound entangled states.
To this end, a specific bound entangled state with $K_D>0$ is constructed.
The downside of this constructive proof is that it requires many resource states on a high-dimensional Hilbert space.
As demonstrated in Ref.~\cite{horodecki_low-dimensional_2008}, this high-dimensionality is not generally necessary.
Therein, bound entangled states with dimensions $d_A = d_B = 4$ are constructed from which a private key can be obtained via the one-way distillation protocol from Ref.~\cite{devetak_distillation_2005}.
This example is generalized in Ref.~\cite{pankowski_low-dimensional_2011} to a wider class of states (requiring a two-way protocol for secret key distillation).
Ref.~\cite{horodecki_bounds_2015} shows that the possibility to distill a secure key from certain PPT-entangled states relates to how much they violate a Bell inequality (c.f. Sec.\ref{sec:BIandBE}.
A different construction of bound entangled states with a positive distillable key is presented in Ref.~\cite{ozols_bound_2014}. This establishes a link between bound entangled states and classical probability theory.
An important consequence of these examples is that a zero entanglement distillation rate, a feature of all bound entangled states, does not imply a zero private key distillation rate.

The authors of Ref.~\cite{horodecki_low-dimensional_2008} argue that the volume of bound entangled states $\rho_{AB}\in\denop(\hs_A\otimes\hs_B)$ with positive distillable key, $K_D>0$, is nonzero in the case of finite Hilbert space dimensions $d_A, d_B \geq 4$.
This makes the experimental implementation of their theoretical considerations more feasible because a nonzero volume implies, in some sense, robustness against small perturbations (cf. Sec.~\ref{BEexperiment}).
Furthermore, in Ref.~\cite{banaszek_quantum_2012}, so-called ``privacy witnesses'' are introduced.
They are an analog of entanglement witnesses (cf. Sec.~\ref{sec:EWs}) but for detecting a positive distillable key of quantum states.
The advantage of these privacy witnesses is that they can facilitate the experimental verification of (bound entangled) private states compared to other methods.

The nonzero distillable key of bound entangled states might also be relevant for quantum repeaters \cite{briegel_quantum_1998}.
Commonly, such repeaters utilize entanglement distillation and teleportation, making them inapplicable in scenarios with bound entanglement.
An alternative realization is proposed in Ref.~\cite{bauml_limitations_2015}.
This does not require distillability but only a nonzero distillable key and can, therefore, be used with bound entangled states.

Interestingly, the impossibility of using bound entangled states for distillation-based quantum repeaters can be turned into an advantage. Ref.~\cite{sakarya_hybrid_2020} proposes a hybrid quantum network architecture based on bound entangled states. While allowing the encryption of classical communication between two authorized parties, the impossibility of distilling such states protects against unauthorized secret-key generation via entanglement swapping. As distillable states used in such a network do not provide this security, bound entanglement can represent a unique resource for information processing tasks.

A further focus is on device-independent quantum key generation \cite{zhang_device-independent_2022}.
In this context, the authors of Ref.~\cite{christandl_upper_2021} found PPT-entangled states such that the device-independent distillable key $K_{DI}$ is strictly less than the device-dependent one in \eqref{eq:distillable_key}, i.e., $K_{DI} < K_D$ for certain shared states.
Regarding bound entanglement in device-independent protocols, a revised Peres conjecture\footnote{For the original Peres conjecture, see the discussion in Sec.~\ref{sec:BIandBE}.} is brought forward in Ref.~\cite{arnon-friedman_upper_2021}:
\begin{myconj}
    Bound entangled states are not useful for device-independent quantum key distribution.
\end{myconj}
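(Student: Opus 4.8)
The plan is to prove the quantitative form of the conjecture, namely that the device-independent distillable key vanishes, $K_{DI}(\rho)=0$, for every PPT state $\rho$; by Thm.~\ref{thm:dist_implies_NPT} this encompasses all bound entangled states presently known. The guiding principle is that device-independence is a far stronger constraint than the device-dependent setting, and any successful argument must exploit this essentially: since Ref.~\cite{horodecki_secure_2005} shows that PPT states can have strictly positive device-dependent key $K_D>0$, no entanglement monotone that vanishes on the PPT set can upper bound $K_D$, so the naive bound $K_D \leq E_R^{\mathrm{PPT}}$ is simply false and the proof must genuinely separate $K_{DI}$ from $K_D$.

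First I would fix the asymptotic scenario: Alice and Bob measure $n$ copies of $\rho_{AB}$ with families $\{M^x_a\}$, $\{N^y_b\}$, producing a behavior $P(a,b|x,y)=\myTr((M^x_a\otimes N^y_b)\,\rho_{AB})$, while the adversary Eve holds a purification $|\psi\rangle_{ABE}$ of $\rho_{AB}^{\otimes n}$. The decisive feature of the device-independent setting is that $K_{DI}$ depends only on the behavior $P$ and must be evaluated in the worst case over \emph{all} quantum realizations reproducing $P$. Eve is therefore entitled to replace the given PPT realization by any other state and measurement set yielding the same statistics, and to purify that alternative.

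The central step is to bound $K_{DI}$ from above by a device-independent functional of the behavior, $\mathcal{E}_{DI}(P) := \inf\{\,Q(\sigma) : \sigma \text{ quantum}, \sigma \text{ reproduces } P\,\}$, and to choose $Q$ so that the infimum is forced to zero whenever $P$ is attainable by a PPT state. Two concrete candidates suggest themselves: the convex-combination attack, in which Eve decomposes $P=p\,P_L+(1-p)P_Q$ with $P_L$ local and deterministic (so Eve learns the raw symbol with probability $p$) while the residual $P_Q$ limits the extractable key; and the intrinsic-nonlocality bound, which upper bounds $K_{DI}$ by a nonlocality measure of $P$. If either functional could be shown to vanish on every behavior produced by a PPT state, the conjecture would follow.

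The hard part --- and the reason the statement remains a conjecture --- is that PPT states can be \emph{Bell-nonlocal}, so both candidate bounds fail to vanish. In the convex-combination attack the local weight $p$ need not reach $1$ for a nonlocal PPT behavior, and the intrinsic nonlocality is strictly positive precisely when $P$ is nonlocal; the existing device-independent bounds of Refs.~\cite{christandl_upper_2021,arnon-friedman_upper_2021} only certify $K_{DI}<K_D$ for particular PPT states rather than collapsing to zero. The genuine obstacle is thus to construct a \emph{device-independent privacy monotone} that (i) upper bounds $K_{DI}$, (ii) is insensitive to which quantum realization produces $P$, yet (iii) is sharp enough to detect the PPT constraint and vanish on it --- or, dually, to exhibit a single universal eavesdropping strategy that defeats every DIQKD protocol whenever the honest state is PPT. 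Either construction would have to transport the PPT property from the level of states down to that of the raw correlations, which no current technique achieves.
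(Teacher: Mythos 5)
The statement you were asked to prove is presented in the paper as a \emph{conjecture} (attributed to Ref.~\cite{arnon-friedman_upper_2021}); the paper gives no proof, and none is known. Your proposal correctly recognizes this: you do not claim a proof, and your diagnosis of why the obvious strategies fail is accurate and consistent with the paper's own surrounding discussion. In particular, the paper notes (Sec.~\ref{sec:BIandBE}) that PPT bound entangled states \emph{can} violate Bell inequalities, which is precisely the obstruction you identify --- any device-independent key bound that is a monotone of the nonlocality of the behavior $P(a,b|x,y)$ cannot be forced to zero on the PPT set, and the convex-combination attack cannot assign local weight $p=1$ to a nonlocal behavior. Your reading of Ref.~\cite{christandl_upper_2021} is also correct: it establishes only the strict separation $K_{DI}<K_D$ for particular PPT states, not $K_{DI}=0$.

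So there is no gap to report relative to the paper, because the paper proves nothing here either; but to be clear, your text is not a proof and should not be presented as one. What it is is a competent statement of the problem: one would need either a device-independent privacy monotone that upper bounds $K_{DI}$, depends only on the behavior, and vanishes on all behaviors realizable by PPT states, or a universal eavesdropping strategy against PPT honest states. Transporting the PPT property from the state level to the correlation level is exactly what is missing, and your framing of that as the essential difficulty is the right one. If this were submitted as a proof it would have to be rejected; as an account of why the conjecture is open, it matches the paper's treatment.
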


\subsection{Bound Entanglement and Entanglement Measures} \label{sec:BE_and_ent_measures}
A natural extension to whether a quantum state is entangled or not is to ask how much the state is entangled. \emph{Entanglement measures} (cf. Ref.~\cite{plenio_introduction_2007}) aim to quantify the amount of entanglement. However, up to this date, no universal measure has been found. Instead, several entanglement measures have been proposed with different operational meanings. For mixed states, these measures are generally hard to compute.
However, for pure bipartite states  the \emph{entropy of entanglement} (cf. Ref.~\cite{bennett_concentrating_1996} and defined below), based on the \emph{von Neumann entropy} provides a unique measure of entanglement (cf. Ref.~\cite{donald_uniqueness_2002}). All proposed bipartite entanglement measures should reduce to it for pure states. Depending on the context, other properties like non-increasing on average under LOCC, (sub-- or super--) additivity or continuity can be required or desired. 

In this section, we shortly introduce the most relevant entanglement measures for bipartite mixed states and elaborate on their relation to bound entanglement.
Of special interest is the appearance of irreversibility of certain quantum information processing tasks. The difference between two important proposed entanglement measures, namely the so-called \emph{entanglement cost} and the \emph{distillable entanglement}, implies an irreversible loss of entanglement as a resource for some protocols. This phenomenon is closely related to bound entanglement. Another relation to bound entanglement is reflected in the additivity of the distillable entanglement and the existence of NPT bound entangled states.

The bipartite \emph{entropy of entanglement} $E$ of a pure state is defined as the \emph{von Neumann entropy} $S(\rho) := -\myTr(\rho \log_2(\rho))$ of the reduced state:
\begin{mydef}[Entropy of Entanglement] Let $|\psi \rangle\langle \psi| \in \denop(\hs_A\otimes\hs_B)$. The (von Neumann) entropy of entanglement of $|\psi \rangle $ is defined as
\begin{align}
    E(|\psi \rangle\langle \psi|) := S(\pTrA(|\psi \rangle\langle \psi|)) =  S(\pTrB(|\psi \rangle\langle \psi|)).
\end{align}
\end{mydef}
A unique or ``canonic'' extension to mixed states is not known. One way to generalize it to mixed states is by so-called ``convex hull constructions'' based on all possible decompositions of $\rho$ into pure states $|\psi_i\rangle$ with probabilities $p_i$.  The \emph{entanglement of formation} $E_F$ is such an extension.
\begin{mydef}[Entanglement of Formation]
\label{def:entanglement_of_formation}
Let $\rho \in \denop(\hs_A\otimes\hs_B)$. Given all decompositions of $\rho$ into pure states $\rho = \sum_i p_i\; |\psi_i\rangle\langle\psi_i|$, the entanglement of formation is defined as
\begin{align}
    E_F(\rho) := \inf_{\{p_i, \psi_i\} } \left\{ \sum_i p_i\; E(|\psi_i\rangle\langle\psi_i|)\right\},
\end{align}
where the infimum is taken over all possible decompositions.
\end{mydef}
The entanglement of formation measures the minimum number of maximally entangled qubits required on average to produce the given state via LOCC \cite{bennett_concentrating_1996}. The authors of Ref.~\cite{horodecki_quantum_2009} show that this measure is non-additive, i.e., in general, $E_F(\rho_1\otimes\rho_2) \neq E_F(\rho_1) + E_F(\rho_2)$.\\
A closely related measure is the \emph{entanglement cost} $E_C$. It is defined in the asymptotic limit of infinitely many copies and quantifies the maximal rate $r$ at which maximally entangled states can be converted to copies of $\rho$ using LOCC. The conversion is allowed to be approximate with vanishing error in the limit of a large number of copies. Let $P_{0,0}^{\otimes m}$ be $m$-copies of the bipartite maximally entangled state of dimension $d^2$ \eqref{eq:bell_states}. Denoting general LOCC operations by $\mathcal{E}_{LOCC}$ and the trace norm as $\myTr|\rho| := \sqrt{\rho^\dagger\rho}$, we can define the entanglement cost as follows:
\begin{mydef}[Entanglement Cost] \label{def:entanglement_cost }Let $\rho \in \denop(\hs_A\otimes\hs_B)$. The entanglement cost is defined as
\begin{align}
    E_C(\rho) := \inf
        \left \{ 
        r: \lim_{{n \to \infty}}  
        \left [ \inf_{\mathcal{E}_{LOCC}} \myTr|\rho^{\otimes n} - \mathcal{E}_{LOCC}(P_{0,0}^{\otimes rn})| \right ]
        = 0 
        \right \}.
\end{align}
\end{mydef}
The close relation of the entanglement cost and the entanglement of formation is reflected in the equality of the \emph{regularized} entanglement of formation and the entanglement cost (cf. Ref.~\cite{hayden_asymptotic_2001}),
\begin{align}
    E_C(\rho) = \lim_{{n \to \infty}} \frac{E_F(\rho^{\otimes n})}{n}.
\end{align}
The \emph{distillable entanglement} $E_D$ quantifies entanglement from the opposite viewpoint compared to the entanglement cost. While $E_C$ represents the rate for transforming copies of a maximally entangled state to copies of the state $\rho$, $E_D$ represents the rate for the inverse process of transforming multiple copies of $\rho$ to maximally entangled states, i.e., entanglement distillation (cf. Sec.~\ref{sec:ent_dist}):
\begin{mydef}[Distillable Entanglement] \label{def:distillable_entanglement}Let $\rho  \in \denop(\hs_A\otimes\hs_B)$. The distillable entanglement is defined as
\begin{align}
    E_D(\rho) := \sup
        \left \{ 
        r: \lim_{{n \to \infty}}  
        \left [ \inf_{\mathcal{E}_{LOCC}} \myTr|\mathcal{E}_{LOCC}(\rho^{\otimes n}) - P_{0,0}^{\otimes rn}| \right ]
        = 0 
        \right \}.
\end{align}
\end{mydef}
Note that in the definitions of $E_C$ and $E_D$, often $d=2$ is assumed for the Bell state $P_{0,0}$. As discussed in Sec.~\ref{sec:ent_dist}, for $d>2$ the asymptotic rates change according to $r \rightarrow r / [log_2(d)]$. 

Another important measure is the \emph{relative entropy of entanglement} $E_R$. Using the \emph{quantum relative entropy} $S(\rho || \sigma):= \myTr(\rho \log_2(\rho) - \rho\log_2(\sigma))$, the relative entropy of entanglement can be defined as follows:
\begin{mydef}[Relative Entropy of Entanglement]
\label{def:relative_entropy_of_entanglement}
Let $X \subset \denop(\hs_A\otimes \hs_B)$ be a set that is mapped onto itself under LOCC. The relative entropy of entanglement of $\rho$ with respect to $X$ is
\begin{align}
    E_{R,X}(\rho) &:= \inf_{\sigma \in X} S(\rho||\sigma).
\end{align}
The regularized relative entropy of entanglement is
\begin{align}
    E^{\infty}_{R,X}(\rho) &:= \lim_{{n \to \infty}} \frac{E_{R,X}(\rho^{\otimes n})}{n}.
\end{align}    
\end{mydef}
If $X = \SEP$, we simply write $E_R$. However, depending on which states are considered as ``resource-neutral'', i.e., states for which zero measure is specified, also the sets $\PPT$ or the set of undistillable states can be considered as the set $X$. \\
Many other proposed entanglement measures exist (see Ref.~\cite{plenio_introduction_2007} for an introduction to the topic), including the already introduced \emph{distillable key} $K_D$ (cf. Eq.~\eqref{eq:distillable_key}) or a measure related to the concurrence (cf. Ref.~\cite{hiesmayr_two_2009}) and its quasi-pure approximation of Sec.~\ref{quasipure}.

\subsubsection{Irreversibility and PPT Bound Entanglement} 
From the definitions of the entanglement cost and the distillable entanglement, it becomes clear that if they are equal, the processes of entanglement distillation using multiple copies of a state $\rho$ is asymptotically reversible because $\rho$ can be created from distilled maximally entangled states via LOCC at the same rate. However, in Refs.~\cite{divincenzo_unextendible_2003,vidal_irreversibility_2001}, it is shown that this does not generally hold and that the process of entanglement distillation is irreversible. \\
In particular, it is shown for a specific PPT bound entangled state $\rho_b$ with $E_D(\rho_b) = 0$, that $E_C(\rho_b) > 0$.
This is generalized in Ref.~\cite{yang_irreversibility_2005} to all bound entangled states.
Consequently, interconverting the mixed-state entanglement of $\rho$ and the entanglement of the pure Bell states via asymptotic LOCC operations is not reversible. 

This also holds for catalytic LOCC operations of the form $\rho_b^{\otimes n} \otimes P_{0,0}^{\otimes m} \underset{LOCC}{\longrightarrow} P_{0,0}^{\otimes (m+l) }$ (cf. Ref.~\cite{vidal_irreversibility_2001}), even if correlations of the catalyst with the systems are allowed at the output \cite{lami_no-go_2024}. This means that, even in the asymptotic limit, the entanglement needed to locally create a PPT bound entangled state $\rho_b$ cannot be distilled using (catalytic) LOCC. Interpreting entanglement as a resource for information processing tasks, bound entangled states bind resources that cannot be accessed by distillation and thus motivate their name.

\subsubsection{Additivity of Distillable Entanglement and NPT Bound Entanglement}
The potential existence of NPT bound entanglement is linked to the additivity of the distillable entanglement. As mentioned in Sec.~\ref{sec:Activation_of_BE}, with the help of certain bound entangled states, any NPT entangled state, including a potential NPT bound entangled states $\sigma_b$, can be activated and made distillable by local operations (cf. Ref.~\cite{vollbrecht_activating_2002, horodecki_bound_1999}). Since any universal activator bound entangled state $\rho_b$ and any NPT bound entangled state $\sigma_b$ have by definition zero distillable entanglement, the distillability of the activated state would imply $E_D(\rho_b \otimes \sigma_b) >  E_D(\rho_b) + E_D(\sigma_b) = 0$ and consequently the non-additivity of the distillable entanglement.

\subsubsection{Inequalities for Entanglement Measures}
In Sec.~\ref{sec:Keydistribution}, it is discussed that there exist bound entangled states $\rho_b$ for which the distillable key \eqref{eq:distillable_key} is nonzero, i.e., $K_D(\rho_b) > 0 = E_D(\rho_b)$. In Ref.~\cite{horodecki_secure_2005}, it is shown that the following hierarchy between the entanglement measures exists:
\begin{align}
    \label{eq:inequality_entanglement_measurers}
    E_D \leq K_D \leq E_R^{\infty} \leq E_C.
\end{align}
While this inequality holds for all states, also the strict inequalities $E_D < K_D < E_C$ and $E_D < E_R^{\infty}$ are possible for certain states. Bound entangled states (with $E_D=0$) can thus exhibit nonzero  $K_D$, $ E_R^{\infty}$  and $E_C$.

\subsection{Bound Entanglement and Quantum Metrology}\label{sec:metrology}

The field of quantum metrology studies precise measurement techniques with the aim of extracting physical parameters from quantum systems.
One of the tasks is to estimate the dynamical parameter $\theta$ of a unitary evolution $U(\theta) = e^{-i \theta H}$ acting on a given quantum state $\rho$ (see, e.g., Ref.~\cite{toth_quantum_2014, liu_quantum_2019}).
This evolution is induced by a Hamiltonian $H$.
For any measurement, the estimation sensitivity is bounded by the \emph{quantum Cramér-Rao bound}, 
\begin{align}
    (\Delta \theta)^2 \geq \frac{1}{F_Q(\rho, H)} \;,
\end{align}
which involves the inverse of the \emph{quantum Fisher information} (QFI).
The latter is thus often the central object of study, and for the case under consideration, it is given by
\begin{align} \label{eq:QFI_def}
    F_Q(\rho, H) = 2 \sum_{i,j = 0}^{d-1} \frac{(\lambda_i - \lambda_j)^2}{\lambda_i + \lambda_j} \; \abs{\langle \psi_i | H | \psi_j \rangle}^2 \;.
\end{align}
Here, the $|\psi_i\rangle$ are eigenvectors of $\rho \in \denop(\hs)$, $\dim(\hs) = d$, with corresponding eigenvalue $\lambda_i$, and the sum is only over terms with $\lambda_i + \lambda_j \neq 0$.
One of the results in Ref.~\cite{braunstein_statistical_1994} states that the QFI is upper bounded for any state $\rho$ by
\begin{align} \label{eq:QFI_upper_bound}
    F_Q(\rho, H) \, \leq \, F_{Q, \max}(\rho, H) = 4  (\Delta_\rho H )^2  \;,
\end{align}
where $(\Delta_\rho H)^2 = \langle H^2 \rangle_\rho - \langle H \rangle_\rho^2$ is the variance of $H$.
Furthermore, there exists an upper bound of the QFI for separable states, obtained by maximizing \eqref{eq:QFI_def} over all states $\rho\in \mathrm{SEP}$ \cite{pezze_entanglement_2009}.
This is generally lower than the maximal upper bound given by \eqref{eq:QFI_upper_bound}.
A quantum state $\rho$ is said to be \emph{useful for quantum metrology} (for a given Hamiltonian $H$) if the QFI $F_Q(\rho, H)$ exceeds the separable bound.

Some investigations were conducted to find an answer to the question of whether bipartite bound entangled states can be useful for quantum metrology.
The first examples are presented in Ref.~\cite{toth_quantum_2018}.
Semidefinite programming is used to find numerical instances of such states, including ones in Hilbert space dimensions $2\times4$ and $d\times d$ for $3\leq d\leq 12$.
In addition to the numerical findings, the authors also give the analytical form of an example in dimension $4\times 4$.

Subsequently, starting from these states, as well as from ones in Ref.~\cite{badziag_bound_2014}, the authors of Ref.~\cite{pal_bound_2021} constructed two distinct families of ($2d\times 2d$)-dimensional bound entangled states for any $d\geq 2$ which are metrologically useful.
On top of that, they even show that for their specific Hamiltonian, the general upper bound \eqref{eq:QFI_upper_bound} is satisfied.
Hence, these states are maximally useful for parameter estimation with a specific Hamiltonian.

In conclusion, even though bound entangled states can sometimes be useful for quantum metrology, generally, maximally entangled pure states offer a greater metrological advantage.

\subsection{Bound Entanglement and Bell Inequalities}\label{sec:BIandBE}

The origin of today's quantum information theory can be dated back to the publication of John Stewart Bell in 1964~\cite{bell_einstein_1964}, where he showed that the predictions of quantum theory of two joint quantum systems are in general incompatible with a local realistic hidden variable theory. Or in the words of Bell~\cite{bell_speakable_1987}:\\ ``\textit{If [a hidden-variable theory] is local, it will not agree with quantum mechanics, and if it agrees with quantum mechanics, it will not be local.}''

In mathematical terms, any local realistic model assumes that the measurement statistics of a joint system are reproduced by
\begin{equation} \label{eq:prob_bell_ineq}
P(A,a; B, b)=\int \diff \lambda \; \rho(\lambda)\; p(A, a,\lambda)\cdot p(B, b,\lambda)\;, 
\end{equation}
where $A,B$ are observables chosen independently by Alice and Bob and $a,b$ are their measurement outcomes. Here, $\rho(\lambda)$ is any normalized distribution of hidden parameters $\lambda$. The factorization of the joint probability distribution of $A$ and $B$ into the individual probabilities $p$ is the assumption denoted as \textit{local} by Bell. 

Combining different probabilities of the form \eqref{eq:prob_bell_ineq} allows us to find upper and lower bounds on those combinations, called Bell inequalities. Those discriminate between local realistic models and other models, particularly quantum theory. A plethora of experiments in many distinct quantum systems show a violation of such Bell inequalities and, herewith, a contradiction to the assumptions behind Bell inequalities. A violation of a Bell inequality is only found in the case the joint system is entangled, but not all entangled systems violate a particular Bell inequality.

Here we present the result of Ref.~\cite{vertesi_disproving_2014}, where the authors found a particular bound entangled state that does violate a particular Bell inequality. Consequently, the Peres conjecture~\cite{peres_all_1999}  -- that bound entangled states are equal to states that exhibit a local realistic explanation -- is falsified. An alternative formulation of the conjecture is that any entangled state that leads to Bell inequality violation must be free entangled.

Let us consider the qutrit-qutrit state
\begin{equation}
   \rho_{\textrm{bound}}= \left(
\begin{array}{ccccccccc}
 \frac{9789}{27536} & 0 & 0 & 0 & \frac{3239}{27536} & \frac{5 \sqrt{\frac{131}{2}}}{13768} & \frac{45 \sqrt{\frac{131}{2}}}{6884} & 0 & 0 \\
 0 & \frac{3311}{27536} & \frac{5 \sqrt{\frac{131}{2}}}{13768} & \frac{3239}{27536} & 0 & 0 & 0 & -\frac{45 \sqrt{\frac{131}{2}}}{6884} & -\frac{9}{6884} \\
 0 & \frac{5 \sqrt{\frac{131}{2}}}{13768} & \frac{1}{13768} & \frac{5 \sqrt{\frac{131}{2}}}{13768} & 0 & 0 & 0 & -\frac{9}{6884} & 0 \\
 0 & \frac{3239}{27536} & \frac{5 \sqrt{\frac{131}{2}}}{13768} & \frac{3311}{27536} & 0 & 0 & 0 & -\frac{45 \sqrt{\frac{131}{2}}}{6884} & \frac{9}{6884} \\
 \frac{3239}{27536} & 0 & 0 & 0 & \frac{9789}{27536} & -\frac{5 \sqrt{\frac{131}{2}}}{13768} & -\frac{45 \sqrt{\frac{131}{2}}}{6884} & 0 & 0 \\
 \frac{5 \sqrt{\frac{131}{2}}}{13768} & 0 & 0 & 0 & -\frac{5 \sqrt{\frac{131}{2}}}{13768} & \frac{1}{13768} & \frac{9}{6884} & 0 & 0 \\
 \frac{45 \sqrt{\frac{131}{2}}}{6884} & 0 & 0 & 0 & -\frac{45 \sqrt{\frac{131}{2}}}{6884} & \frac{9}{6884} & \frac{81}{3442} & 0 & 0 \\
 0 & -\frac{45 \sqrt{\frac{131}{2}}}{6884} & -\frac{9}{6884} & -\frac{45 \sqrt{\frac{131}{2}}}{6884} & 0 & 0 & 0 & \frac{81}{3442} & 0 \\
 0 & -\frac{9}{6884} & 0 & \frac{9}{6884} & 0 & 0 & 0 & 0 & \frac{9}{6884} \\
\end{array}
\right) \;.
\end{equation}
The state is bound entangled, which is detected by the realignment criterion (see Sec.~\ref{sec:realignment}).
The following observable, in the form of an entanglement witness (see Sec.~\ref{sec:EWs}), can be defined:
\begin{equation}
W_{Bell}=
\left(
\begin{array}{ccccccccc}
 \frac{1}{25} & 0 & 0 & 0 & -\frac{1}{25} & \frac{\sqrt{2}}{25} & -\frac{\sqrt{21}}{25} & 0 & 0 \\
 0 & \frac{3}{25} & \frac{\sqrt{2}}{25} & -\frac{1}{25} & 0 & 0 & 0 & \frac{\sqrt{21}}{25} & \frac{\sqrt{42}}{25} \\
 0 & \frac{\sqrt{2}}{25} & \frac{46}{25} & \frac{\sqrt{2}}{25} & 0 & 0 & 0 & \frac{\sqrt{42}}{25} & 0 \\
 0 & -\frac{1}{25} & \frac{\sqrt{2}}{25} & \frac{3}{25} & 0 & 0 & 0 & \frac{\sqrt{21}}{25} & -\frac{\sqrt{42}}{25} \\
 -\frac{1}{25} & 0 & 0 & 0 & \frac{1}{25} & -\frac{\sqrt{2}}{25} & \frac{\sqrt{21}}{25} & 0 & 0 \\
 \frac{\sqrt{2}}{25} & 0 & 0 & 0 & -\frac{\sqrt{2}}{25} & \frac{46}{25} & -\frac{\sqrt{42}}{25} & 0 & 0 \\
- \frac{\sqrt{21}}{25} & 0 & 0 & 0 & \frac{\sqrt{21}}{25} & -\frac{\sqrt{42}}{25} & \frac{21}{25} & 0 & 0 \\
 0 & \frac{\sqrt{21}}{25} & \frac{\sqrt{42}}{25} & \frac{\sqrt{21}}{25} & 0 & 0 & 0 & \frac{21}{25} & 0 \\
 0 & \frac{\sqrt{42}}{25} & 0 & -\frac{\sqrt{42}}{25} & 0 & 0 & 0 & 0 & \frac{8}{25} \\
\end{array}
\right) \;.
\end{equation}
This satisfies for all local realistic hidden-variable theories~\cite{vertesi_disproving_2014}
\begin{equation}
0\leq \myTr({W_{Bell}\; \rho})\;.
\end{equation}
Computing $\myTr(W_{Bell}\;\rho_{\textrm{bound}})=-0.000263144$ shows a violation and thus the Peres conjecture is falsified. Undistillable states contradict the assumptions behind the derivation of Bell inequalities (the extension to the dimension $d=4,\dots ,8$ is presented in Ref.~\cite{pal_family_2017}). In Ref.~\cite{horodecki_bounds_2015} the question is asked how much one can violate a Bell inequality with PPT-entangled states and it is concluded that a violation cannot be large with respect to different measures, which affects quantum key generation (see Sec.~\ref{sec:Keydistribution}).

Researchers have tried to find different families of bound entangled states that violate other Bell inequalities. For example, the systematic set of Bell inequalities for qudit systems~\cite{collins_bell_2002}, so-called CGLMP-Bell inequalities, have been discussed for a subset of families within the Magic Simplex~\cite{spengler_geometric_2011} and found a negative answer, i.e., no bound entangled state violated these Bell inequalities. It remains open whether a general set of Bell inequalities and bound entangled states can be constructed in a systematic way. With that, the relation of undistillability and assumptions behind Bell inequalities, such as non-locality, can be further enlightened.

\subsection{Bound Entanglement and Lorentz Invariance} \label{sec:lorentz_invariance_BE}

One of the biggest open questions in physics is whether the two great theories, relativity and quantum theory, can be unified. Entanglement and quantum information theoretic tasks have therefore been intensively discussed in a relativistic setting (see, e.g., Ref.~\cite{friis_relativistic_2010} and references therein). In Ref.~\cite{caban_bound_2023}, the authors put forward the question of whether bound entangled states differ from free entangled states by being invariant under a Lorentz boost. They found a negative answer, i.e., boosted bound entangled states can become separable, bound, or free entangled states depending on the applied relativistic boost.
We sketch their derivation of this fact in the remainder of this section.

A strong guiding principle is that physics should not change for different observers. Let us choose as an observable an entanglement witness $W$ (see Sec.~\ref{sec:EWs}), then the following relation has to hold
\begin{equation}\label{relativisticwitness}
    0\leq \tr(W \rho)=
 \tr(W^{\mathsf{boosted}}\rho^{\mathsf{boosted}})\;,
\end{equation}
where we denoted by ``boosted'' the observable that a Lorentz-boosted observer is assigning to the experimental setting.

Let us choose as a particular entanglement witness, the MUB-witness capable of detecting bound entangled states, defined in Eq.~(\ref{MUBwitness}), $W=W_{\textrm{spin}}(\mathcal{M}_{d+1},s\not=\frac{s}{2})\otimes \mathbb{1}_{\textrm{momentum}}$ (see Sec.~\ref{sec:MUBwitness}). Furthermore, let us choose $d=3$ and $s=1$ and a boost along the $z$-axis with a particular choice of energy (details can be found in Ref.~\cite{caban_bound_2023,caban_lorentz-covariant_2005}). For any relativistic setting, one has to consider the spin and the momentum degrees of freedom of a quantum particle.
Let us choose the Magic Simplex state $\rho_{\textrm{spin}}(x)=2 x\; P_{0,0}+x\; P_{1,1}+(\frac{1}{3}-x)\;\sum_{i=0}^{2} P_{2,i}$ for two spin-$1$ particles with $x\in[0,\frac{1}{3}]$ for the spin part and a momentum state, e.g., $|\psi_{\textrm{momentum}}\rangle=\frac{1}{\sqrt{2}}\lbrace |\vec{p}_+\vec{p}_-\rangle+|\vec{p}_-\vec{p}_+\rangle\rbrace$ with $\vec{p}_\pm$ denoting opposite momenta. Then for $\rho=\rho_{\mathsf{spin}}(x)\otimes \rho_{\mathsf{momentum}}$ the inequality \eqref{relativisticwitness} turns to
\begin{equation}
    0\leq 1-x\;,
\end{equation}
which is not violated for any $x$. This is as expected since we have chosen the spin and the momentum part of our total state $\rho$ to be separable.

The situation changes when we consider the spin or momentum subspaces, which we achieve by partial tracing over the respective subsystems. For the spin part, we find for the unboosted case
\begin{equation}
    0\leq\myTr\left(\myTr_\mathsf{momentum} ( W )\myTr_\mathsf{momentum} (\rho)\right)=-x\;.
\end{equation}
Thus, a violation is found for any $x$ that is not equal to zero. Indeed, the state is separable for $x=0$, PPT entangled for $x\in\{0,\frac{2}{15}]$ and else free entangled. Here, the MUB witness detects the full range of entanglement. On the other hand, for a particular relativistic boost, one finds that for no choice of $x$, a violation can be found, i.e.,
\begin{equation}
    0\;\leq\;\myTr\left(\myTr_\mathsf{momentum} (W^{\mathsf{boosted}}) \myTr_\mathsf{momentum}(\rho^{\mathsf{boosted}})\right)=-x+0.45\;.
\end{equation}
Further detailed analyses show that the spin states $\myTr_\mathsf{momentum}(\rho^{\mathsf{boosted}})$ are in this case separable.

In summary, a boosted state can lose its property of being bound or free entangled, i.e., the Lorentz invariance does not hold for either bound or free entangled states; it is not connected to the PPT property.
From a practical point of view, different boosts allow to produce new sets of bound entangled states out of a particular bound entangled state.

\section{Experiments Verifying Bound Entanglement}\label{BEexperiment}

In recent years, impressive progress has been made in controlling the preparation, dynamics, and measurement of high-dimensional quantum systems \cite{wang_qudits_2020, chi_programmable_2022, ringbauer_universal_2022,carvacho_experimental_2017,hiesmayr_observation_2016}. Quantum computers, which work with qudits instead of qubits, are also no longer out of sight. With this progress, investigations into the preparation and observation of bound entangled states have also begun. In 2013, the first experiment based on MUB witnesses (cf. Sec.~\ref{sec:MUBwitness}) was performed for two photons.
We present the sketch of this experiment in Sec.~\ref{Firstexperiment}, which confirmed for the first time the existence of this peculiar type of entangled states.

In general, verifying bound entanglement in the laboratory is an intrinsically hard problem since a large coherent volume of bound entangled states in Hilbert space must be found to enable verification despite experimental errors. In addition, a method must be found to distinguish between bound and free entanglement. With the discovery of the one-parameter family of states by the Horodecki family in 1998 \cite{horodecki_mixed-state_1998}, it became clear that the search for experimental verification requires a larger family of bound entangled states to fulfill the experimental requirements. This led to investigating states within a subset of the entire Hilbert space, the Magic Simplex (see Sec.~\ref{sec:bell_system}). These analyses show that bound entangled states are not rare in this subset, e.g., for $d=3$ a relative volume of about $10\%$ is bound entangled (see Sec.~\ref{sec:frequency_bound}) and that bound entangled states can occupy large connected volumes in the Bell-diagonal state space that allow experimental investigations.

Once a family of states can be generated experimentally, the next task is to verify their entanglement. This can be done by any entanglement witness that can be realized on these physical systems but does not distinguish between bound and free entangled states. Alternatively, one can perform state tomography and then apply the tools presented in Chapters~\ref{sec:free_vs_bnd} and \ref{sec:detection of BE} to detect bound entanglement. This may not be the best strategy as it is time-consuming, often not realizable for experimental reasons, and usually requires additional assumptions. Another strategy is proposed in Ref.~\cite{sentis_bound_2018} via 
a hypothesis test tailored to the detection of bound entanglement and
with a measure of statistical significance, but not yet applied experimentally. Ideas for preparing bound entangled states and proving their bound nature are urgently needed.

\subsection{First Experiment Verifying the Existence of Bound Entangled States for Two Qutrits}\label{Firstexperiment}

In 2013 the authors of 
Ref.~\cite{hiesmayr_complementarity_2013} chose a particular state family of the Magic Simplex (cf. Def.~\ref{def:magic_simplex}) and applied a maximum complementarity protocol based on MUB observables to witness the entanglement of two photons entangled in their orbital angular momentum. Furthermore, they performed state tomography to verify that the generated states were PPT. Thus the existence of bound entanglement in Nature was first experimentally verified $15$ years after its discovery~\cite{horodecki_mixed-state_1998}.

Spatially entangled photon pairs can be generated by shining with a Gaussian laser beam ($\lambda= 413$~nm wavelength) onto a collinear Type-I spontaneous parametric downconversion (SPDC) crystal. The phase-matching in the crystal is tuned via
temperature control to produce a similar amount of downconverted photons in the $l=0$ and $l=\pm 1$ orbital angular momentum (OAM) mode. The conservation of orbital angular momentum forces opposite $l$, which allows only for the following state (which defines for this section the computational basis by setting $|l\rangle\mapsto |l+1\rangle$)
\begin{equation}
|\psi_{SPDC}\rangle=\frac{1}{\sqrt{3}}\big( |l=0\rangle\otimes|l=0\rangle+|l=+1\rangle\otimes|l=-1\rangle+|l=-1\rangle\otimes|l=+1\rangle\big)\;.
\end{equation}
This state is a Bell state that allows the construction of any other Bell state by applying a Weyl operator onto one of the subsystems (cf. Def.~\ref{def:bell_states}). The experiment achieves this via spatial light modulators (SLM), i.e., nonlinear materials. Projective measurements are done by appropriate mode transformation and subsequent imaging onto the core of a single mode fiber. The mixing was achieved by two different methods, one by considering only a subset of all possible measurements and the other by stochastic rotations applied to one of the photons. The photons are finally guided to an avalanche photodiode which functions as a single-photon counter.

In the experiment, four different correlation functions were measured, i.e., ($\mathcal{B}_0$ is the computational basis, and $|i_k\rangle$ are the basis states of the mutually unbiased bases $\mathcal{B}_k$; see Def.~\ref{def:unbiased}),
\begin{eqnarray}
C_{\mathcal{B}_0,\mathcal{B}_0}&=&\sum_{i=0}^{d-1}\;\myTr(|i\rangle\langle i|\otimes|i+2\rangle\langle i+2|\,\rho)\;,\\
C_{\mathcal{B}_k,\mathcal{B}_k}&=&\sum_{i=0}^{d-1}\;\myTr(|i_k\rangle\langle i_k|\otimes|i_k^*\rangle\langle i_k^*|\,\rho)\;,\quad\textrm{with}\quad 1\leq k\leq 3 \;.
\end{eqnarray}
The sum of these four correlation functions relates to the partial transpose of the MUB witness~\eqref{MUBwitness} for $m=4$ and $s=2$. If the sum is found to be greater than $2$, entanglement is detected (cf. Eq.~\eqref{MUBKriterion}). According to Thm.~\ref{mubforexp} this MUB witness forms a non-decomposable witness and can therefore detect bound entanglement. 
The experimental results were~\cite{hiesmayr_complementarity_2013}:
\begin{center}
  \begin{tabular}{c|c|c}
 Correlation function    & Theory  & Experiment\\
 \hline
  $C_{\mathcal{B}_0,\mathcal{B}_0}$   &$0.675$  &$0.667\pm 0.005$\\
  $C_{\mathcal{B}_1,\mathcal{B}_1}$ &$0.468$ &$0.463\pm 0.005$\\
  $C_{\mathcal{B}_2,\mathcal{B}_2}$ &$0.468$ &$0.463\pm 0.005$\\
  $C_{\mathcal{B}_3,\mathcal{B}_3}$ &$0.468$ &$0.463\pm 0.005$\\
  \hline
  $\sum_{i=0}^{3} C_{\mathcal{B}_i,\mathcal{B}_i}$& $2.079$& $2.066\pm 0.02$
\end{tabular}  
\end{center}
Here, the state was chosen from the Magic Simplex family $B_1$ with $\alpha =-0.07$, $\beta =-1.73$, defined and illustrated in Fig.~\ref{fig:notcompleteMUB}.

The entanglement is detected since a value greater than $2$ is only possible for entangled states. The experiment also showed excellent control since all entanglement class conserving symmetries (cf. Sec.~\ref{sec:bell_system}) have been used to implement equivalent states.

The PPT property was verified via state tomography for different runs of the experiment and different choices of bases~\cite{loffler_detection_2013}. State tomography can always be performed by measuring all possible correlation functions based on MUBs, i.e., $\tilde{C}_k=\sum_{i,j=0}^{d-1}\;\myTr(|i_k\rangle\langle i_k|\otimes|j_k\rangle\langle j_k|\,\rho)$.

\section{Construction of Bound Entangled States} \label{sec:construction}

Constructing bound entangled states is generally difficult as no efficient solution exists to discriminate the sets SEP, BE, and FE of general bipartite systems.
There are various ways of approaching this problem, e.g., depending on the purpose of the construction or whether some bound entangled states are initially known.

An intensively studied strategy to construct bound entangled states relies on unextendible product bases (UPBs).
We discuss this method in detail in Sec.~\ref{sec:UPB_construction}.
Alternatively, one can also analytically or numerically construct bound entangled states in different ways (e.g., Ref.~\cite{popp_almost_2022, li_sequentially_2021, sindici_simple_2018, moerland_bound_2024}).
This can be accomplished by enforcing that the states satisfy the PPT criterion (Thm.~\ref{thm:PPT-crit}) while violating at least one other entanglement criterion (cf. Sec.~\ref{sec:detection of BE}).
In addition, depending on the problem investigated, further useful restrictions on the bound entangled states may be implemented this way.

The downside of the numerical approaches is that they often do not offer a deep insight into the analytic structure of bound entangled states.
However, other properties, such as the frequency of bound entangled states, may become more feasible to investigate (cf. Sec.~\ref{sec:frequency_bound}).

The situation is different if some bound entangled states are known initially (e.g., as the result of an analytic or numeric construction described above).
Depending on the properties of these states, it is sometimes possible to construct further bound entangled states from them \cite{zhao_construction_2016, bandyopadhyay_non-full-rank_2005,bej_unextendible_2021, bandyopadhyay_non-full-rank_2005}.
Let us mention two strategies that build on concepts discussed in previous sections.
First, one can easily utilize geometric symmetries to construct additional bound entangled states if the initial one is a Magic Simplex state, i.e., of standard Bell-diagonal form (cf. Sec.~\ref{sec:magic_simplex}).
Second, as discussed in Sec.~\ref{sec:lorentz_invariance_BE}, one may find new bound entangled states by Lorentz boosting others.

\subsection{Construction from Unextendible Product Bases} \label{sec:UPB_construction}

One way to analytically construct PPT bound entangled states utilizes the concept of unextendible product bases, defined by:
\begin{mydef}[Unextendible Product Basis] \label{def:UPB}
    Given a multipartite Hilbert space $\hs = \bigotimes_{i=1}^{n} \hs_i$ of $n\geq 2$ parties and a set $S= \{ |\phi_j\rangle = |\phi^1_j\rangle \otimes |\phi^2_j\rangle \otimes ... \otimes |\phi^n_j\rangle \}_{j=1}^{m}$ consisting of $m$ mutually orthogonal product states in $\hs$, i.e., $\langle\phi_i | \phi_j\rangle = \delta_{ij}$ for all $i,j = 1, ..., m$.
    We denote by $\hs_S$ the subspace of $\hs$ spanned by the vectors in $S$, and by $\hs_S^\perp$ the complementary subspace of $\hs$ spanned by the set of vectors orthogonal to the ones in $S$.
    The set $S$ is called unextendible product basis (UPB) if $\dim(\hs_S^\perp)\neq 0$ and $\hs_S^\perp$ contains no product vectors.
\end{mydef}

In Ref.~\cite{bennett_unextendible_1999, divincenzo_unextendible_2003}, it is shown that such UPBs exist for specific dimensions. Given a UPB, one can construct PPT bound entangled states as follows \cite{bennett_unextendible_1999}:
\begin{mytheorem} \label{thm:BE_from_UPB}
    Let $S = \{ |\phi_j \rangle \}_{j=1}^{m}$ be a UPB on a Hilbert space $\hs = \bigotimes_{i=1}^{n} \hs_i$ with $\dim(\hs) = \Pi_{i=1}^n d_i = D$ and $1 + \sum_{i=1}^n (d_i - 1) \leq m \leq D-4$.
    Then the state
    \begin{align} \label{eq:BE_from_UPB}
        \rho_{UPB} = \frac{1}{D-m} \left( \id_D - \sum_{j=1}^m |\phi_j\rangle\langle \phi_j| \right) 
    \end{align}
    is bound entangled.
\end{mytheorem}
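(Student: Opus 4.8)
The plan is to establish three facts about $\rho_{UPB}$ and then invoke the undistillability of PPT states. Writing $P_S := \sum_{j=1}^m |\phi_j\rangle\langle\phi_j|$, orthonormality of the $|\phi_j\rangle$ makes $P_S$ the orthogonal projector onto $\hs_S$, so $\rho_{UPB} = \tfrac{1}{D-m}(\id_D - P_S)$ is exactly $\tfrac{1}{D-m}$ times the projector onto the complementary subspace $\hs_S^\perp$. Since $\dim(\hs_S^\perp) = D - m \geq 4 > 0$ by the hypothesis $m \leq D-4$, this operator is positive semi-definite with $\myTr(\rho_{UPB}) = \tfrac{1}{D-m}\dim(\hs_S^\perp) = 1$, hence $\rho_{UPB} \in \denop(\hs)$. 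It then remains to show that $\rho_{UPB}$ is PPT and entangled.

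For the PPT property I would analyze the partial transposition subsystem by subsystem. On a bipartite cut $\hs = \hs_A \otimes \hs_B$ each $|\phi_j\rangle = |\phi_j^A\rangle \otimes |\phi_j^B\rangle$, and
\[
(|\phi_j\rangle\langle\phi_j|)^\Gamma = |\phi_j^A\rangle\langle\phi_j^A| \otimes |\phi_j^{B*}\rangle\langle\phi_j^{B*}|,
\]
again a rank-$1$ product projector. The central observation is that the conjugated family $\{|\phi_j^A\rangle \otimes |\phi_j^{B*}\rangle\}_{j=1}^m$ is again orthonormal: for $i \neq j$ the original relation $\langle\phi_i^A|\phi_j^A\rangle\langle\phi_i^B|\phi_j^B\rangle = 0$ forces at least one factor to vanish, and if the $A$-overlap vanishes the conjugated product vanishes immediately, while if $\langle\phi_i^B|\phi_j^B\rangle = 0$ then so does its complex conjugate $\langle\phi_i^{B*}|\phi_j^{B*}\rangle$. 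Hence $P_S^\Gamma$ is once more a rank-$m$ orthogonal projector, so $\rho_{UPB}^\Gamma = \tfrac{1}{D-m}(\id_D - P_S^\Gamma)$ is proportional to the projector onto its orthogonal complement and is therefore positive. The same computation across every bipartition yields the PPT property.

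The hard part will be showing that $\rho_{UPB}$ is entangled; this is exactly where unextendibility enters. I would argue by contradiction via the range. Suppose $\rho_{UPB}$ were separable, so $\rho_{UPB} = \sum_k p_k |\alpha_k\rangle\langle\alpha_k|$ with $p_k > 0$ and each $|\alpha_k\rangle$ a product vector. For a positive operator expressed as a strictly positive combination of rank-$1$ projectors, the range equals the span of the constituent vectors (its kernel is the common orthogonal complement of the $|\alpha_k\rangle$), so every $|\alpha_k\rangle$ lies in $\mathrm{range}(\rho_{UPB}) = \hs_S^\perp$. But Def.~\ref{def:UPB} stipulates that $\hs_S^\perp$ contains no product vectors, contradicting the existence of even a single product $|\alpha_k\rangle$ (the decomposition is nonempty since $\rho_{UPB} \neq 0$). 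Hence no separable decomposition exists and $\rho_{UPB} \in \ENT$.

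Finally I would combine these ingredients: $\rho_{UPB}$ is PPT, so by Thm.~\ref{thm:dist_implies_NPT} it is undistillable, and being entangled it qualifies as bound entangled by Def.~\ref{def:FE_and_BE}. The role of the dimension bounds is then transparent: the lower bound $m \geq 1 + \sum_{i=1}^n(d_i-1)$ is the minimal cardinality for which an unextendible product basis can exist at all (fewer mutually orthogonal product states always leave some product direction unblocked), while the upper bound $m \leq D-4$ guarantees $\mathrm{rank}(\rho_{UPB}) = D - m \geq 4$, both ensuring $\hs_S^\perp$ is large enough to be genuinely product-free and remaining consistent with the absence of bound entangled states of rank at most $3$ (cf. Thm.~\ref{thm:rank_BE}).
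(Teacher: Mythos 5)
Your proof is correct and follows essentially the same route as the paper: PPT via the product structure of the partially transposed projectors, entanglement via the fact that the range $\hs_S^\perp$ contains no product vectors (your contradiction argument is just the range criterion, Thm.~\ref{thm:range_crit}, unpacked), and undistillability via Thm.~\ref{thm:dist_implies_NPT}. If anything, you are more careful than the paper in verifying that the conjugated product vectors remain orthonormal so that $\rho_{UPB}^\Gamma$ is genuinely positive semi-definite.
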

To see that this is true, note that by construction, the state $\rho_{UPB}$ is PPT because $\id_D$ is invariant under partial transposition and the states $|\phi_j \rangle$ are product states.
In particular, $(|\phi_j\rangle\langle\phi_j|)^\Gamma = |\Tilde{\phi}_j \rangle\langle\Tilde{\phi}_j|$, where $|\Tilde{\phi}_j\rangle$ are product states, and thus $\rho_{UPB}^\Gamma$ is again a quantum state.
Furthermore, Def.~\ref{def:UPB} entails that if $S$ is a UPB, the complementary subspace $\hs_S^\perp$ is non-empty and contains only entangled states.
Thus, by the range criterion (Thm.~\ref{thm:range_crit}), every state $\rho \in \linop(\hs_S^\perp)$ is entangled, and consequently $\rho_{UPB}$ is bound entangled.

The state $\rho_{UPB}$ in \eqref{eq:BE_from_UPB} is proportional to the projection operator onto the complementary subspace $\hs_S^\perp$.
As for the range of $m$ in the theorem, it is proven in Ref.~\cite{bennett_unextendible_1999} that a lower bound on the number of elements in a UPB is given by $m_{min} = 1 + \sum_{i=1}^n (d_i - 1)$, where $d_i = \dim(\hs_i)$ (see also Ref.~\cite{alon_unextendible_2001}).
Furthermore, it is clear that $\mathrm{rank}(\rho_{UPB}) = D-m$.
As no bound entangled states exist with rank smaller than 4 (cf. Thm.~\ref{thm:rank_BE}), a UPB cannot consist of more than $m_{max} = D-4$ product states.
Conversely, such maximal UPBs can be used to construct bound entangled states with minimal rank.

Even though Thm.~\ref{thm:BE_from_UPB} establishes a way of constructing bound entangled states, it does not completely characterize them in general, i.e., not all bound entangled states can be constructed via a UPB.
This is demonstrated in Ref.~\cite{bennett_unextendible_1999} by considering the bipartite bound entangled state from Ref.~\cite{horodecki_separability_1996} with Hilbert space dimensions $d_1 = 2$ and $d_2 = 4$.
As no UPBs exist in dimensions $2\times n$ for any $n$, such a qubit-ququart state cannot be constructed in the above way.
However, in special cases, UPBs enable a complete characterization of specific classes of bound entangled states.
As shown in Ref.~\cite{skowronek_three-by-three_2011}, for bipartite qutrit-qutrit systems, all rank-4 PPT bound entangled states can be obtained (up to local linear transformations) by maximal UPBs using Thm.~\ref{thm:BE_from_UPB}.

Lastly, we mention that Thm.~\ref{thm:BE_from_UPB} does not present the only way to construct bound entangled states from UPBs.
A different (but similar) approach is taken in Ref.~\cite{pittenger_unextendible_2003}.
The authors consider the convex hull of the states in a UPB and obtain a more involved yet stronger result than the one from Ref.~\cite{bennett_unextendible_1999}.
In particular, they construct further bound entangled states in the vicinity of $\rho_{UPB}$ from the same UPB.

\section{Conclusion and Outlook} \label{sec:outlook}

This review illuminates that many features of one of our most fundamental physical theories are still not fully understood.
The tensor-product structure of the Hilbert space underlying bipartite finite-dimensional quantum systems is the origin of non-classical correlations and, thus, entanglement.
In this context, the separability problem, i.e., distinguishing separable from non-separable quantum states, is one of the main open questions.
Solving this is shown to be NP-hard for general quantum systems of dimension $d>6$.
At its bottom lies the notion of local quantum operations and classical communications (LOCC) as entangled states are precisely those that can not be created from separable states by LOCC alone.
Even though we currently lack a good way to detect and quantify entanglement, it is clear that some states are more entangled than others in the sense that they exhibit stronger non-classical correlations.
This is especially true when considering probabilistic mixtures of pure quantum states, i.e., mixed states, which allows one to single out maximally entangled pure states from the rest.
One question concerning the production of strongly entangled states via LOCC is: Given multiple copies of some weakly entangled states, is it possible to manipulate them by LOCC to obtain fewer but more strongly entangled states?
Such a procedure is called entanglement distillation.
It is proven that a necessary condition for the distillability of a quantum state is that it violates the PPT criterion.
Entangled states that can not be distilled to maximally entangled states are called bound entangled, and they naturally arise in the study of distillation.
They seem to play a crucial role in our way to a full understanding of entanglement and its utilization as a resource.

Despite extensive research, the exact role of bound entanglement in connection to LOCC, entanglement distillation, and the PPT criterion still needs to be determined.
On the positive side, any new insight into bound entangled states will likely also lead to advances regarding the separability problem and vice versa.
Many possible approaches to tackling this issue are summarized in this review.
Thus, studying bound entangled states is crucial for fully understanding entanglement and the general non-classical nature of quantum physics.
Bound entangled states show limited use in currently known quantum technologies.
Nonetheless, further investigation into the unique properties of bound entanglement may lead to the conception of specialized tasks only realizable with those states.

Setting aside these general motivations for the ongoing study of bound entanglement, the rest of this section will review the most relevant open questions from the previous sections.
It thus presents a more detailed outlook on potential future research.
Some questions that have found a definitive answer and are discussed in this review are collected in Tbl.~\ref{tab:Summary}.

The current state of research concerning the detection of (bound) entanglement is summarized in Sec.~\ref{sec:free_vs_bnd} and Sec.~\ref{sec:detection of BE}.
To date, the only sufficient criterion for the undistillability of entangled states is the PPT criterion (Thm.~\ref{thm:PPT-crit}).
This poses the question of whether NPT bound entangled states exist, a topic discussed in more detail at the end of this section.
Sec.~\ref{sec:detection of BE} emphasizes that almost all the presented entanglement criteria do not follow a strict hierarchy regarding the number of entangled states they detect.
Hence, a universal criterion subsuming most of them has yet to be discovered.
Finding such a universal criterion is generally difficult as the mathematical structure of the known criteria is often very different.
One possible way forward is investigating the relation of bound entangled states (and their properties like entropic quantities, mixedness, etc.) to the criteria for detecting them.
Concerning entanglement witnesses presented in Sec.~\ref{sec:EWs} and Sec.~\ref{POVMWitnesses}, discovering an efficient way to find non-decomposable witnesses would lead to a novel method of constructing bound entangled states.
One possible approach is via entanglement witnesses based on mutually unbiased bases (MUBs).
The exact relation of MUBs and general non-decomposable witnesses is still not fully understood (cf. Sec.~\ref{sectionsubsets}).
Furthermore, non-decomposable witnesses can be constructed from non-decomposable PNCP maps via the Choi-Jamiołkowski isomorphism, possibly leading to the discovery of novel bound entangled states with interesting properties (cf. Ref.~\cite{korbicz_structural_2008}).
Compared to other methods, entanglement witnesses constitute observables, allowing the detection of bound entangled states in experiments.
As experimentally more accessible bound entangled states may be found using such a construction, verifying and subsequently utilizing bound entanglement in experiments could become easier.

Transitioning to the various characteristics of bound entangled states discussed in Sec.~\ref{charOfBE}, one fundamental question concerns the frequency of these states.
For a long time, it was argued that bound entanglement is a rare phenomenon.
More recent studies show that the relative volume of bound entangled states in the state space is not zero (cf. Sec.~\ref{sec:frequency_bound}), and hence, bound entanglement is not negligible in general.
However, any volume calculation presumes the choice of a probability measure for the state space, yet no unique canonical measure has been found to date.
Thus, a definitive relative volume determination is impossible.
Investigating candidates for such a canonical measure might yield insight into bound entanglement and the geometry of the quantum state space.
Additionally, numerical investigations suggest that PPT entanglement decreases approximately exponentially, independent of the chosen measure.
Further insight into different state space measures could also clarify why this seems to be the case.
As for the frequency of bound entanglement in the subset of Bell-diagonal states, it is remarkable that the largest share of bound entanglement can be found for the standard Bell basis (i.e., the Magic Simplex; cf. Sec.~\ref{sec:magic_simplex}).
The standard Bell basis possesses additional algebraic properties compared to non-standard Bell bases.
Consequently, one might conjecture that bound entanglement is linked to the underlying symmetries of the chosen basis.
Investigating this specific family of states might help to understand the set of bound entangled states in the total state space.

In Sec.~\ref{sec:BE_and_info_processing}, it is elaborated that bound entangled states do not offer an advantage over separable states for most known quantum information processing tasks.
The security in some networks utilizing secure key generation and the activation of bound entanglement are two exceptions.
The latter usually refers to scenarios where limited access to free entanglement with an additional supply of bound entangled states performs better in some quantum information protocols than the free or bound entangled states would allow on their own.
Known examples include activating a quantum state's teleportation capability and its distillability.
Activating the $1$-distillability of any NPT state by PPT bound entangled states has very interesting implications if NPT bound entangled states would exist.
In this case, one could obtain maximally entangled qudits from only bound entangled states and LOCC.
Hence, the existence of NPT bound entanglement implies the non-additivity of the distillable entanglement.

Including the distillable entanglement, some of the most relevant entanglement measures that attempt to quantify entanglement as a quantum resource are introduced in Sec.~\ref{sec:BE_and_ent_measures}. As any such measure is generally very hard to evaluate for mixed states, precise results for bound entangled states are rare. Crucially, however, a fundamental irreversibility in (catalytic) LOCC operations is proven to exist via the difference between the distillable entanglement and the entanglement cost for bound entangled states. Bound entangled states require a nonzero amount of maximally entangled states for their formation, which, in turn, cannot be extracted by entanglement distillation. 
This irreversibility relates to, or may be, the main reason for the difficulties of defining a general entanglement measure quantifying entanglement as a resource. No such measure is known, and the use of a specific measure is usually motivated by a corresponding operational task. Finding and relating entanglement measures to operational tasks contributes to the identification of the most relevant properties they should satisfy. Investigating properties like (sub-/super-) additivity, asymptotic continuity, or lockability of proposed measures contributes to such operational interpretations. As bound entanglement is, by definition, related to entanglement distillation, the distillable entanglement is certainly a measure of interest. Evaluating and comparing different measures for certain NPT and PPT entangled states, such as Werner or Bell-diagonal states, may provide information about the entanglement structure and properties of bound entangled states. As these computations are generally not feasible, finding stricter bounds for entanglement measures, like the Rains bound ~\cite{rains_semidefinite_2001}, is of crucial importance. While any other entanglement measure provides an upper bound for the distillable entanglement, lower bounds can, for instance, be found by analyzing Bell-diagonal states with respect to measures presented in this review or others like squashed entanglement~\cite{christandl_squashed_2004}. 
Even though irreversibility is present in the LOCC paradigm, other paradigms beyond LOCC have recently been proposed in which reversibility may be achieved \cite{regula_reversibility_2024, lami_distillable_2023}. In contrast, however, the paradigms of non-entangling operations and PPT operations have been shown to exhibit the irreversibility related to bound entanglement \cite{wang_irreversibility_2017, lami_no_2023}. Notably, irreversibility under non-entangling operations distinguishes entanglement from other quantum resources (cf. Ref.~\cite{lami_no_2023}). Since irreversibility lies at the heart of bound entanglement, investigations that explore this phenomenon under more general non-global paradigms than LOCC may contribute to our understanding of this type of entanglement. 

One further question with direct physical motivation concerns bound entanglement in relativistic settings.
This is particularly intriguing in the context of the non-local features of bound entangled states (cf. Sec.~\ref{sec:BIandBE}) and the local nature of special relativity.
Examples show that Lorentz boosts can change the entanglement class of a state from free entangled to bound entangled or separable.
However, a universal understanding is still missing.
Such a comprehension could reveal crucial insights into the interplay of quantum physics and relativity and open new possibilities for constructing bound entangled states analytically.

Regarding general bound entanglement construction schemes described in Sec.~\ref{sec:construction}, much research still needs to be done.
Nowadays, numerically generating a plethora of bound entangled states with certain properties, e.g., inside the Magic Simplex, is possible.
On the other hand, the main analytic method with some generality is via unextendible product bases (UPBs).
As any construction necessarily exploits some property of bound entangled states, it will be crucial to investigate further constructions to gain a deeper understanding of the underlying physics and mathematics.
So far, all schemes we know can only construct PPT bound entangled states.

This brings us to the central question of the existence of NPT bound entangled states, i.e., potential undistillable entangled states that violate the PPT criterion (Thm.~\ref{thm:PPT-crit}).
See Fig.~\ref{fig:FE_BE_Set_relations} for a visualization of the problem.
In this regard, the main open challenge is proving or disproving whether the PPT criterion is also necessary for the undistillability of bipartite quantum systems with general Hilbert space dimensions $d_A \times d_B$ (cf. Thm.~\ref{thm:dist_implies_NPT}).
A consequence of Thm.~\ref{thm:qubit_distillation} and Thm.~\ref{thm:PPT-crit_qubit_qubit} is that this is true for qubit-qubit ($2\times 2$) systems.
In Ref.~\cite{divincenzo_evidence_2000}, it is furthermore shown that it extends to systems of Hilbert space dimension $2\times d$ with $d>2$.
The difficulty of investigating undistillability lies in the definition of distillability itself (cf. Sec.~\ref{sec:ent_dist}).
A state is said to be distillable if, for some $n>0$, there is a LOCC protocol transforming $n$ identical copies of the state to some maximally entangled state.
An equivalent characterization of distillability uses any entangled qubit state as the target state of distillation (cf. Def.~\ref{def:n-distillability} and Thm.~\ref{thm:qubit_distillation}).
However, for every $d>2$ and $n>0$, it is shown that there are states with Hilbert space dimensions $d^2\times d^2$ that are not $n$-distillable, but are $(n+1)$-distillable (cf. Thm.~\ref{thm:watrous_n_distillability}).
Thus, it is not enough to check that a state is $n$-undistillable for some finite $n>0$ to establish that it is generally undistillable.
Besides this complication, there is one major simplification to the problem of NPT bound entanglement.
Ref.~\cite{horodecki_reduction_1999} shows that to solve the general problem, it is sufficient to prove whether NPT bound entangled states exist among the one-parameter family of Werner states \cite{werner_quantum_1989}.
The reason for this is that every NPT state can be transformed by LOCC to an NPT Werner state.
However, proving the undistillability of these highly symmetric Werner states has not been successful yet.
There have been numerous attempts to characterize them.
For every $n>0$, Werner states are shown to be $n$-undistillable for a certain parameter range \cite{divincenzo_evidence_2000, dur_distillability_2000}.
Whether this parameter range vanishes as $n\rightarrow\infty$ is not known.
Furthermore, it is conjectured that every 1-undistillable Werner state is also 2-undistillable \cite{dokovic_two-distillable_2016}, and numerical analyses show that such states are probably also 3-undistillable \cite{dur_distillability_2000}.
Further approaches and references can be found in Ref.~\cite{pankowski_few_2010, clarisse_distillability_2006}.
We mention in passing that, due to the Choi-Jamiołkowski isomorphism, the problem can also be restated in terms of positive maps \cite{divincenzo_evidence_2000, muller-hermes_positivity_2016}.
One major consequence of the existence of NPT bound entangled states is that it would imply the non-additivity and non-convexity of the distillable entanglement \cite{shor_nonadditivity_2001}.
This is exemplified by the activation of 1-distillability discussed in Sec.~\ref{sec:Activation_of_BE} (cf. also Ref.~\cite{vollbrecht_activating_2002}).
In particular, suppose $\rho_{PPT}$ is a PPT bound entangled universal activator state and $\rho_{NPT}$ is an NPT bound entangled state.
Using Def.~\ref{def:distillable_entanglement}, one would obtain $E_D(\rho_{NPT}\otimes \rho_{PPT}) > 0 = E_D(\rho_{NPT}) + E_D(\rho_{PPT})$.

In summary, bound entanglement is a unique form of quantum
entanglement that cannot be used for distillation. It
remains a complex and not fully grasped phenomenon, indicating
the limitations of our current understanding of these quantum
correlations.
The complexities associated with the detection, characterization, and
quantification of bound entanglement underscore the incompleteness of our
theoretical framework. The inability to fully quantify it
as a resource in quantum information processing tasks further
highlights the gaps in our knowledge, leaving the potential role of
bound entanglement in practical applications of quantum information
processing largely unexplored. 
Recent advancements in the detection and
characterization of bound entanglement in specific systems offer more
than just a glimmer of hope. Future research contributing to these
developments could potentially lead us towards 
a comprehensive understanding of entanglement and its effective
utilization in quantum technologies.
The journey towards unraveling the mysteries of bound entanglement
continues, paving the way for exciting discoveries in the realm of
quantum physics and quantum information theory.

\setlength{\arrayrulewidth}{1pt}
\renewcommand{\arraystretch}{2.0}
\begin{table} 
\begin{center}
\begin{tabular}{|p{11.5cm}|c|l|}
\hhline{|t===t|}
\multicolumn{3}{||c||}{\cellcolor{gray!30} \Huge{\textbf{Bound Entangled (BE) States}}}\\
\hhline{||=|=|=||}
\large{\textbf{Question}}& \large{\textbf{Answer}}& \large{\textbf{Section}}\\
\hhline{||=|=|=||}
Can BE states be distilled?& No& \ref{sec:basics}, \ref{sec:free_vs_bnd}\\
\hhline{||-|-|-||}
Does $n$-copy-undistillability for a state imply its undistillability? & No& \ref{sec:ent_dist}\\
\hhline{||-|-|-||}
Is BE invariant under LOCC?& No&\ref{sec:basics},  \ref{sec:free_vs_bnd}\\ 
\hhline{||-|-|-||}
Is there a single entanglement criterion that is strictly stronger than all others?& No& \ref{sec:detection of BE}\\
\hhline{||-|-|-||}
Can all entanglement witnesses detect BE states?& No& \ref{sec:decomposible}\\
\hhline{||-|-|-||}
Is a complete set of MUBs needed for the detection of BE states?& No& \ref{sec:MUBwitness}\\
\hhline{||-|-|-||}
Are BE states of measure zero in finite-dimensional Hilbert spaces?& No& \ref{sec:frequency_bound}\\
\hhline{||-|-|-||}
Are BE states useful as resources in quantum teleportation?& No& \ref{sec:BE_teleportation}\\
\hhline{||-|-|-||}
Are BE states useful in device-dependent schemes as an additional resource?& Yes& \ref{sec:Activation_of_BE}, \ref{sec:pureconversionBE} \\
\hhline{||-|-|-||}
Do all bound entangled states have zero entanglement of formation? & No& \ref{sec:BE_and_ent_measures}\\
\hhline{||-|-|-||}
Can BE be useful for quantum metrology?& Yes& \ref{sec:metrology}\\
\hhline{||-|-|-||}
Can BE states violate a Bell inequality?& Yes& \ref{sec:BIandBE}\\
\hhline{||-|-|-||}
Are BE states Lorentz invariant?& No& \ref{sec:lorentz_invariance_BE}\\
\hhline{||-|-|-||}
Have BE states been verified experimentally?& Yes& \ref{BEexperiment}\\
\hhline{||-|-|-||}
Is there an analytic construction for all BE states?& No& \ref{sec:construction}\\
\hhline{|b:===:b|}
\end{tabular}
\end{center}
\caption{This table summarizes some basic questions discussed in detail in this review. We have only selected dichotomous answers that always refer to all aspects mentioned in the respective sections.}
\label{tab:Summary}
\end{table}

\section*{Acknowledgments}
\addcontentsline{toc}{section}{Acknowledgments}
B.C.H. and C.P. acknowledge gratefully that this research was funded in whole,
or in part, by the Austrian Science Fund (FWF) project P36102-N (Grant DOI:  10.55776/P36102). For the purpose of open access, the author has applied a CC BY public copyright license to any Author Accepted Manuscript version arising from this submission. 
The funder played no role in study design, data collection, analysis and interpretation of data, or the writing of this manuscript.

\section*{Author Contributions Statement}
All authors contributed equally to this work.

\section*{Competing Interests}
 All authors declare no financial or non-financial competing interests.

\pagebreak

\appendix
\renewcommand{\thesubsection}{\Alph{subsection}}

\section*{Appendix}
\addcontentsline{toc}{section}{Appendix}

\subsection{Minimal Fidelity of Teleportation} \label{app:min_fid_of_transmission}

Here, we briefly recap how the measure-and-prepare teleportation protocol (cf. Ref.~\cite{linden_bound_1999}) works and calculate its fidelity of transmission.
The general setup is identical to the one described at the beginning of Sec.~\ref{sec:BE_teleportation}.
In the first step of the protocol, Alice measures her unknown quantum state $|\phi\rangle$ in some orthonormal basis $\{|v_i\rangle \}$.
The probability of Alice obtaining outcome $i$ is given by $p_i = |\langle v_i | \phi \rangle|^2$.
Next, she classically transmits the information about the measurement basis and the outcome $i$ to Bob, who prepares the respective quantum state $|v_i\rangle$.
This will serve as his guess as to what $|\phi\rangle$ was.
Hence, taking into account all possible measurement results $i$, Bob prepares the mixed state $\sigma_\phi = \sum_{i=0}^{d-1} p_i \, |v_i\rangle\langle v_i|= \sum_{i=0}^{d-1}  |\langle v_i | \phi \rangle|^2 \, |v_i\rangle\langle v_i|$.
The corresponding fidelity of transmission \eqref{eq:fid_of_transmission} is thus given by
\begin{align}
    f  = \int \diff\phi\, F(\sigma_\phi, |\phi\rangle ) 
    = \int \diff\phi\, \langle \phi | \sigma_\phi |\phi\rangle 
    = \sum_{i=0}^{d-1}\, \int \diff\phi\, |\langle v_i | \phi \rangle|^4 \;.
\end{align} 
To evaluate this integral, we write $|\phi\rangle = U_\phi |0\rangle$ and $|v_i\rangle = V_i |0\rangle$, where $U_\phi$ and $V_i$ are unitary.
This allows us to compute
\begin{align} \label{eq:appendix_min_fid_of_transmission}
    f = \sum_{i=0}^{d-1}\, \int \diff\phi\, |\langle v_i | \phi \rangle|^4 
     = \sum_{i=0}^{d-1}\, \int \diff\phi\, |\langle 0 | V_i^\dagger U_\phi |0 \rangle|^4 
     = \sum_{i=0}^{d-1}\, \int \diff\phi\, |\langle 0 | \Tilde{U}^\phi_i |0 \rangle|^4 
     = \sum_{i=0}^{d-1}\, \frac{2}{d(d+1)} 
     = \frac{2}{(d+1)} \,,
\end{align}
where we defined $\Tilde{U}^\phi_i:= V_i^\dagger U_\phi$ in the third equality.
The fourth equality follows directly from the explicit calculation in Ref.~\cite{spengler_composite_2012} using the composite parametrization of unitary matrices.

As the measure-and-prepare teleportation protocol is applicable independent of the resource state shared by Alice and Bob, the minimal fidelity of transmission attainable in any scenario is lower bound by \eqref{eq:appendix_min_fid_of_transmission}.

\subsection{Protocol for Activation of Bound Entanglement} \label{app:protocol_activation_BE}

Here, we reproduce the example of a free entangled qutrit-qutrit state from Ref.~\cite{horodecki_general_1999}, which is subsequently used in the activation protocol from Ref.~\cite{horodecki_bound_1999}.

Consider the free entangled qutrit-qutrit state
\begin{align} \label{eq:activation_FE_state}
    \rho^\mathrm{FE} = \mathcal{F}_0\, P_{0,0} + \frac{1-\mathcal{F}_0}{3} \, \sum_{i=0}^2 P_{i,1} \;,
\end{align}
with $0<\mathcal{F}_0<1$ being its singlet fraction \eqref{eq:singlet_fraction}, and $P_{k,l}$ are the Bell state projectors~\eqref{eq:bell_states}.
Note that this constitutes a mixture of the maximally entangled qutrit state~\eqref{eq:max_ent_state_omega_00} and a separable subgroup (line) state~\eqref{eq:subgroup_state}.
Therefore, it is a Magic Simplex state (cf. Def.~\ref{def:magic_simplex}).
That this state is free entangled can be shown by considering the local projection of $\rho^\mathrm{FE}$ onto the subspace of two degrees of freedom $(|0\rangle\langle 0| + |1\rangle\langle 1|)\otimes(|0\rangle\langle 0| + |1\rangle\langle 1|)$, resulting in the effective qubit state
\begin{align}
    \rho^\mathrm{FE}_{2\times 2} = \frac{\mathcal{F}_0}{1+\mathcal{F}_0}\left(|00\rangle\langle 00| + |00\rangle\langle 11| + |11\rangle\langle 00| + |11\rangle\langle 11| \right) +\frac{1-\mathcal{F}_0}{1+\mathcal{F}_0}\; |01\rangle\langle 01| \;.
\end{align}
This state is NPT for $0<\mathcal{F}_0<1$, and thus according to Thm.~\ref{thm:qubit_distillation} and Thm.~\ref{thm:PPT-crit_qubit_qubit} it is distillable.
Hence, also $\rho^\mathrm{FE}$ is distillable by the argument presented in Sec.~\ref{sec:ent_dist}.
However, as proven in Ref.~\cite{horodecki_general_1999}, the singlet fraction obtainable with nonzero probability by LOCC and a \textit{single} copy of \eqref{eq:activation_FE_state} is upper bounded by some constant $\mathcal{F}_{max}<1$.
This even holds when LOCC protocols with a success probability tending to zero are considered.
By Thm.~\ref{thm:teleportation_and_singlet_fraction}, faithful teleportation is thus impossible with only a single copy of $\rho^\mathrm{FE}$ because the maximally achievable fidelity of transmission is upper bounded by $f_{max} = (3 \,\mathcal{F}_{max}+1)/4 <1$.

For the specific activation protocol from Ref.~\cite{horodecki_bound_1999}, the authors defined the activation state
\begin{align} \label{eq:activation_BE_state}
    \rho^\mathrm{BE} = \frac{2}{7}\, P_{0,0} + \frac{\alpha}{21}\, \sum_{i=0}^2 P_{i,1} + \frac{5-\alpha}{21}\, \sum_{i=0}^2 P_{i,2} \;.
\end{align}
Again, this is a mixture of the maximally entangled qutrit state~\eqref{eq:max_ent_state_omega_00} and two distinct subgroup (line) states~\eqref{eq:subgroup_state}.
Thus, also \eqref{eq:activation_BE_state} constitutes a Magic Simplex state.
It is free entangled for $0\leq \alpha <1$ and $4<\alpha \leq 5$, bound entangled for $1\leq \alpha<2$ and $3<\alpha\leq 4$, and separable for $2\leq \alpha \leq 3$.
Separability follows from an explicit decomposition of \eqref{eq:activation_BE_state} into a mixture of separable states, the free entanglement can be shown as for \eqref{eq:activation_FE_state} above, and the bound entangled regions are detected by the realignment criterion (cf. Sec.~\ref{sec:realignment}).

Suppose now that Alice and Bob share one free entangled state \eqref{eq:activation_FE_state} and additionally multiple copies of the bound entangled state \eqref{eq:activation_BE_state} with $3<\alpha\leq4$.
They are furthermore both able to apply a generalized $d$-dimensional XOR-gate locally.
This quantum gate is represented by the unitary matrix
\begin{align} \label{eq:activation_XOR}
    U_{XOR} = \sum_{i,j=0}^{d-1} |i\rangle\langle i| \otimes |j + i\rangle\langle j| \;.
\end{align}
We call the unaltered first subsystem the control qudit and the altered second subsystem the target qudit.
Alice and Bob can increase the singlet fraction of the free entangled state $\rho^\mathrm{FE}$ by LOCC as follows:
\begin{enumerate}
    \item
    They each apply the XOR-gate to their parts of the free entangled state $\rho^\mathrm{FE}$ and the bound entangled state $\rho^\mathrm{BE}$. The free (bound) entangled qudit is the control (target).

    \item
    Alice and Bob each measure their respective target qudits in the computational basis and compare their results.
    If they differ, the procedure fails, and the free entangled resource is lost.
    However, if the results coincide, the protocol succeeds, and only the target qudits are discarded.
    The free entangled state now has a higher singlet fraction $\mathcal{F}_1 > \mathcal{F}_0$.
\end{enumerate}
It is straightforward to compute the success probability of step two to be
\begin{align} \label{eq:activation_probability}
    P_{\mathcal{F}_0\rightarrow\mathcal{F}_1} = \frac{2\mathcal{F}_0+(1-\mathcal{F}_0)(5-\alpha)}{7} \;.
\end{align}
After discarding the target qudits, the control pair is still of the form \eqref{eq:activation_FE_state}, but with an updated singlet fraction given by
\begin{align}
    \mathcal{F}_1 = \frac{2\mathcal{F}_0}{2\mathcal{F}_0+(1-\mathcal{F}_0)(5-\alpha)} \;.
\end{align}
A simple check confirms that for $3 < \alpha \leq 5$ one has $\mathcal{F}_1 > \mathcal{F}_0$ for any initial $0<\mathcal{F}_0<1$.
This includes the bound entangled range $3<\alpha\leq4$.

The above procedure can be repeated as often as Alice's and Bob's supply of bound entangled states allows.
The probability of $n$ consecutive successes $P_n$ is given by the product of probabilities \eqref{eq:activation_probability} with updated $\mathcal{F}_n$ for each round.
This probability may be low, but it is nonzero.
For each repetition, a bound entangled state is consumed in exchange for an increase in the singlet fraction of the free entangled state.
This can be interpreted as transferring the entanglement from a bound entangled state, which cannot be accessed directly, to a free entangled state, where it can be used.
By this procedure, the singlet fraction $\mathcal{F}_n$ can be boosted arbitrarily close to 1 for large $n$, provided Alice and Bob share enough bound entangled states.
Consequently, $\mathcal{F}_n > \mathcal{F}_{max}$ for some $n$, thus exceeding the maximal singlet fraction attainable by \eqref{eq:activation_FE_state} alone.
Interpreted as a distillation scheme, this suggests that the distillable entanglement (cf. Def.~\ref{def:distillable_entanglement}) may not be additive \cite{bennett_mixed_1996}.

By virtue of Thm.~\ref{thm:teleportation_and_singlet_fraction} this can be viewed as follows:
Generally, bound entangled states are not useful for teleportation as they do not outperform any classical teleportation protocol.
This also holds true for the case of \eqref{eq:activation_BE_state} with $3<\alpha\leq4$.
Furthermore, the free entangled state \eqref{eq:activation_FE_state} cannot achieve a fidelity of transmission \eqref{eq:fid_of_transmission} arbitrarily close to 1.
However, if these two resources are combined, one can ``activate'' the bound entanglement contained in \eqref{eq:activation_BE_state} by increasing the maximal fidelity of transmission of \eqref{eq:activation_FE_state}.
In this way, bound entangled states can improve the performance in specific teleportation scenarios.

Lastly, for completeness, we add that the bound entanglement of \eqref{eq:activation_BE_state} in the range $1\leq \alpha<2$ can also be activated, albeit with slight modifications.
This requires the free entangled resource state
\begin{align}
    \Tilde{\rho}^\mathrm{FE} = \mathcal{F}_0\, P_{0,0} + \frac{1-\mathcal{F}_0}{3} \sum_{i=0}^2 P_{i,2} \;.
\end{align}
The probability of equal outcomes in step two is
\begin{align} 
    \Tilde{P}_{\mathcal{F}_0\rightarrow \mathcal{F}_1} = \frac{2 \mathcal{F}_0 + (1-\mathcal{F}_0) \alpha}{7} \;,
\end{align}
after which the singlet fraction changes to
\begin{align}
    \Tilde{\mathcal{F}}_1 = \frac{2\mathcal{F}_0}{2\mathcal{F}_0+ (1-\mathcal{F}_0)\alpha} \;.
\end{align}
It holds that $\Tilde{\mathcal{F}}_1>\mathcal{F}_0$ for $0\leq \alpha <2$ and therefore in particular in the bound entangled region $1\leq \alpha<2$.

\clearpage
\phantomsection
\addcontentsline{toc}{section}{References}
\printbibliography


\end{document}